\newtheorem{theorem}{Theorem}
\newtheorem{proposition}{Proposition}
\newtheorem {result} {Result}
\newtheorem{lemma}{Lemma}
\newtheorem{remark}{Remark}
\newtheorem{corollary}{Corollary}[theorem]
\definecolor{colorgreen}{rgb}{0.2,0.5,0.3}
\definecolor{colorblue}{rgb}{0.2,0.1,0.7}
\newcommand{\Define}{\triangleq}
\begin{document}
\title{Achievable Rate Region of the Zero-Forcing Precoder in a  2 $\times$ 2 MU-MISO Broadcast VLC Channel with Per-LED Peak Power Constraint and Dimming Control}
\author{\IEEEauthorblockN{Amit Agarwal
 and Saif Khan Mohammed\thanks{The authors are with the Department of Electrical Engineering, Indian Institute of Technology Delhi (IITD), New delhi, India. Saif
 Khan Mohammed is also associated with Bharti School of Telecommunication Technology and Management (BSTTM), IIT Delhi. Email: saifkmohammed@gmail.com. This work is supported by the Visvesvaraya Young Faculty Research Fellowship (YFRF) of the Ministry of Electronics and Information Technology, Govt. of India.}}}
\maketitle 
\begin{abstract}
In this paper, we consider the 2 $\times$ 2 multi-user multiple-input-single-output (MU-MISO) broadcast visible light communication (VLC) channel with two light emitting diodes (LEDs) at the transmitter and a single photo diode (PD) at each of the two users. We propose an achievable rate region of the Zero-Forcing (ZF) precoder in this 2 $\times$ 2 MU-MISO VLC channel under a per-LED peak and average power constraint, where the average optical power emitted from each LED is fixed for constant lighting, but is controllable (referred to as dimming control in IEEE 802.15.7 standard on VLC). We analytically characterize the proposed rate region boundary and show that it is Pareto-optimal. Further analysis reveals that the largest rate region is achieved when the fixed per-LED average optical power is half of the allowed per-LED peak optical power.  We also propose a novel transceiver architecture where the channel encoder and dimming control are separated which greatly simplifies the complexity of the transceiver. A case study of an indoor VLC channel with the proposed transceiver reveals that the achievable information rates are sensitive to the placement of the LEDs and the PDs. An interesting observation is that for a given placement of LEDs in a 5 m $\times$ 5 m $\times$ 3 m room, even with a substantial displacement of the users from their optimum placement, reduction in the achievable rates is not significant. This observation could therefore be used to define ``coverage zones'' within a room where the reduction in the information rates to the two users is within an acceptable tolerance limit.

\end{abstract}
\begin{IEEEkeywords}
Visible light communication, rate region, zero-forcing, multi-user, multiple-input-multiple-output.\vspace{-0.2 cm}
\end{IEEEkeywords}

\section{Introduction}

Visible light communication (VLC) is a form of optical wireless communication (OWC) technology which can provide high speed indoor wireless data transmission using existing infrastructure for lighting. One distinctive advantage of VLC technology is that it utilizes the unused visible band of the electromagnetic spectrum and does not interfere with the existing radio frequency (RF) communication in the UHF (Ultra High Frequency) band \cite{Haas_ieee_comm_magzine},\cite{hass_book}. \par 

In VLC systems, it is common to use intensity modulation (IM) via light emitting diode (LEDs) for transmission of information signal and direct detection (DD) via photodiodes (PDs) for the recovery of the information signal \cite{Haas_ieee_comm_magzine}. Contrary to RF systems, in VLC systems the modulation symbols must be non-negative and real valued as information is communicated by modulating the power/intensity of the light emitted by the optical source (LED). The modulation symbols are also constrained to be less than a pre-determined value as the intensity of the light emitted by the LED is peak constrained due to safety regulations and also due to the limited linear range of the transfer function of LEDs \cite{Haas_ieee_comm_magzine},\cite{barry_infrared}. Moreover, due to constant lighting  the mean value of the modulation symbol is also fixed (i.e., non-time varying) and can be adjusted according to the users' requirement (dimming target) \cite{wang_dimming_analysis_PAM}, \cite{wang_tight_bound_dimmable}. \par 

Due to these constraints, analysis performed for RF systems is not directly applicable to VLC systems. For example, the capacity of the RF single-input-single-output~(SISO) additive white Gaussian noise (AWGN) channel is well known and it has been shown that the \emph{Gaussian} input distribution is capacity achieving. For the case of the optical wireless AWGN SISO channel with IM/DD  transceiver, closed form expression for the capacity is still not known, though several inner and outer bounds have been proposed \cite{lapidothpaper,fano,thangaraj}. However, it has been shown that the capacity achieving input distribution for the IM/DD SISO AWGN optical wireless  channel is discrete \cite{smith}, and has been computed numerically in \cite{icc_2009}. Similarly, for the case of dimmable VLC IM/DD SISO channel with peak constraint, there is no closed from expression for the capacity. However following a similar approach as in  \cite{lapidothpaper}, an upper and lower bound is presented in \cite{wang_capacity_lower_upper_like_lapidoth}. \par
Recently, there has been a lot of interest in multi-user multiple-input multiple-output/single-output (MU-MIMO/MISO) VLC systems, where multiple LEDs are used for information transmission to multiple non-cooperative PDs (users) \cite{fathhass_mimo_comparison},\cite{Haas_ieee_comm_magzine}. Such systems have been shown to enhance the system  sum rate when compared to SISO VLC systems \cite{sumrate_mumimo},\cite{mu-misosumrate_max} .\par 
In \cite{sumrate_mumimo}, the information sum rate of MU-MIMO VLC broadcast systems has been studied under the non-negativity constraint on the signal transmitted from each LED, and also a per-LED average transmitted power constraint with no dimming control. The block diagonalization precoder in \cite{sumrate_mumimo} is used to suppress the multi-user interference and the numerically computed achievable sum rate is shown to be sensitive to the placement of the users and the rotation of the PDs. However, they do not consider peak power constraints which is important due to eye safety regulations and also due to the requirement of limited interference to other VLC systems. \par 
Per-LED peak and average power constraint has been considered in \cite{mu-misosumrate_max}, where the sum-rate of the zero forcing (ZF) precoder is maximized in a IM/DD based MU-MIMO/MISO VLC systems. However, in many practical scenarios fairness is required and therefore maximizing the sum rate might not always be the desired operating regime. For example we would like to find the maximum possible rate such that each user gets the same rate. Such operating points can only be obtained from the rate region characterization of the MU-MIMO VLC systems.
In \cite{Chabban_2users}, authors have proposed inner and outer bounds on the capacity region of a two user IM/DD broadcast VLC system where the transmitter has a single LED and each user has a single PD. Per-LED average and peak power constraints are considered. The authors have extended their work to more than two users in \cite{Chabban_multiuser}. However, in both \cite{Chabban_2users} and \cite{Chabban_multiuser}, the transmitter has \emph{only one LED}. Furthermore, dimming control is not considered in   \cite{sumrate_mumimo,mu-misosumrate_max,Chabban_2users,Chabban_multiuser}. \par
The capacity/achievable rate region of a IM/DD based VLC broadcast channel where the transmitter has $N > 1$ LEDs and $M > 1$ users having one PD each, is still an open and challenging problem, primarily due to the non-negativity, peak and average constraints on the electrical signal input to each LED.\par 
In this paper, we consider the smallest instance of this open problem along with dimming control, i.e., with $N=2$ LEDs at the transmitter and $M=2$ users (each having one PD). Dimming control is required in indoor VLC systems since the illumination should \emph{not} vary with time on its own and should be controllable by the users. Therefore, in this paper, in addition to the peak and non-negativity constraints, we constrain the average optical power radiated by each LED to be fixed, i.e., non-time varying. Subsequently in this paper we refer to this system as the $2\times2$ MU-MISO VLC broadcast system.\par
The major contributions of this paper are as follows:
\begin{enumerate}
\item In Section III, we propose an achievable rate region for the $2 \times 2$ MU-MISO VLC broadcast system with the ZF precoder. In this section through analysis we show that the per-LED non-negativity and peak constraint restricts the information symbol vector for the two users (i.e., $(u_1, u_2))$ to lie within a parallelogram $R_{//}$. Each achievable rate pair $(R_1, R_2)$ then corresponds to a rectangle which lies within $R_{//}$. The rate $R_i, i=1,2$ to the $i^{th}$ user depends on the length of the rectangle along the $u_i$-axis. Due to the same average optical power constraint at each LED, these rectangles should also have their midpoint (i.e., point of intersection of the diagonals of the rectangle) at a fixed point on the diagonal of $R_{//}$ denoted by D.\footnote{Out of the two diagonals of $R_{//}$, we refer to the one which has one end point at the origin $(u_1,u_2)$ = (0,0).} This fixed point D on the diagonal of $R_{//}$ is non-time varying, but can be controlled by the user depending upon the illumination requirement. This feature of the proposed system enables dimming control. 
\item In Section III, We also mathematically define the proposed rate region of the ZF precoder for a fix dimming target.
\item In Section IV, we analytically characterize the boundary of the proposed rate region by deriving explicit expressions for the largest possible length along the $u_2$-axis of some rectangle inside $R_{//}$ whose midpoint coincides with the fixed point D on the diagonal of $R_{//}$ and whose length along the $u_1$-axis is given. \emph{Through analysis we also show that the rate region boundary is Pareto-optimal}. 
\item We also analyze the variation in the rate region with change in the dimming level. In depth analysis reveals that the largest rate region is achieved when the fixed point D lies at the midpoint of the diagonal of $R_{//}$, i.e., when the fixed per-LED average optical transmit power is half of the per-LED peak optical power. 
\item For practical scenarios with fairness constraints, through analysis we show that the largest achievable rate pair $(R_1,R_2)$ such that $R_2 = \alpha R_1$ is given by the unique intersection of the proposed rate region boundary with the straight line $R_2 = \alpha R_1$. 
\item In Section V, from the point of view of practical implementation we also propose a novel transceiver architecture where the same channel encoder can be used irrespective of the level of dimming control.
\item Analytical results have been supported with numerical simulations in Section~\ref{sec:numres}. It is observed that for a fixed placement of the two LEDs, the achievable information rates are a function of the placement of the two PDs/users. Specifically, we observe that for a given placement of the two LEDs, there exists an optimal placement of the two users which maximizes the symmetric rate. Another interesting observation is that in a 5 m $\times$ 5 m $\times$ 3 m (height) room with the two LEDs attached to the ceiling and the two PDs placed in the horizontal plane at a height of 50 cm above the floor, even a user displacement of 60 cm from the optimal placement results in only approx. a 10 percent reduction in the symmetric rate when compared to the symmetric rate with the optimal placement of PDs\footnote{For this study the dimming control is such that the average optical power radiated from each LED is 30 percent of the peak allowed optical power}. This allows for substantial mobility of the user terminals around their optimal placement which is specially desirable when the user terminals are mobile/portable. A practical application of the results derived in this paper could be in defining coverage zones for the PDs/users, i.e.,
the maximum allowable displacement of the users for a fixed desired upper
limit on the percentage loss in the achievable information rates.
\end{enumerate}
\vspace{-.2in}
  \begin{figure}[h]
        \centering
      \includegraphics[width=6cm,height=5cm]{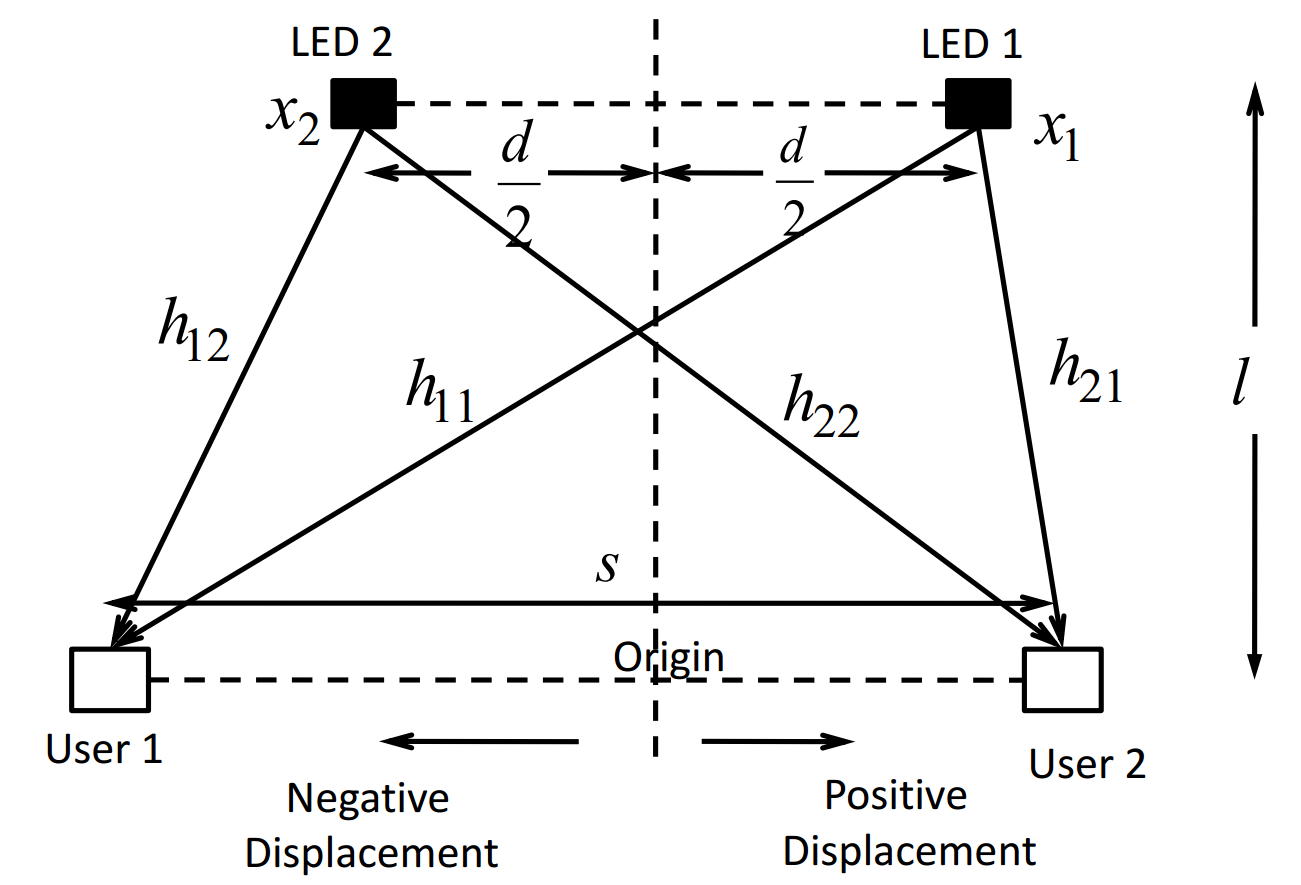}   
      \caption{ $2 \times 2$ IM/DD MU-MISO VLC broadcast system.}
      \label{fig:setup}
      \end{figure}  
      
\section{System Model}
We consider a $2\times 2$ IM/DD MU-MISO VLC broadcast system. The transmitter of the MISO system is equipped with two LEDs and each user has a single photo-diode (PD)~(see Fig.~\ref{fig:setup}).\footnote{ Since each of the two users has a single PD we will be interchangeably using user and PD in subsequent discussions.}
The LED converts the information carrying electrical signal to an intensity modulated optical signal and the PD at each user converts the received optical signal to electrical signal. The transmitter performs beamforming of the information symbols towards the two non-cooperative users. Let $u_1 \in \mathcal{U}_1$ and $u_2 \in \mathcal{U}_2$ be the information symbols  intended for the first and second user respectively, where $\mathcal{U}1$ and $\mathcal{U}_2$ are the information symbol alphabets for user 1 and user~2 respectively. Let $x_i$ be the optical power transmitted from the $i^{\text{th}}$ LED ($i = 1, 2$). At any time instance, the transmitted optical power vector $\bm x \triangleq [x_1~x_2]^T$ is given by \begin{equation} \label{eq:tX_op_power} \bm x = \bm A \bm u, \end{equation} where $\bm u \triangleq [u_1~u_2]^T$ and $\bm A \in \mathbb{R}^{2 \times 2}$ is the beamforming matrix. In this paper, we consider the following power constraints for our dimmable VLC system. \par The instantaneous power transmitted from each LED is non negative and is less than some maximum limit $P_0$ due to skin and eye safety regulations \cite{barry_infrared}. Further, such a maximum limit on the transmitted power is required also due to limited interference requirement to the neighboring VLC systems, i.e. \begin{equation}\label{eq:peakpowerconst}0\leq x_i \leq P_0,~i = 1,2. \end{equation} \par Since our VLC system is dimmable we further impose a per-LED average power constraint of the type  \begin{equation} \label{eq:dimming_const} E[x_i] = \xi P_0,~i = 1,2, \end{equation} where $0 \leq \xi \leq 1$ is the dimming target \cite{wang_tight_bound_dimmable}. For the sake of analysis, we define $x_i^{\prime} \triangleq \frac{x_i}{P_0},~i=1,2$ as the normalized power transmitted from each LED. Consequently, the normalized optical power transmitted from each LED must satisfy the following constraints given by 
\begin{align}\label{eq:constraints_normaized}
0\leq x_i^{\prime} \leq 1~~ \& ~~~
 E[x_i^{\prime}] = \xi,~i = 1,2.
\end{align}
Assuming $y_k$, $k = 1, 2$ to be the normalized received electrical signal at the $k^{\text{th}}$ user (after scaling down by $P_0$), the normalized received signal vector is given by\footnote{In subsequent discussions, by ``received electrical signal'', we refer to the ``normalized received electrical signal''.}
\small\begin{align}
\underbrace{\begin{bmatrix}
y_1\\
y_2
\end{bmatrix}}_{\triangleq \bm y} = \underbrace{\begin{bmatrix}
    h_{11} & h_{12}\\
    h_{21} & h_{22}
    \end{bmatrix}}_{\triangleq \bm H} \underbrace{\begin{bmatrix}
    x_1^\prime\\
    x_2^\prime
    \end{bmatrix}}_{\triangleq \bm x^\prime} + \underbrace{\begin{bmatrix}
           n_1\\
           n_2
              \end{bmatrix}}_{\triangleq \bm n}, \nonumber \end{align}
              \begin{equation}\label{eq:rx_signal}
              \text{s.t.}~~0 \leq  x_i^\prime \leq 1, ~~~~  E[x_i^{\prime}] = \xi, ~~~  i=1,2.
\end{equation}\normalsize
\noindent where $\bm H \triangleq [h_{ki}]_{2 \times 2}$ is the channel gain matrix. The channel gain coefficients between the $i^{th}$ LED and the $k^{th}$ user is denoted by $h_{ki}, i=1,2, k=1,2.$\footnote{\footnotesize{Note that $h_{ki}$'s are non negative and model the overall gains of the line of sight (LOS) optical path  between the $i^{th}$ LED and the $k^{th}$ user and also the responsivity of the PD of the $k^{th}$ user \cite{barry_infrared}.}} We further define $\bm h_1\Define [h_{11}~h_{21}]^T$ and $\bm h_2 \Define [h_{12}~h_{22}]^T$ to be the channel vectors from LED 1 and LED 2 respectively. Further, $n_1$ and $n_2 $ are the sum of the thermal noise and ambient light-induced shot noise at the respective users\footnote{Note that the above noise impairments of the received signal are the main impairments that are commonly assumed in VLC systems \cite{barry_infrared}.} and are independent  of $x_1'$ and $x_2'$ \cite{fathhass_mimo_comparison}. The noise signals are i.i.d. zero mean real AWGN with variance
$\sigma^2/P_0^2$, where $\sigma^2$ is the variance of the noise before the scaling down of the received signal by $P_0$, i.e., $n \sim \mathcal{N}(0, (\sigma/P_0)^2).$ 
 \vspace{-.2in}
   \begin{figure}[t]
          \centering
          \includegraphics[width=6cm,height=5cm]{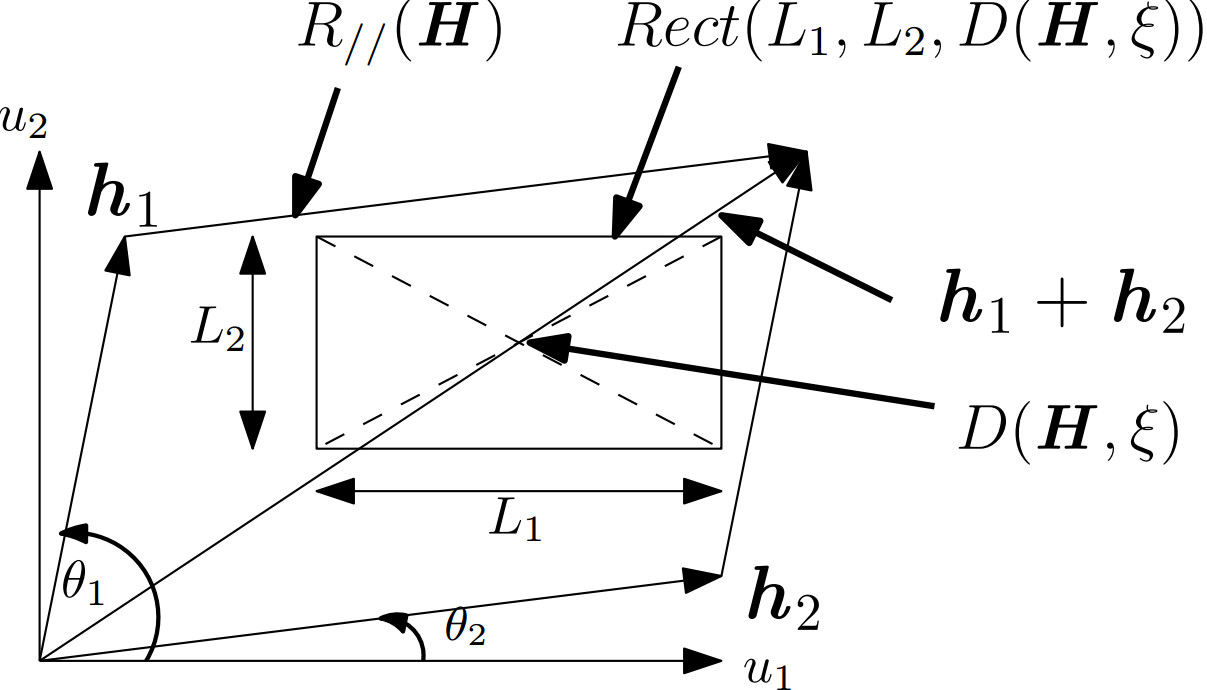}
          \caption{The information vector $\bm u$ is constrained to lie within the parallelogram $R_{//}(\bm H)$ whose non-parallel sides are $\bm h_1$ and $\bm h_2$. The rectangular region $\mathcal{U}_1\times\mathcal{U}_2$ whose length along the $u_1$ axis is  $L_1$ and that along the $u_2$ axis is $L_2$ and whose center lies at $D(\bm H, \xi)$ is denoted by $Rect(L_1,L_2,D(\bm H, \xi))$.}
          \label{fig:rect_def}
           \end{figure}      
 \section{\hspace{-1mm}An achievable rate region of the channel in (\ref{eq:rx_signal})}
 In this section, we derive an achievable rate region for the channel in (\ref{eq:rx_signal}) using the ZF precoder.
 For the $2 \times 2$ MU-MISO system discussed in section II, the ZF precoding matrix is uniquely given by $\bm A = P_0 \bm H^{-1}$, i.e., $\bm x^\prime= {x}/{P_0} = {\bm A u}/{P_0} = {P_0\bm H^{-1}u}/{P_0} = \bm  H^{-1}\bm u$. Thus the received signal vector is given by
 \begin{equation}\label{eq:SISO}
 \bm y = \bm H \bm x^\prime + \bm n = \bm H \bm H^{-1} \bm u + \bm n = \bm u + \bm n.
 \end{equation} 
 i.e., there is no multi-user interference (MUI). Since \begin{equation}
 \label{eq:uhx}
 \bm{u} = \bm H \bm{x^{\prime}} = [\bm h_1~\bm h_2][x_1^{\prime}~x_2^{\prime}]^T, \end{equation} and $0 \leq x_i^\prime \leq 1 , i=1,2$ (see (\ref{eq:constraints_normaized})) it follows that, the information signal vector $\bm u$ must be limited to the region
 \begin{equation}
 \label{eq:parallelogram}
 R_{//}(\bm H)\triangleq  \left\{\bm u~\vert~\bm u = \bm H \bm x',~~0 \leq x_1' \leq 1, 0 \leq x_2' \leq 1 \right\} .\end{equation}   
  The region $R_{//}(\bm H)$ is a parallelogram with its two non parallel sides as $\bm h_1$ and $\bm h_2$ (see $R_{//}(\bm H)$ in Fig.~\ref{fig:rect_def}). In addition to this, the diagonal of the parallelogram $R_{//}(\bm H)$ is the vector $\bm h_1+ \bm h_2$ as shown in Fig.~\ref{fig:rect_def}. 
    \par  Let $E[\bm u] \Define [E[u_1]~ E[u_2]]^T$ be the mean information symbol vector. From (\ref{eq:uhx}) and (\ref{eq:constraints_normaized}), the mean information symbol vector is given by
    \begin{equation}\label{eq:mean_info_vec} E[\bm u] = \bm H E[\bm x'] = [\bm h_1 ~\bm h_2] E[x_1'~x_2']^T \overset{(a)}{=} \xi (\bm h_1+ \bm h_2),\end{equation}
     where step (a) follows from (\ref{eq:rx_signal}). Therefore, the mean information symbol pair $(E[u_1],~ E[u_2])$ is a point corresponding to the tip of the vector $E[\bm u] = \xi (\bm h_1 + \bm h_2)$. From (\ref{eq:parallelogram}) it is clear that the vector $(\bm h_1 + \bm h_2)$ is a diagonal of $R_{//}(\bm H)$ (see Fig. \ref{fig:rect_def}). For a given $0 \leq \xi \leq 1$, the tip of the mean information symbol vector $\xi (\bm h_1 + \bm h_2)$ is therefore a \emph{fixed} point on the diagonal $(\bm h_1 + \bm h_2)$. We denote this point by
     \begin{align} \label{eq:tip}
     D(\bm H,\xi)  &= (E[u_1] , E[u_2]) \nonumber \\
                  &= (\xi (h_{11} + h_{12}) \, , \, \xi (h_{21} + h_{22})).
     \end{align}
   With the ZF precoder, the broadcast channel in (\ref{eq:rx_signal}) is reduced to two parallel SISO (single-input single-output) optical channels between the transmitter and the two users (see (\ref{eq:SISO})). Since $u_1 \in \mathcal{U}_1$ and $u_2 \in \mathcal{U}_2$ are independent and originate from different codebooks, it follows that  $(u_1, u_2) \in  \mathcal{U}_1 \times \mathcal{U}_2$. From (\ref{eq:parallelogram}), we know that $(u_1, u_2)$ must belong to the parallelogram $R_{//}(\bm H)$ and therefore
     \begin{equation}
     \label{eq:EQ2}   
      \mathcal{U}_1 \times \mathcal{U}_2 \subset R_{//}({\bm H}). \end{equation}
      
      In general we choose $\mathcal{U}_1$ and $\mathcal{U}_2$ to be intervals of the type $[a, b]$ \cite{smith}. Let the length of the intervals $\mathcal{U}_1$ and $\mathcal{U}_2$  be $L_1$ and $L_2$ respectively, i.e. $\vert \mathcal{U}_1 \vert = L_1,~ \vert \mathcal{U}_2 \vert = L_2$. With $\mathcal{U}_1$ and $\mathcal{U}_2$ as intervals, it is clear that $\mathcal{U}_1 \times \mathcal{U}_2$ must be a rectangle whose length along the $u_1$ axis is $L_1$ and that along the $u_2$ axis is $L_2$. In this paper we assume $u_1$ and $u_2$, to be \textit{uniformly distributed} in the interval $\mathcal{U}_1$ and $\mathcal{U}_2$ respectively.\footnote{At high SNR ($P_0/\sigma >> 1$), uniformly distributed information symbol is near capacity achieving \cite{icc_2009}.} Therefore, for a given $\mathcal{U}_1$ and $\mathcal{U}_2$, the mean information symbol pair $(E[u_1], E[u_2])$ will lie at the point of intersection of the two diagonal of the rectangle $\mathcal{U}_1 \times \mathcal{U}_2$. We will subsequently call this point of intersection as the ``midpoint'' of the rectangle $\mathcal{U}_1 \times \mathcal{U}_2$ and will denote it by $\mathscr{C}(\mathcal{U}_1,\mathcal{U}_2)$. \par From (\ref{eq:tip}), it follows that the mean information symbol pair must exactly coincide with $D(\bm H,\xi)$, i.e. \begin{equation}\label{eq:cu1u2equal_D} \mathscr{C}(\mathcal{U}_1,\mathcal{U}_2) = D(\bm H,\xi)\end{equation}
            \par
          The ZF precoder transforms the broadcast channel into two parallel SISO channels $y_i = u_i + n_i, u_i \in \mathcal{U}_i, i=1,2.$ Let $R_1$ and $R_2$ denote the information rates achieved on these SISO channels with $u_i$ distributed uniformly in $\mathcal{U}_i$. 
          Any given $\mathcal{U}_1$ and $\mathcal{U}_2$ satisfying the conditions in (\ref{eq:EQ2}) and (\ref{eq:cu1u2equal_D}) would satisfy the optical power constraints in (\ref{eq:constraints_normaized}) and would therefore correspond to an achievable rate pair for the broadcast channel in (\ref{eq:rx_signal}). 
      Since a rectangle in the $u_1-u_2$ plane corresponds to a unique $\mathcal{U}_1$ and $\mathcal{U}_2$ and vice versa, it follows that any rectangle lying inside the parallelogram $R_{//}(\bm H)$ and having its midpoint at $D(\bm H,\xi)$ will correspond to an achievable rate pair. In this paper, for the broadcast channel in (\ref{eq:rx_signal}), we therefore propose an achievable rate region which consists of rate pairs corresponding to such rectangles (one such rectangle is shown in Fig. \ref{fig:rect_def}). We define our proposed rate region more precisely in the following. Towards this end, we first formally define the achievable rate of a SISO AWGN optical channel, where the transmitted information symbol is constrained to lie in an interval.         
       \begin{result}[][From \cite{lapidothpaper}, \cite{icc_2009}]
          The achievable information rate of a SISO channel $y = u + n$ (where $u \sim Unif [a,b]$ and $n \sim \mathcal{N}(0, (\sigma/P_0)^2)$ depends on the interval $[a, b]$ only through its length $L = |b - a|$, and is given by the function
          \begin{equation}
             \label{eq:def_cpapcity}  
          C(L = | b - a|, P_0/\sigma) \Define  I(u; y),  
        \end{equation} 
          here $Unif[a , b]$ denote the uniform distribution in the interval $[a , b]$ and $I(u; y)$ is the mutual information between $u$ and $y$. \end{result}\par
          
          \begin{result}[][From \cite{lapidothpaper}, \cite{icc_2009}] The function $C(L, P_0/\sigma)$ is continuous with respect to $L$ and increases monotonically with increasing $L$ for a fixed $P_0/\sigma$. \end{result}
          
          Let $Rect(L_1, L_2, D(\bm H,\xi)) $ denote the unique rectangle having its midpoint as $D(\bm H,\xi)$ and whose length along the $u_1$ axis is $L_1$ and that along the $u_2$ axis is $L_2$ (see Fig. \ref{fig:rect_def}). Any such rectangle $Rect(L_1, L_2, D(\bm H,\xi)) \subset R_{//}(\bm H)$ will \emph	{correspond} to an achievable rate pair given by
             \begin{align}
             \label{eq:rate_region}
             &(R_{1}, R_{2}) \triangleq  \big(C(L_1, P_0/\sigma), C(L_2, P_0/\sigma)\big)      
             \end{align}      
             For a given $(\bm H, P_0/\sigma,\xi)$ the proposed achievable rate region for the ZF precoder is given by      
               
            \small\begin{align}
                \label{eq:rate_region_equation}
                \hspace{-1mm}R_{\textit{ZF}}(\bm H , P_0/ \sigma, \xi) \triangleq \underset{\overset{}{(L_1,L_2)\in (L_1,L_2) \in S}}{\cup}\hspace{-10mm}\{C(L_1,P_0/\sigma),C(L_2,P_0/\sigma)\},
               \end{align}
                \normalsize
          where {$ S\triangleq\{(L_1 \geq 0,L_2 \geq 0) \vert~ \exists~ Rect(L_1,L_2, D(\bm H, \xi))$ $\subset R_{//}(\bm H)\}$.} 
          \normalsize  
          
 \section{Characterizing the Boundary of the Rate Region $R_{\text{ZF}}(\bm H, P_0/\sigma, \xi)$ }\label{sec:bdzf}
    In this section, we completely characterize the boundary of the rate region, $R_{\textit{ZF}}(\bm H , P_0/ \sigma, \xi)$, for a fixed $(\bm H , P_0/ \sigma, \xi)$. Towards this end, for each information rate $R_{1}$ achievable by the first user, we find the corresponding maximum possible information rate $R_{2}$ achievable by the second user.\textit{ Each pair of $R_{1}$ and its corresponding maximum possible $R_{2}$ is therefore a point on the boundary of the proposed rate region}. By increasing $R_{1}$ from $0$ to its maximum possible value, all such $(R_{1},R_{2})$ pairs characterize the boundary of the rate region.\par
  From (\ref{eq:rate_region_equation}), we know that any achievable rate pair $(R_1, R_2)$ in the proposed rate region $R_{\textit{ZF}}(\bm H, P_0/\sigma, \xi)$ corresponds to some rectangle $Rect(L_1,L_2, D(\bm H,\xi))$. The rate to the $i^{th}$ user, i.e. $R_i=C(L_i, P_0/\sigma), i=1,2$ depends only on the length of this rectangle along the $u_i$-axis. Since the $C(L, P_0/\sigma)$ function is monotonic and continuous in its first argument, each value of $R_i$ corresponds to a unique $L_i$ and vice versa. Therefore, towards characterizing the boundary of $R_{\textit{ZF}}(\bm H, P_0/\sigma, \xi)$, we note that for a given $R_1$,i.e., for a given length $L_1$ along the $u_1$-axis, we would like to find the largest possible $R_2$,i.e., the largest possible $L_2$ such that the rectangle $Rect(L_1, L_2, D(\bm H, \xi))$ lies entirely inside $R_{//}(\bm H)$. Hence, we can characterize the boundary of $R_{\textit{ZF}}(\bm H, P_0/\sigma, \xi)$ simply by varying $L_1=x$ from $0$ to its maximum possible value 
  (denoted by $L_1^{\max}(\xi)$), and for each value of $L_1=x\in[0,L_1^{\max}(\xi)]$ we find the largest possible $L_2 = L^\xi_2(x)$ which gives us a corresponding rate pair $(R_1,R_2) = (C(L_1=x, P_0/\sigma), C(L_2=L^\xi_2(x), P_0/\sigma))$ on the boundary of the rate region $R_{\textit{ZF}}(\bm H, P_0/\sigma, \xi)$.\par  
     For a given ($L_1=x, L_2=L_2^\xi(x)$) the corresponding information rate pair lies on the boundary of the proposed rate region $R_{\textit{ZF}}(\bm H,P_0/\sigma,\xi)$. We denote this information rate pair by $(R_1^{\textit{Bd}}(x, P_0/\sigma,\xi), R_2^{\textit{Bd}}(x, P_0/\sigma,\xi))$. From (\ref{eq:rate_region}), this information rate pair is given by
       \begin{eqnarray}
       \label{eq:rboud1}
       R_1^{\textit{Bd}}(x, P_0/\sigma,\xi)\triangleq C(L=x,P_0/\sigma). 
       \end{eqnarray}
       \begin{eqnarray}
       \label{eq:rboud2}
       R_2^{\textit{Bd}}(x, P_0/\sigma,\xi) \triangleq C(L=L_2^\xi(x),P_0/\sigma).    
       \end{eqnarray}
       \normalsize
       This then completely characterizes the boundary of the rate region  $R_{\textit{ZF}}(\bm H,P_0/\sigma,\xi)$,  which is given by\footnote{
       From (\ref{eq:rboud1}) and (\ref{eq:rboud2}) it is clear that the exact computation of $R_1^{\textit{Bd}}(x, P_0/\sigma,\xi)$ and $R_2^{\textit{Bd}}(x, P_0/\sigma,\xi)$ requires the computation of $L_2^\xi(x)$ for which we derive closed form expressions in the next section. Computation of $R_1^{\textit{Bd}}(x, P_0/\sigma,\xi)$ and $R_2^{\textit{Bd}}(x, P_0/\sigma,\xi)$  also requires us to compute the $C(L,P_0/\sigma)$ function which is done numerically.}   
       \small\begin{align}
       \label{eq:bd_rate_region}
        R_{\textit{ZF}}^{\textit{Bd}}(\bm H,P_0/\sigma,\xi) & \triangleq \hspace{-6mm}\underset {\overset{}{0 \leq x \leq L_1^{\max}(\xi)}}{\cup} \hspace{-3mm}\left(R_1^{\textit{Bd}}(x, P_0/\sigma,\xi),R_2^{\textit{Bd}}(x, P_0/\sigma,\xi)\right)  \nonumber \\
        & = \hspace{-3mm}\underset{\overset{}{0\leq x \leq L_1^{\text{max}}(\xi)}}{\cup} \hspace{-3mm}\big(C(x,P_0/\sigma), C(L_2^\xi(x), P_0/\sigma)\big)
        \end{align}\normalsize
 
  It is noted that the analysis done in this paper is applicable to any placement of the users and the LEDs. 
  Subsequently, we follow the following convention that, by \textit{LED 1 we shall refer to the LED whose channel vector has a higher inclination angle (from the $u_1$ axis) than the inclination angle of the channel vector of the other LED}.\vspace{1mm}\par 
  Let the inclination of the vector $\bm h_1$ and $\bm h_2$ from the $u_1$ axis be $\theta_1$ and $\theta_2$ respectively (see Fig.~\ref{fig:rect_def}). From our definition of LED 1 and LED 2 (see the above paragraph), it follows that $\theta_1 > \theta_2$. Therefore it follows that $\tan\theta_1 > \tan\theta_2$. Since \begin{equation}\label{eq:tantheta1}\tan\theta_1 = h_{21}/h_{11} , ~~~ \tan\theta_2 = h_{22}/h_{12}.\end{equation} Hence, $\tan\theta_1 > \tan\theta_2$ implies that
  \begin{align}\label{eq:dethlesszero}
  h_{21}/h_{11} - h_{22}/h_{12} &> 0 , \nonumber \\
  h_{11}h_{22} - h_{12}h_{21} &<0, ~~\text{i.e.} \nonumber  \\
  det(\bm H) &< 0
  \end{align}    
   In the following proposition, we first compute the maximum value of $L_1$ and subsequently we derive the maximum value of $L_2$ for each value of $L_1$.           
       \begin{proposition} The largest possible value of $L_1$ (i.e., length of the interval $\mathcal{U}_1$) such that  there exists a rectangle $Rect(L_1, L_2, D(\bm H, \xi))~(L_2 \geq 0)$ which lies completely  inside the parallelogram $R_{//}(\bm H)$, is given by \begin{align}
            \label{eq:l1max}
             L_1^{\text{max}}(\xi) &\triangleq \underset{\underset{Rect(L_1,L_2, D(\bm H, \xi))\subset R_{//}(\bm H)}{L_1 \geq 0, L_2 \geq 0 }}{\max}L_1 \nonumber \\ &= \left\{\hspace{-3mm}\begin{array}{ll}  
            \frac {-2 \xi det(\bm H)}{ \max(h_{21}, h_{22})}, & 0 \leq \xi \leq  1/2 \vspace{1mm} \\
                           \frac {-2(1-\xi) det(\bm H)}{ \max(h_{21}, h_{22})},& 1/2 \leq \xi  \leq 1. 
               \end{array} \right.
            \end{align}
         
          \end{proposition}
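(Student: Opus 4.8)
The plan is to reduce the two-dimensional feasibility problem to a one-dimensional one. First I would observe that for a fixed $L_1$, enlarging $L_2$ can only shrink the set of admissible configurations: the rectangle $Rect(L_1,L_2,D(\bm H,\xi))$ grows with $L_2$ (same center, same width), so if it lies inside $R_{//}(\bm H)$ for some $L_2$ it also does for every smaller $L_2$, in particular for $L_2=0$. Hence the set of feasible $L_1$ is exactly $\{L_1 : (L_1,0)\ \text{feasible}\}$, and the supremum over all admissible pairs is attained in the degenerate limit $L_2\to 0$, where the rectangle collapses to the horizontal segment of length $L_1$ centered at $D(\bm H,\xi)$. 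Since $R_{//}(\bm H)$ is convex and $D(\bm H,\xi)$ lies in its interior for $0<\xi<1$, such a segment lies inside $R_{//}(\bm H)$ iff both endpoints do. Writing $d_R$ and $d_L$ for the horizontal distances from $D(\bm H,\xi)$ to the right and left boundary along the line $u_2=\xi(h_{21}+h_{22})$, the longest admissible symmetric segment has $L_1^{\max}(\xi)=2\min(d_L,d_R)$.

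Next I would compute $d_L$ and $d_R$ using the parametrization $\bm u=s\bm h_1+t\bm h_2$, $(s,t)\in[0,1]^2$, of $R_{//}(\bm H)$, under which $D(\bm H,\xi)$ corresponds to $(s,t)=(\xi,\xi)$. The horizontal line through $D(\bm H,\xi)$ is the constraint $s\,h_{21}+t\,h_{22}=\xi(h_{21}+h_{22})$, and along it the abscissa $u_1=s\,h_{11}+t\,h_{12}$ is an affine function of $s$ with slope $\det(\bm H)/h_{22}<0$ (recall $\det(\bm H)<0$ from $\theta_1>\theta_2$, and $h_{22}>0$). Maximizing/minimizing $u_1$ over $(s,t)\in[0,1]^2$ is then a one-variable linear program whose optima lie at the ends of the admissible interval $[s_{\min},s_{\max}]$, with $s_{\min}=\max(0,(\xi(h_{21}+h_{22})-h_{22})/h_{21})$ and $s_{\max}=\min(1,\xi(h_{21}+h_{22})/h_{21})$. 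As $u_1$ is decreasing in $s$, the rightmost point is at $s_{\min}$ and the leftmost at $s_{\max}$, giving $d_R=(\xi-s_{\min})\,|\det(\bm H)|/h_{22}$ and $d_L=(s_{\max}-\xi)\,|\det(\bm H)|/h_{22}$.

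For $0\le\xi\le 1/2$ I would then evaluate these. In the generic sub-range where the chord exits through the two sides meeting at the origin (i.e.\ $s=0$ and $t=0$), the values simplify to $d_R=\xi|\det(\bm H)|/h_{22}$ and $d_L=\xi|\det(\bm H)|/h_{21}$, whence $\min(d_L,d_R)=\xi|\det(\bm H)|/\max(h_{21},h_{22})$; using $|\det(\bm H)|=-\det(\bm H)$ this is exactly the claimed expression. The branch $1/2\le\xi\le 1$ then follows without further computation from the central symmetry $\bm u\mapsto(\bm h_1+\bm h_2)-\bm u$ of $R_{//}(\bm H)$, which maps $D(\bm H,\xi)$ to $D(\bm H,1-\xi)$ and preserves horizontal segments, so $L_1^{\max}(\xi)=L_1^{\max}(1-\xi)$ with $\xi$ replaced by $1-\xi$.

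The main obstacle I anticipate is the bookkeeping of the case analysis: once $\xi$ exceeds $\min(h_{21},h_{22})/(h_{21}+h_{22})$, the chord no longer exits through both origin-sides and one endpoint slides onto the side $t=1$ or $s=1$, changing the formulas for $s_{\min}$ or $s_{\max}$. The key point to verify is that throughout $0\le\xi\le 1/2$ the \emph{binding} (smaller) of $d_L,d_R$ is always the one carrying $\max(h_{21},h_{22})$ in its denominator and equals $\xi|\det(\bm H)|/\max(h_{21},h_{22})$, while the slack distance is never smaller on $[0,1/2]$ (it either has the smaller denominator or carries a factor $1-\xi\ge\xi$). Confirming that all sub-cases collapse to this single expression, thereby justifying the appearance of $\max(h_{21},h_{22})$ rather than a separate $h_{21}$-versus-$h_{22}$ branch, is the crux of the argument.
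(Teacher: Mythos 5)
Your proposal is correct, and while it shares the paper's crucial opening reduction, the computational route is genuinely different. Both arguments begin by noting that the optimum is attained at $L_2=0$, so that $L_1^{\max}(\xi)$ is the length of the longest horizontal segment centered at $D(\bm H,\xi)$ inside $R_{//}(\bm H)$, i.e., twice the smaller of the two horizontal distances from $D(\bm H,\xi)$ to the boundary. From there the paper proceeds pictorially: it distinguishes three scenarios for the shape of the parallelogram ($h_{11}<h_{12}$, $h_{21}>h_{22}$; $h_{21}\leq h_{22}$; $h_{12}\leq h_{11}$, $h_{21}>h_{22}$), partitions $R_{//}(\bm H)$ into three regions according to where $D(\bm H,\xi)$ falls, and computes the two distances in each region by right-triangle trigonometry ($\tan\theta_1=h_{21}/h_{11}$, $\tan\theta_2=h_{22}/h_{12}$), arriving at an intermediate three-piece formula with breakpoints $\eta_1=\min(h_{21},h_{22})/(h_{21}+h_{22})$ and $\eta_2=\max(h_{21},h_{22})/(h_{21}+h_{22})$, which is then collapsed to the stated two-piece form using $\eta_1\leq 1/2\leq\eta_2$. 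You instead work in the coordinates $(s,t)\in[0,1]^2$ of the parametrization $\bm u=s\bm h_1+t\bm h_2$, where the chord through $D(\bm H,\xi)$ becomes a one-variable linear program; your formulas $s_{\min}=\max\left(0,(\xi(h_{21}+h_{22})-h_{22})/h_{21}\right)$ and $s_{\max}=\min\left(1,\xi(h_{21}+h_{22})/h_{21}\right)$ encode all of the paper's region and scenario distinctions at once, so no figure-by-figure treatment is needed. Your handling of $1/2\leq\xi\leq 1$ via the central symmetry $\bm u\mapsto(\bm h_1+\bm h_2)-\bm u$ (i.e., $(s,t)\mapsto(1-s,1-t)$, which fixes $R_{//}(\bm H)$, sends $D(\bm H,\xi)$ to $D(\bm H,1-\xi)$, and preserves horizontal segments) replaces the paper's explicit Region-3 computation; the paper only records this symmetry afterwards, as Remark 1 deduced from the final formula. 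The one piece you defer as bookkeeping, namely that for $\eta_1\leq\xi\leq 1/2$ the binding distance is still $\xi\,|\det(\bm H)|/\max(h_{21},h_{22})$, does check out mechanically from your own formulas: if $h_{21}>h_{22}$ then $s_{\min}$ becomes interior and $d_R=(1-\xi)|\det(\bm H)|/h_{21}\geq d_L=\xi|\det(\bm H)|/h_{21}$, while if $h_{22}>h_{21}$ then $s_{\max}=1$ and $d_L=(1-\xi)|\det(\bm H)|/h_{22}\geq d_R=\xi|\det(\bm H)|/h_{22}$; in both cases $\min(d_L,d_R)=\xi|\det(\bm H)|/\max(h_{21},h_{22})$ because $\xi\leq 1-\xi$. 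Net effect: your parametrization-plus-symmetry argument reaches the same formula with substantially less case analysis, whereas the paper's approach buys a concrete geometric picture (at the price of three scenarios times three regions of figure-driven computation).
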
   
          \begin{proof} See Appendix \ref{ap:prop1}. \end{proof}
           It is clear from (\ref{eq:l1max}) in Proposition~1 that $L_1^{\max} (\xi)$ is a continuous function of $\xi$ and $L_1^{\max} (\xi) = L_1^{\max} (1-\xi)$.
          \begin{remark}\label{rem:l1sym}
         The function $L_1^{\max} (\xi)$ is a continuous function of $\xi$ and is symmetric about $\xi=1/2$, i.e.
          \begin{equation}
          \label{eq:symmetric_L1}
          L_1^{\max} (\xi) = L_1^{\max} (1-\xi),~~0\leq\xi \leq 1
          \end{equation}\end{remark}\par  
          From (\ref{eq:l1max}) it is clear that since $det(\bm H) < 0$ (see (\ref{eq:dethlesszero}))  $L_1^{\max}(x)$ is linearly increasing for $0\leq \xi\leq 1/2$ and is linearly decreasing for $1/2 \leq \xi \leq 1$. Hence $L_1^{\max}(\xi)$ has a unique maximum at $\xi=1/2$.
                \begin{remark}\label{rem:l1max} The function $L_1^{\max} (\xi)$ has its unique maximum at $\xi =1/2$, i.e.         
                \begin{equation}
                      \label{eq:max_L1}
                      \underset{0\leq \xi\leq 1}{\arg\max} ~~L_1^{\max} (\xi) = 1/2
                      \end{equation}      
             \end{remark} \par

          \begin{proposition}
             For a given  $L_1 = x\in[0, L_1^{\max}(\xi)]$, the largest possible $L_2 \geq 0$ such that  there exists a rectangle $Rect(x, L_2, D(\bm H, 
             \xi)) \subset R_{//}(\bm H)$, is given by     
             \begin{align} \label{eq:defl2xi}
             L^\xi_2(x) &\triangleq \max_{\underset{Rect(x,L_2,D(\bm H, \xi))\subset \mathcal{R}_{//}(\bm H)}{L_2 \geq 0}} L_2 \nonumber \\
             &=2\min(L^{up,\xi}_2(x), L^{down,\xi}_2(x)), 
             \end{align}
              where $L^{up,\xi}_2(x)$ is given by \vspace{1mm} \\
             \noindent \textit{Case I}: $0 \leq \xi \leq \frac{h_{11}}{h_{11}+h_{12}}$ 
                   \begin{align}
                   \label{eq:l2up1}
                   L^{up,\xi}_2(x)
                         &=\left\{\hspace{-2mm}
                            \begin{array}{ll}
                            \frac{{-\xi det(\bm H)} - {\frac{x}{2} h_{21}}} {h_{11}}, & 0 \leq  x \leq L_1^{\max}(\xi)  
                            \end{array} \right.
                   \end{align}
                  \noindent\textit{Case II}: $\frac{h_{11}}{h_{11}+h_{12}} \leq \xi \leq 1$ 
             \begin{align}\label{eq:l2up2}
             L^{up,\xi}_2(x)
                   &=\left\{\hspace{-2mm}
                      \begin{array}{ll}
                      \frac{{-(1-\xi)det(\bm H)}- {\frac{x}{2} h_{22}}} {h_{12}},  & 0 \leq x \leq \eta_3(\xi)  \vspace{1mm}\\
                      \frac{{-\xi det(\bm H)} - {\frac{x}{2} h_{21}}} {h_{11}}, & \eta_3(\xi) \leq x \leq L_1^{\max}(\xi)                
                      \end{array} \right.
                       \end{align}
             where $\eta_3(\xi) \Define 2\xi h_{12} - 2(1-\xi)h_{11}$. \vspace{2mm} \\ 
                             $L^{down,\xi}_2(x)$ is given by \vspace{2mm}  \\                   
                       \noindent \textit{Case I}: $ 0 \leq \xi \leq \frac{h_{12}}{h_{11}+h_{12}}$ 
           \begin{align}\label{eq:l2down1}
                      L^{down,\xi}_2(x)
                            &=\left\{\hspace{-2mm}
                               \begin{array}{ll}
                               \frac{{-\xi det(\bm H)} - \frac{x}{2}{ h_{22}}} {h_{12}}, &\hspace{-2.5mm} 0 \leq x \leq \eta_4(\xi)  \vspace{1mm} \\ 
                                        \frac{{-(1-\xi) det(\bm H)} - \frac{x}{2} h_{21}} {h_{11}}, & \hspace{-2.5mm}  \eta_4(\xi) \leq x \leq L_1^{\max}(\xi)              
                                         \end{array} \right.
                                        \end{align} 
                                        where $\eta_4(\xi) \Define 2(1-\xi)h_{12}-2\xi h_{11}.$ \vspace{1mm} \\ 
                \noindent \textit{Case II}: $\frac{h_{12}}{h_{11}+h_{12}} \leq \xi \leq  1$              
                                        \begin{align}\label{eq:ledown2}
                                         L^{down,\xi}_2(x)
                                          &=\left\{\hspace{-2mm}
                                           \begin{array}{ll}                                   
                                       \frac { {-(1-\xi) det(\bm H)} - \frac{x}{2}{ h_{21}}} {h_{11}}, & \hspace{-2.5mm}  0\leq x \leq  L_1^{\max}(\xi)                                         
                                         \end{array} \right.
                                         \end{align}
                            
              \end{proposition}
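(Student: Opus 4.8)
The plan is to convert the geometric containment $Rect(x,L_2,D(\bm H,\xi))\subset R_{//}(\bm H)$ into linear inequalities and then maximize $L_2$ directly. First I would describe $R_{//}(\bm H)$ as an intersection of four half-planes. Inverting $\bm u=\bm H\bm x'$ gives $\bm x'=\bm H^{-1}\bm u$, and since the defining constraints $0\le x_1'\le 1$, $0\le x_2'\le 1$ are affine in $\bm u$, each yields a half-plane. Writing them out, and using $\det(\bm H)<0$ from (\ref{eq:dethlesszero}) to keep track of the inequality directions, produces four constraints: two associated with the sides $x_2'=0$ and $x_1'=1$ (the ``upper'' constraints, involving $h_{21}u_1-h_{11}u_2$ and $h_{22}u_1-h_{12}u_2$) and two with $x_1'=0$ and $x_2'=1$ (the ``lower'' constraints).

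Next, since both $R_{//}(\bm H)$ and the axis-aligned rectangle are convex and $R_{//}(\bm H)$ is an intersection of half-planes, the rectangle lies in $R_{//}(\bm H)$ if and only if each of its four corners satisfies each of the four half-plane inequalities. Because every entry of $\bm H$ is non-negative, the extremal (tightest) corner for each linear form is immediate: writing $D=(D_1,D_2)=D(\bm H,\xi)$, both ``upper'' constraints are tightest at the top-left corner $(D_1-\tfrac{x}{2},\,D_2+\tfrac{L_2}{2})$, and both ``lower'' constraints at the bottom-right corner $(D_1+\tfrac{x}{2},\,D_2-\tfrac{L_2}{2})$; the remaining corners and the slack constraints are then automatically satisfied. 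Thus containment reduces to four scalar inequalities, each upper-bounding $L_2/2$. Solving each for $L_2/2$ and simplifying with the two identities $h_{21}D_1-h_{11}D_2=-\xi\det(\bm H)$ and $h_{22}D_1-h_{12}D_2=\xi\det(\bm H)$, which follow directly from (\ref{eq:tip}), yields exactly the four affine expressions in (\ref{eq:l2up1})--(\ref{eq:ledown2}). Grouping the two upper bounds into $L^{up,\xi}_2(x)$ and the two lower bounds into $L^{down,\xi}_2(x)$, the largest admissible half-height is the minimum of the four, i.e. $\min(L^{up,\xi}_2(x),L^{down,\xi}_2(x))$, and the admissible height $L_2$ is twice this, which accounts for the factor of $2$ in (\ref{eq:defl2xi}); equality is attainable, so this is the exact maximizer.

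It then remains to resolve each $\min$ of two affine functions into its piecewise form. I would compare the two upper bounds: their slopes in $x$ are $-\tfrac{h_{21}}{2h_{11}}=-\tfrac{\tan\theta_1}{2}$ and $-\tfrac{h_{22}}{2h_{12}}=-\tfrac{\tan\theta_2}{2}$, so the LED-labelling convention $\theta_1>\theta_2$ (equivalently $\tan\theta_1>\tan\theta_2$) fixes which line decreases faster and hence which one is the minimizer on each side of their crossing point. Equating the two upper bounds gives the crossover abscissa $x=\eta_3(\xi)$, and evaluating at $x=0$ shows the crossover lies at or below $0$ precisely when $\xi\le\tfrac{h_{11}}{h_{11}+h_{12}}$, which separates Case~I (the $\bm h_1$-side expression active throughout) from Case~II (a switch at $\eta_3(\xi)$). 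The identical argument on the two lower bounds produces $\eta_4(\xi)$ and the threshold $\tfrac{h_{12}}{h_{11}+h_{12}}$.

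I expect the main obstacle to be exactly this final bookkeeping: correctly ordering the two affine bounds over $[0,L_1^{\max}(\xi)]$, checking that the crossover points $\eta_3(\xi),\eta_4(\xi)$ fall inside or outside this interval according to the stated $\xi$-ranges, and confirming that $L^\xi_2(x)$ remains non-negative up to $x=L_1^{\max}(\xi)$ for consistency with Proposition~1. The conceptual steps---the half-plane description, the reduction to corner inequalities, and the identification of the binding corners---are routine once $\bm H^{-1}$ is written down; the real care lies in the sign tracking forced by $\det(\bm H)<0$ and in the $\xi$-dependent case split that selects which side of the parallelogram limits the rectangle.
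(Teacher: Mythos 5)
Your proposal is correct, and it reaches every formula in the proposition, but it does so by a genuinely different route than the paper. The paper's proof is synthetic-geometric: it splits into three scenarios on the channel gains ($h_{11}<h_{12},\,h_{21}>h_{22}$; $h_{21}\le h_{22}$; $h_{12}\le h_{11},\,h_{21}>h_{22}$), partitions $R_{//}(\bm H)$ into regions according to where $D(\bm H,\xi)$ falls, defines the ``upward/downward extension'' rectangles, and computes $L_2^{up,\xi}(x)$ and $L_2^{down,\xi}(x)$ corner by corner using similar triangles and the angles $\theta_1,\theta_2$, with separate figures for each case. You instead write $R_{//}(\bm H)$ as the intersection of the four half-planes $0\le x_i'\le 1$ pulled back through $\bm H^{-1}$, use convexity to reduce containment of $Rect(x,L_2,D)$ to its four corners, observe that non-negativity of the $h_{ki}$ makes the top-left corner binding for the two ``upper'' half-planes and the bottom-right corner binding for the two ``lower'' ones, and then the identities $h_{21}D_1-h_{11}D_2=-\xi\det(\bm H)$ and $h_{22}D_1-h_{12}D_2=\xi\det(\bm H)$ turn the four corner inequalities into exactly the four affine bounds on $L_2/2$; I verified these identities and that equating the paired bounds yields precisely $\eta_3(\xi)$ and $\eta_4(\xi)$, with the thresholds $\tfrac{h_{11}}{h_{11}+h_{12}}$ and $\tfrac{h_{12}}{h_{11}+h_{12}}$ arising as the conditions $\eta_3(\xi)\le 0$ and $\eta_4(\xi)\le 0$, and with $\tan\theta_1>\tan\theta_2$ (i.e.\ $\det(\bm H)<0$) fixing which branch is the minimizer on each side of the crossover, in agreement with (\ref{eq:l2up1})--(\ref{eq:ledown2}). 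What your approach buys is economy and robustness: the three channel-gain scenarios of the paper disappear entirely (the algebra handles them uniformly), the factor of $2$ in (\ref{eq:defl2xi}) is automatic from the half-height rather than needing the paper's symmetry argument about maximal rectangles, and the method generalizes to other polytopes or higher dimensions. What the paper's approach buys is geometric transparency and intermediate objects ($L_2^{up,\xi}$, $L_2^{down,\xi}$ as literal extensions) whose figures make the case structure visible; note that your grouping reproduces these same two functions, so the later results that reuse them (Lemma~\ref{lem:l2symmetric}, Lemma~\ref{lem:lemmamaxl2}) remain intact. The only point to make explicit in a full write-up is the one you already flag: for $x\le L_1^{\max}(\xi)$ all four bounds are non-negative (the centered segment of length $x$ lies in $R_{//}(\bm H)$ by Proposition~1), so the maximum is attained and well defined.
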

          \begin{proof} See Appendix \ref{ap:prop2}. \end{proof}   
          \begin{lemma}\label{lem:l2_decrease} The function $L_2^\xi(x)~(0 \leq x \leq L_1^{\max}(\xi))$ is a monotonically decreasing and continuous function of $x$. \end{lemma}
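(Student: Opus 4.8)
The plan is to exploit the explicit piecewise-linear description of $L_2^\xi(x)$ supplied by Proposition~2 and to reduce the claim to elementary properties of piecewise-linear functions together with the fact that the pointwise minimum preserves both continuity and monotonicity. Recall that $L_2^\xi(x) = 2\min\big(L_2^{up,\xi}(x), L_2^{down,\xi}(x)\big)$, and that on $[0, L_1^{\max}(\xi)]$ each of $L_2^{up,\xi}(x)$ and $L_2^{down,\xi}(x)$ is, according to the case, either a single affine function of $x$ or a concatenation of two affine pieces meeting at a single breakpoint ($\eta_3(\xi)$ for the ``up'' function, $\eta_4(\xi)$ for the ``down'' function, both lying in the interval by the construction in Proposition~2).

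First I would record that in every expression in (\ref{eq:l2up1})--(\ref{eq:ledown2}) the coefficient of $x$ equals $-h_{2j}/(2h_{1j})$ for the appropriate index $j\in\{1,2\}$. Since all channel gains $h_{ki}$ are non-negative and the denominators $h_{11},h_{12}$ are positive (as required for the inclination angles $\theta_1,\theta_2$ in (\ref{eq:tantheta1}) and for the ZF inverse to be well defined), each such coefficient is non-positive. Hence every individual affine piece is continuous and non-increasing on its subinterval. This already settles continuity and monotonicity of $L_2^{up,\xi}$ in Case~I and of $L_2^{down,\xi}$ in Case~II, where a single affine piece covers all of $[0, L_1^{\max}(\xi)]$.

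The main step is the two-piece situation, namely verifying that $L_2^{up,\xi}$ and $L_2^{down,\xi}$ do not jump at their breakpoints. For the ``up'' function I would substitute $x=\eta_3(\xi)=2\xi h_{12}-2(1-\xi)h_{11}$ into both affine pieces of (\ref{eq:l2up2}) and show, using $det(\bm H)=h_{11}h_{22}-h_{12}h_{21}$, that both reduce to the common value $(1-\xi)h_{21}-\xi h_{22}$; an analogous substitution of $x=\eta_4(\xi)$ into the two pieces of (\ref{eq:l2down1}) disposes of the ``down'' function. This matching computation is the only calculation of substance and is where the precise definitions of $\eta_3(\xi)$ and $\eta_4(\xi)$ are essential, so I expect it to be the main obstacle. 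Once the pieces are shown to agree at the breakpoint, the full function is continuous; and since each piece is non-increasing, a continuous function that is non-increasing on each piece of a two-set partition of an interval is non-increasing on the whole interval (for $x_1\le\eta\le x_2$ one has $f(x_1)\ge f(\eta)\ge f(x_2)$). Thus $L_2^{up,\xi}$ and $L_2^{down,\xi}$ are each continuous and monotonically decreasing on $[0, L_1^{\max}(\xi)]$.

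Finally I would invoke two standard facts: the pointwise minimum of two continuous functions is continuous, and the pointwise minimum of two monotonically decreasing functions is monotonically decreasing. Applying these to $\min\big(L_2^{up,\xi}, L_2^{down,\xi}\big)$ and scaling by the positive constant $2$ shows that $L_2^\xi(x)$ is continuous and monotonically decreasing on $[0, L_1^{\max}(\xi)]$, which is exactly the assertion of the lemma.
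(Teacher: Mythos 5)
Your proof is correct and takes essentially the same route as the paper's: both reduce the claim to the continuity and monotone decrease of $L_2^{up,\xi}(x)$ and $L_2^{down,\xi}(x)$ given by Proposition~2, and then invoke the fact that twice the pointwise minimum of two continuous, decreasing functions is continuous and decreasing. The only difference is that you explicitly verify that the two affine pieces agree at the breakpoints $\eta_3(\xi)$ and $\eta_4(\xi)$ (reducing to the common values $(1-\xi)h_{21}-\xi h_{22}$ and $\xi h_{21}-(1-\xi)h_{22}$ respectively), a detail the paper's proof leaves implicit in the phrase ``From Proposition 2 it is clear that.''
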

          \begin{proof}
           From Proposition 2 it is clear that for a given $\xi$ both $L_2^{up,\xi}(x)$ \text{and} $L_2^{down,\xi}(x)$ are continuous and monotonically decreasing function of  $x$. From this it follows that  $L_2^\xi(x) = 2\min(L^{up,\xi}_2(x), L^{down,\xi}_2(x)) $ is a continuous and decreases monotonically with increasing $x$.   \end{proof}         
    \begin{lemma}\label{lem:pareto} The proposed rate region boundary $R_{\textit{ZF}}^{\textit{Bd}}(\bm H, P_0/\sigma,\xi)$ is Pareto-optimal. That is, for any two rate pairs $(a, b)$ and $(a', b')$ on the boundary $R_{\textit{ZF}}^{\textit{Bd}}(\bm H, P_0/\sigma,\xi)$, if 
                  $a' \geq a$ then it must be true that $b' \leq b$ and if $b' \leq b$ then it must be true that $a' \geq a$. \end{lemma}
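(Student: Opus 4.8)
The plan is to exploit the parametric description of the boundary in (\ref{eq:bd_rate_region}): every point of $R_{\textit{ZF}}^{\textit{Bd}}(\bm H,P_0/\sigma,\xi)$ has the form $\big(C(x,P_0/\sigma),\,C(L_2^\xi(x),P_0/\sigma)\big)$ for a unique parameter $x=L_1\in[0,L_1^{\max}(\xi)]$. The whole argument then reduces to tracking how the two coordinates move as this single parameter is swept, and to showing that they always move in opposite directions. Concretely, I would fix the two boundary points $(a,b)$ and $(a',b')$ and let $x$ and $x'$ be their parameter values, so that $a=C(x,P_0/\sigma)$, $b=C(L_2^\xi(x),P_0/\sigma)$, and likewise for $(a',b')$.

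First I would handle the abscissa. By Result~2 the map $x\mapsto R_1=C(x,P_0/\sigma)$ is strictly increasing and continuous (this strict monotonicity is precisely what earlier justified the unique length--rate correspondence), so that $a'\ge a$ holds \emph{if and only if} $x'\ge x$. Next I would handle the ordinate: Lemma~\ref{lem:l2_decrease} tells us $L_2^\xi(x)$ is decreasing and continuous, and the explicit expressions in Proposition~2 show it is in fact the minimum of affine pieces whose slopes are $-h_{2j}/(2h_{1j})<0$ for $j=1,2$ (strictly negative since the channel gains are positive), hence \emph{strictly} decreasing. Composing this with the increasing function $C(\cdot,P_0/\sigma)$ makes $R_2=C(L_2^\xi(x),P_0/\sigma)$ a strictly decreasing function of $x$, so $x'\ge x$ forces $b'=C(L_2^\xi(x'),P_0/\sigma)\le C(L_2^\xi(x),P_0/\sigma)=b$.

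Chaining these facts yields both halves of the claim at once, since each link is an equivalence: $a'\ge a \iff x'\ge x \iff L_2^\xi(x')\le L_2^\xi(x) \iff b'\le b$. In particular $a'\ge a\Rightarrow b'\le b$ and $b'\le b\Rightarrow a'\ge a$, which is exactly the asserted Pareto-optimality. I would present the chain in this bidirectional form rather than proving the two implications separately, since they share the same monotonicity skeleton.

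The only delicate point I expect is strictness, and I would flag it explicitly: if either $C$ or $L_2^\xi$ were merely weakly monotone, there could be a flat segment of the boundary on which one rate stays constant while the parameter $x$ still varies, and then the clean equivalence between $a'\ge a$ and $b'\le b$ could break down. This is handled by the two observations above, namely that $C(\cdot,P_0/\sigma)$ is strictly increasing (as already used for the bijective length--rate correspondence) and that the affine pieces in Proposition~2 have strictly negative slope, so that $L_2^\xi$ is strictly decreasing. Everything else in the proof is routine substitution, so pinning down these strictness statements is the main thing to get right.
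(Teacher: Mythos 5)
Your proof is correct and takes essentially the same route as the paper's: parametrize the two boundary points by $x$ and $x'$, use the monotonicity of $C(\cdot,P_0/\sigma)$ (Result~2) to pass from $a'\geq a$ to $x'\geq x$, then the monotonic decrease of $L_2^\xi$ (Lemma~1) to conclude $b'\leq b$, with the reverse implication handled symmetrically. Your explicit flagging of strictness (justified via the strictly negative slopes of the affine pieces in Proposition~2) is a legitimate refinement rather than a different argument --- the paper implicitly relies on the same strict monotonicity when it infers $x'\geq x$ from $C(x',P_0/\sigma)\geq C(x,P_0/\sigma)$ and when it asserts the unique $R_i \leftrightarrow L_i$ correspondence.
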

                  \begin{proof}
                  Let $(a, b)$ and $(a', b')$ be any two rate pairs on the boundary  $R_{\textit{ZF}}^{\textit{Bd}}(\bm H, P_0/\sigma,\xi)$ such that $a' \geq a$. Then from ((\ref{eq:rboud1}) and (\ref{eq:rboud2})) it follows that there exists $0 \leq x \leq L_1^{\max}(\xi)$ and $0 \leq x' \leq L_1^{\max}(\xi)$ such that $a = C(x,P_0/\sigma), b = C(L_2^\xi(x),P_0/\sigma)$ and $a' = C(x', P_0/\sigma)$, $b' = C(L_2^\xi(x'), P_0/\sigma)$, where the functions $C(x, P_0/\sigma)$ is defined in (\ref{eq:def_cpapcity}). From Result (2), we know that for a given $P_0/\sigma$, $C(x, P_0/\sigma)$  is a continuous and monotonically increasing function of its first argument. Since  $C(x', P_0/\sigma) = a' \geq a = C(x, P_0/\sigma)$, it follows that $x' \geq x$. From Lemma 1, we know that $L_2^\xi(x)$ is a monotonically decreasing function of $x$, and therefore $L_2^\xi(x') \leq L_2^\xi(x)$, and hence $b' = C( L_2^\xi(x'), P_0/\sigma) \leq C(L_2^\xi(x), P_0/\sigma) = b$. Similarly, it can also be shown that, if $b' \leq b$ then it must be true that $a' \geq a$. This completes the proof.              
                  \end{proof}
                  
     \begin{lemma}\label{lem:l2symmetric}
           For a given $0\leq \xi \leq 1$~and~$x \in [0,L_1^{\max}(\xi)]$, the function $L_2^\xi(x)$  is symmetric about $\xi=1/2$, i.e.
             \begin{equation}
             \label{symmetric_L2}
             L_2^{\xi} (x) = L_2^{1-\xi} (x),~~0\leq \xi \leq 1, x \in [0,L_1^{\max}(\xi)].
             \end{equation}
             \end{lemma}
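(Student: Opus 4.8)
The plan is to reduce the claimed symmetry to a single ``swap'' identity between the upper and lower envelopes of Proposition~2: replacing $\xi$ by $1-\xi$ interchanges $L^{up,\xi}_2(x)$ and $L^{down,\xi}_2(x)$, i.e.
\begin{align*}
L^{up,1-\xi}_2(x) = L^{down,\xi}_2(x), \qquad L^{down,1-\xi}_2(x) = L^{up,\xi}_2(x).
\end{align*}
Once this is established the lemma is immediate, because $\min(\cdot,\cdot)$ is symmetric in its two arguments, so
\begin{align*}
L_2^{1-\xi}(x) &= 2\min\!\big(L^{up,1-\xi}_2(x),\,L^{down,1-\xi}_2(x)\big) \\
&= 2\min\!\big(L^{down,\xi}_2(x),\,L^{up,\xi}_2(x)\big) = L_2^\xi(x).
\end{align*}
I also note that both sides are compared over the same range of $x$, since $L_1^{\max}(\xi)=L_1^{\max}(1-\xi)$ by Remark~\ref{rem:l1sym}; hence the statement is well posed.

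To prove the swap identity I would carry out a direct case check organised around the threshold $\beta_1 \triangleq h_{11}/(h_{11}+h_{12})$. The structural fact that makes everything line up is that the Case~I/Case~II thresholds of the two envelopes, namely $h_{11}/(h_{11}+h_{12})$ for $L^{up}_2$ and $h_{12}/(h_{11}+h_{12})$ for $L^{down}_2$, are complementary (they sum to $1$). Consequently $1-\xi$ lies in the Case~I regime of $L^{down}_2$ exactly when $\xi$ lies in the Case~II regime of $L^{up}_2$, and vice versa. For $\xi \leq \beta_1$ the function $L^{up,\xi}_2(x)$ is the single-piece Case~I formula and $L^{down,1-\xi}_2(x)$ is the single-piece Case~II formula, and substituting $1-(1-\xi)=\xi$ shows the two coincide on all of $[0,L_1^{\max}(\xi)]$. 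For $\xi \geq \beta_1$ both $L^{up,\xi}_2$ (Case~II) and $L^{down,1-\xi}_2$ (Case~I) are two-piece functions, and here I would additionally use the identity $\eta_3(\xi)=\eta_4(1-\xi)$, which is immediate from $\eta_3(\xi)=2\xi h_{12}-2(1-\xi)h_{11}$ and $\eta_4(\xi)=2(1-\xi)h_{12}-2\xi h_{11}$, so that the two breakpoints align; a piece-by-piece comparison then yields equality.

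The only real difficulty is bookkeeping: pairing each regime of $\xi$ with the correct Case of the complementary envelope, and matching the $x$-subintervals cut at $\eta_3$ and $\eta_4$. Because every piece is an affine function of $x$ whose coefficients are built from $det(\bm H)$, a single channel gain, and a factor $c\in\{\xi,1-\xi\}$, once the domains are aligned the verification reduces to checking that these data agree, which is exactly what the substitution $1-(1-\xi)=\xi$ delivers. I would therefore organise the write-up into the two regimes $\xi\le\beta_1$ and $\xi\ge\beta_1$, first confirming the threshold and breakpoint correspondences and then reading off the coincidence of the formulas, before invoking the symmetry of $\min$ to conclude.
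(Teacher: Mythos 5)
Your proposal is correct and follows essentially the same route as the paper's proof: both reduce the lemma to the swap identity $L^{up,1-\xi}_2(x)=L^{down,\xi}_2(x)$ (equivalently $L^{down,1-\xi}_2(x)=L^{up,\xi}_2(x)$), established by a case analysis over the complementary thresholds $h_{11}/(h_{11}+h_{12})$ and $h_{12}/(h_{11}+h_{12})$ together with the breakpoint identity $\eta_3(\xi)=\eta_4(1-\xi)$ and the domain identity $L_1^{\max}(\xi)=L_1^{\max}(1-\xi)$, and then conclude via the symmetry of $\min$. The only cosmetic difference is that you organize the cases by where $\xi$ falls while the paper organizes them by where $1-\xi$ falls, which is the same dichotomy viewed from the other side.
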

             \begin{proof} See Appendix \ref{ap:l2xisym}.
       \end{proof}\par
       Using Lemma~\ref{lem:l2symmetric} along with the definition of the rate region boundary in (\ref{eq:bd_rate_region}) we get the following result.
       \begin{result}\label{res:bdsym}
       The proposed rate region boundary $R_{\textit{ZF}}^{\textit{Bd}}(\bm H , P_0/\sigma, \xi)$ is symmetric about $\xi=1/2$, i.e. 
       \begin{equation}\label{eq:bdsym}
       R_{\textit{ZF}}^{\textit{Bd}}(\bm H , P_0/\sigma, \xi) = R_{\textit{ZF}}^{\textit{Bd}}(\bm H , P_0/\sigma, (1-\xi)),~ \forall \xi \in [0,1].
       \end{equation}   
       \end{result}      
  The following theorem shows that for $0 \leq \xi \leq 1$, the largest rate region is achieved when $\xi =1/2$.                 
              
              \begin{theorem}\label{the:maxathalf}
              For a fixed $\xi \in [0,1]$, \begin{equation}{\label{eq:maxathalf}}
              R_{\text{ZF}}(\bm H , P_0/\sigma, \xi) \subseteq R_{\text{ZF}}(\bm H , P_0/\sigma, 1/2).\end{equation}
              \end{theorem}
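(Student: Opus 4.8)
The plan is to reduce the set-inclusion between rate regions to a purely geometric statement about rectangles, and then to dispatch that statement by a convexity (Minkowski-averaging) argument that bypasses the case analysis of Proposition~2 entirely. First I would unfold the definitions. By \eqref{eq:rate_region_equation}, any rate pair $(R_1, R_2) \in R_{\textit{ZF}}(\bm H, P_0/\sigma, \xi)$ has the form $\big(C(L_1, P_0/\sigma), C(L_2, P_0/\sigma)\big)$ for some $L_1, L_2 \ge 0$ with $Rect(L_1, L_2, D(\bm H, \xi)) \subset R_{//}(\bm H)$. Since the function $C(\cdot, P_0/\sigma)$ does not depend on $\xi$, it suffices to show that this same pair $(L_1, L_2)$ also satisfies $Rect(L_1, L_2, D(\bm H, 1/2)) \subset R_{//}(\bm H)$; this immediately places $(R_1, R_2)$ in $R_{\textit{ZF}}(\bm H, P_0/\sigma, 1/2)$ and proves \eqref{eq:maxathalf}. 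Thus the whole theorem reduces to the claim: if a rectangle of given side-lengths $L_1, L_2$ centered at $D(\bm H, \xi)$ fits inside $R_{//}(\bm H)$, then the rectangle of the \emph{same} side-lengths centered at $D(\bm H, 1/2)$ also fits inside $R_{//}(\bm H)$.

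The key structural observation is that, by \eqref{eq:parallelogram}, $R_{//}(\bm H)$ is the image of the unit square under the linear map $\bm H$, hence it is convex and centrally symmetric about its center $\tfrac12(\bm h_1 + \bm h_2) = D(\bm H, 1/2)$. Writing $A \Define Rect(L_1, L_2, D(\bm H, \xi))$, the point reflection of $A$ through $D(\bm H, 1/2)$ is the rectangle $B \Define Rect(L_1, L_2, D(\bm H, 1-\xi))$ of identical side-lengths, centered at $(1-\xi)(\bm h_1 + \bm h_2)$. By central symmetry of $R_{//}(\bm H)$, the hypothesis $A \subset R_{//}(\bm H)$ forces $B \subset R_{//}(\bm H)$ as well (this also follows from Lemma~\ref{lem:l2symmetric}).

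I would then average $A$ and $B$. Both are axis-aligned rectangles with the common centered box $B_0 = [-L_1/2, L_1/2] \times [-L_2/2, L_2/2]$, so the Minkowski average satisfies $\tfrac12 A + \tfrac12 B = \tfrac12\big(D(\bm H,\xi) + D(\bm H,1-\xi)\big) + \big(\tfrac12 B_0 + \tfrac12 B_0\big) = D(\bm H, 1/2) + B_0 = Rect(L_1, L_2, D(\bm H, 1/2))$, where I use $\tfrac12 B_0 + \tfrac12 B_0 = B_0$ for the convex box $B_0$ and $\tfrac12\big(D(\bm H,\xi)+D(\bm H,1-\xi)\big) = \tfrac12(\bm h_1+\bm h_2) = D(\bm H, 1/2)$. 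Since $A, B \subset R_{//}(\bm H)$ and $R_{//}(\bm H)$ is convex, $\tfrac12 A + \tfrac12 B \subset R_{//}(\bm H)$, which is exactly $Rect(L_1, L_2, D(\bm H, 1/2)) \subset R_{//}(\bm H)$ — the claim from the first paragraph.

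The main point to verify carefully is the Minkowski-sum identity $\tfrac12 A + \tfrac12 B = Rect(L_1, L_2, D(\bm H, 1/2))$ together with the inclusion $\tfrac12 A + \tfrac12 B \subset R_{//}(\bm H)$; both are elementary but are where all the real work sits, since the whole argument hinges on recognizing that $R_{//}(\bm H)$ is convex and centrally symmetric about $D(\bm H, 1/2)$. A more computational alternative would instead bound $L_2^\xi(x) \le L_2^{1/2}(x)$ directly: by Lemma~\ref{lem:l2symmetric} one may assume $\xi \le 1/2$ and then compare the piecewise-linear expressions of Proposition~2 term by term, with care at the breakpoints $\eta_3(\xi)$ and $\eta_4(\xi)$. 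That route reproves the same containment but is considerably more tedious and error-prone, so I expect the convexity argument above to be the cleaner path.
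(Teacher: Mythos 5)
Your proof is correct, and it is a genuinely different argument from the paper's. The paper (Appendix~\ref{ap:bndmaxathalf}) first establishes the pointwise bound $L_2^{\xi}(x) \leq L_2^{1/2}(x)$ (Lemma~\ref{lem:lemmamaxl2}) by comparing the piecewise-linear formulas of Proposition~2 case by case (splitting on whether $h_{12} \leq h_{11}$ and on subranges of $x$ determined by the breakpoints $\eta_3(\xi)$, $\eta_4(\xi)$, after invoking Lemma~\ref{lem:l2symmetric} to restrict to $\xi \in [0,1/2]$), and then deduces the theorem by nesting $Rect(L_1,L_2,D(\bm H,1/2)) \subset Rect(L_1,L_2^{1/2}(L_1),D(\bm H,1/2)) \subset R_{//}(\bm H)$ --- precisely the ``computational alternative'' you mention at the end. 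Your route replaces all of this with two structural facts: $R_{//}(\bm H)$ is a linear image of $[0,1]^2$, hence convex and centrally symmetric about $\tfrac{1}{2}(\bm h_1 + \bm h_2) = D(\bm H,1/2)$; reflecting $A = Rect(L_1,L_2,D(\bm H,\xi))$ through that center gives $B = Rect(L_1,L_2,D(\bm H,1-\xi)) \subset R_{//}(\bm H)$, and the Minkowski average $\tfrac{1}{2}A + \tfrac{1}{2}B = D(\bm H,1/2) + B_0 = Rect(L_1,L_2,D(\bm H,1/2))$ lies in $R_{//}(\bm H)$ by convexity. The two identities you flag both hold: $\tfrac{1}{2}B_0 + \tfrac{1}{2}B_0 = B_0$ by convexity of the centered box (together with $-B_0 = B_0$ for the reflection step), and $\tfrac{1}{2}\bigl(D(\bm H,\xi)+D(\bm H,1-\xi)\bigr) = D(\bm H,1/2)$ by (\ref{eq:tip}); the degenerate cases $L_1 = 0$ or $L_2 = 0$ cause no trouble. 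What your argument buys: it bypasses Proposition~2 and Lemma~\ref{lem:lemmamaxl2} entirely (indeed it reproves that lemma --- apply your claim to the maximal rectangle $Rect(x,L_2^{\xi}(x),D(\bm H,\xi))$ to conclude $L_2^{1/2}(x) \geq L_2^{\xi}(x)$); it makes transparent why $\xi = 1/2$ is the optimal dimming target (it is the center of symmetry of $R_{//}(\bm H)$); and it generalizes verbatim to an $N$-LED/$N$-user ZF setting, where the feasible region is the image of the hypercube $[0,1]^N$ and the symbol regions are axis-aligned boxes, a regime in which the paper's coordinate geometry would be unwieldy. What the paper's route buys is comparatively modest: it reuses machinery (Proposition~2) that is needed anyway for the boundary characterization, so it requires no additional geometric observation beyond the formulas already derived.
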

              \begin{proof} See Appendix \ref{ap:bndmaxathalf}.  \end{proof}

                The proposed rate region boundary 
       $R_{\textit{ZF}}^{\textit{Bd}}(\bm H, P_0/\sigma, \xi)$ 
       can be used to compute many practical operating points. Consider a case where
                we are interested in finding the largest
                achievable rate pair $(R_1 , R_2 ) $ such that 
                $R_2 = \alpha R_1 $. 
                This operating point could make sense, if for example the average data
                throughput requested by user 2 is $\alpha$ times that of the throughput requested
                by user 1.   \par       
                Moreover, for a given  $\alpha > 0$ 
                and $P_0/\sigma$, the maximum 
                achievable rate pair of the
                form $(r,\alpha r)$ is given 
   by $(R_{\max}^\alpha(\xi), \alpha R_{\max}^{\alpha}(\xi))$ 
   where $R_{\max}^{\alpha}(\xi)$ is defined as
  \begin{equation}
  \label{eq:r_alpha_max}
  R^{\alpha}_{\max}(\xi) \Define
  \max_{r \big \vert (r, \alpha r) \in 
  R_{\textit{ZF}}(\bm H, P_0/\sigma, \xi)} r.
  \end{equation}
  \begin{theorem}\label{the:uniquebnd}
  $R_{\max}^{\alpha}(\xi)$ is unique and 
  $(R_{\max}^{\alpha}(\xi),\alpha  R_{\max}^{\alpha}(\xi))$ 
  lies on the  boundary 
  $R_{\textit{ZF}}^{\text{Bd}}(\bm H, P_0/\sigma, \xi)$. 
  \end{theorem}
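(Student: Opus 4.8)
The plan is to collapse the two-dimensional optimization in \eqref{eq:r_alpha_max} into a one-dimensional root-finding problem in the length parameter $x$ that traces the boundary in \eqref{eq:bd_rate_region}, and then to invoke the monotonicity supplied by Result~2 and Lemma~\ref{lem:l2_decrease}. Throughout I would assume the non-degenerate case $0<\xi<1$; for $\xi\in\{0,1\}$ the point $D(\bm H,\xi)$ is a corner of $R_{//}(\bm H)$, the only admissible rectangle is the degenerate point, the region reduces to $\{(0,0)\}$, and the claim is trivially true.

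First I would record a membership test for the region. Since $C(\cdot,P_0/\sigma)$ is continuous and strictly increasing (Result~2), it is invertible on its range; write $x=C^{-1}(r)$. A pair $(r,\alpha r)$ lies in $R_{\textit{ZF}}(\bm H,P_0/\sigma,\xi)$ iff there is $(L_1,L_2)\in S$ with $C(L_1)=r$ and $C(L_2)=\alpha r$. Taking $L_1=x=C^{-1}(r)$, the key nesting observation is that $(x,L_2)\in S$ for \emph{every} $0\le L_2\le L_2^\xi(x)$, because $Rect(x,L_2,D(\bm H,\xi))\subset Rect(x,L_2^\xi(x),D(\bm H,\xi))\subset R_{//}(\bm H)$ once both rectangles share the center $D(\bm H,\xi)$. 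Applying the increasing map $C$, membership is therefore equivalent to the scalar inequality $\alpha\,C(x)\le C\!\left(L_2^\xi(x)\right)$. Defining $g(x)\triangleq C\!\left(L_2^\xi(x),P_0/\sigma\right)-\alpha\,C(x,P_0/\sigma)=R_2^{\textit{Bd}}(x,P_0/\sigma,\xi)-\alpha\,R_1^{\textit{Bd}}(x,P_0/\sigma,\xi)$, I would then have $(C(x),\alpha C(x))\in R_{\textit{ZF}}$ iff $g(x)\ge0$; and since $r=C(x)$ increases with $x$, maximizing $r$ in \eqref{eq:r_alpha_max} is identical to maximizing $x$ over $\{x\in[0,L_1^{\max}(\xi)]:g(x)\ge0\}$.

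Next I would study $g$. It is continuous as a composition of the continuous maps $C$ (Result~2) and $L_2^\xi$ (Lemma~\ref{lem:l2_decrease}), and it is strictly decreasing: the expressions in Proposition~2 are piecewise affine in $x$ with strictly negative slopes, so $L_2^\xi(x)=2\min(L_2^{up,\xi}(x),L_2^{down,\xi}(x))$ is strictly decreasing, whence $C(L_2^\xi(x))$ is strictly decreasing while $\alpha\,C(x)$ is strictly increasing for $\alpha>0$. At the endpoints, $L_2^\xi(0)>0$ gives $g(0)=C(L_2^\xi(0))>0$, and $L_2^\xi(L_1^{\max}(\xi))=0$ (the defining property of $L_1^{\max}(\xi)$ in Proposition~1) gives $g(L_1^{\max}(\xi))=-\alpha\,C(L_1^{\max}(\xi))<0$. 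The intermediate value theorem then yields a root, and strict monotonicity makes it unique; call it $x^\star$, so that $\{x:g(x)\ge0\}=[0,x^\star]$.

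Finally I would assemble the conclusion. The maximizing length is $x^\star$, hence $R_{\max}^{\alpha}(\xi)=C(x^\star,P_0/\sigma)$ is a single well-defined value (this is the asserted uniqueness), and the associated operating point $(C(x^\star),\alpha C(x^\star))=\big(R_1^{\textit{Bd}}(x^\star,P_0/\sigma,\xi),R_2^{\textit{Bd}}(x^\star,P_0/\sigma,\xi)\big)$ lies on $R_{\textit{ZF}}^{\textit{Bd}}(\bm H,P_0/\sigma,\xi)$ by \eqref{eq:bd_rate_region}; it is exactly the unique intersection of the line $R_2=\alpha R_1$ with the boundary. The main obstacle is the strict monotonicity of $g$ that secures uniqueness: weak monotonicity (all that Result~2 and Lemma~\ref{lem:l2_decrease} assert verbatim) already gives existence and the on-boundary property via compactness of $R_{\textit{ZF}}$ together with the maximality of $x^\star$, but to pin down a \emph{single} intersection point I would lean on the explicit strictly-negative slopes of $L_2^\xi$ from Proposition~2 and on strict monotonicity of $C$. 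The Pareto-optimality in Lemma~\ref{lem:pareto} provides an independent confirmation that the boundary is a strictly decreasing curve, reinforcing the uniqueness of the intersection.
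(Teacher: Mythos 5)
Your proof is correct, and at its core it is the same argument as the paper's: both collapse the problem to the scalar function $x \mapsto \alpha C(x,P_0/\sigma) - C(L_2^\xi(x),P_0/\sigma)$ (your $g$ is the negative of the paper's $f$), get continuity and monotonicity from Result~2 and Lemma~\ref{lem:l2_decrease}, and extract the unique zero $x^\star$ by the intermediate value theorem, which (\ref{eq:bd_rate_region}) then maps to the unique boundary point. Where you genuinely differ is in how the sign change is certified, and this is worth comparing. The paper starts from an arbitrary pair $(a,\alpha a)$ strictly inside $R_{\textit{ZF}}(\bm H,P_0/\sigma,\xi)$, constructs two abscissae $y<t$ with $f(y)<0<f(t)$, and only afterwards infers the endpoint signs by monotonicity; you evaluate the endpoints directly, using $C(0,P_0/\sigma)=0$, $L_2^\xi(0)>0$ and $L_2^\xi(L_1^{\max}(\xi))=0$, after first proving the membership characterization $\{x : g(x)\geq 0\}=[0,x^\star]$ via the nesting of concentric rectangles. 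Your route buys something the paper leaves implicit: it shows the feasible set on the ray $R_2=\alpha R_1$ is exactly an interval, so attainment and uniqueness of $R_{\max}^{\alpha}(\xi)=C(x^\star,P_0/\sigma)$ are immediate, and it also covers the degenerate cases $\xi\in\{0,1\}$, where the paper's argument has no interior point to start from. The price is the two extra facts the paper never needs. Of these, $L_2^\xi(L_1^{\max}(\xi))=0$ deserves more care than your parenthetical gives it: it is \emph{not} literally the defining property of $L_1^{\max}(\xi)$ in Proposition~1, which only guarantees that \emph{some} $L_2\geq 0$ is feasible at $x=L_1^{\max}(\xi)$. It is nevertheless true here, either by substituting $x=L_1^{\max}(\xi)$ into the expressions of Proposition~2, or geometrically from the proof of Proposition~1 (the maximal horizontal segment meets the sides of $R_{//}(\bm H)$, which are neither horizontal nor vertical when all $h_{ki}>0$, so no rectangle of that width and positive height fits). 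Finally, your observation that uniqueness needs \emph{strict} monotonicity of $C$ and of $L_2^\xi$ is well taken; the paper's proof relies on exactly the same strictness when it asserts that uniqueness ``follows from the monotonicity of $f$,'' so this fine point is shared by both arguments and is better flagged in yours.
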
          
  \begin{proof} See Appendix \ref{ap:uniqbnd}. \end{proof}              
   \begin{remark} \textit{From the proof in Appendix~\ref{ap:uniqbnd} it is clear that Theorem~\ref{the:uniquebnd} is non-trivial as it depends on the monotonicity and continuity of $L_2^\xi(x)$, which is shown in Lemma~\ref{lem:l2_decrease}. If Lemma~\ref{lem:l2_decrease} were not true, Theorem~\ref{the:uniquebnd} would not hold}. \end{remark} 
  \begin{result}\label{res:rmaxalpha_sym_half}
   Using Theorem~\ref{the:uniquebnd} and (\ref{eq:bdsym}) of Result~\ref{res:bdsym} it follows that for a given $\alpha > 0$,  $R_{\max}^{\alpha}(\xi)$ is symmetric about $\xi=1/2$, i.e.
   \begin{equation}
   R_{\max}^{\alpha}(\xi) =  R_{\max}^{\alpha}(1-\xi),~~\forall~\alpha > 0, \xi \in [0,1]. 
   \end{equation}
    \end{result}           
  \begin{corollary}\label{cor:rmaxalphaliesonbd} From the geometrical interpretation of 
  Theorem~\ref{the:uniquebnd} it follows that 
  $(R_{\max}^{\alpha}(\xi), \alpha R_{\max}^{\alpha}(\xi))$ 
  lies on the intersection of the straight 
  line $R_2 = \alpha R_1$ and the rate region boundary
  $R_{\textit{ZF}}^{\text{Bd}}(\bm H, P_0/\sigma, \xi)$. Further, from the Pareto-optimality of the proposed rate region boundary, it follows that there is only a unique point of intersection between the line $R_2 = \alpha R_1$ and $R_{\textit{ZF}}^{\textit{Bd}}(\bm H, P_0/\sigma, \xi)$.\end{corollary}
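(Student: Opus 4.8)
The plan is to split the statement into its two assertions and dispatch each using a result already in hand. The first assertion---that $(R_{\max}^{\alpha}(\xi),\alpha R_{\max}^{\alpha}(\xi))$ lies on the intersection of the line $R_2=\alpha R_1$ with the boundary $R_{\textit{ZF}}^{\textit{Bd}}(\bm H,P_0/\sigma,\xi)$---is essentially a repackaging of Theorem~\ref{the:uniquebnd}, whereas the second assertion---uniqueness of the intersection point---is where Pareto-optimality (Lemma~\ref{lem:pareto}) does the real work. So the structure I would follow is: (i) observe membership in the intersection directly, and then (ii) rule out any second intersection point by contradiction.

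For step (i), I would note that by the definition in (\ref{eq:r_alpha_max}) the point $(R_{\max}^{\alpha}(\xi),\alpha R_{\max}^{\alpha}(\xi))$ has its second coordinate equal to $\alpha$ times its first coordinate, so it lies on the straight line $R_2=\alpha R_1$ by inspection. Theorem~\ref{the:uniquebnd} already asserts that this same point lies on the boundary $R_{\textit{ZF}}^{\textit{Bd}}(\bm H,P_0/\sigma,\xi)$. Lying on both the line and the boundary, it is by definition a point of their intersection, which settles the first claim with no further computation.

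For step (ii), I would argue by contradiction. Suppose the line $R_2=\alpha R_1$ met the boundary at two distinct points, say $(a,b)$ and $(a',b')$ with $(a,b)\neq(a',b')$. Since both lie on the line, $b=\alpha a$ and $b'=\alpha a'$. If $a=a'$ then $b=b'$ as well, so the points would coincide; hence $a\neq a'$, and without loss of generality $a'>a$. Because $\alpha>0$, this forces $b'=\alpha a'>\alpha a=b$, so \emph{both} coordinates strictly increase in passing from $(a,b)$ to $(a',b')$. But Lemma~\ref{lem:pareto} states that for two points on the boundary $a'\geq a$ implies $b'\leq b$, which directly contradicts $b'>b$. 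Hence no two distinct intersection points can exist, and the intersection is unique.

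I do not anticipate a genuine obstacle, since the corollary is a direct geometric consequence of the two quoted results. The one point that must be handled with care is the role of the hypothesis $\alpha>0$: it is precisely this sign condition that guarantees that moving along the line in the direction of increasing $R_1$ also increases $R_2$, thereby putting the line in conflict with the decreasing, Pareto-optimal boundary. In other words, $\alpha>0$ is exactly what converts Lemma~\ref{lem:pareto} into the desired uniqueness statement, and I would be careful to invoke it explicitly when concluding $b'>b$.
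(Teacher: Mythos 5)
Your proposal is correct and follows the same route the paper intends: the membership claim is read off directly from Theorem~\ref{the:uniquebnd} (the point lies on the line $R_2=\alpha R_1$ by construction and on the boundary by the theorem), and uniqueness of the intersection is exactly the Pareto-optimality argument via Lemma~\ref{lem:pareto}, which your contradiction argument (two distinct intersection points would have both coordinates strictly increasing, violating $a'\geq a \Rightarrow b'\leq b$) makes precise. Your explicit attention to the role of $\alpha>0$ is a sound addition, since that sign condition is indeed what places the line in conflict with the decreasing boundary.
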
  
  \subsection{Maximum symmetric rate $R^{sym}(\xi)$}\label{subsec:rsym}
   Note that for the special case of $\alpha=1, R_{\max}^\alpha(\xi)$ is nothing but the maximum achievable symmetric rate which we shall denote by
   \begin{equation}
   R^{sym}(\xi) \Define R^{\alpha = 1}_{max}(\xi).
   \end{equation}
   
   From Theorem~\ref{the:uniquebnd} it is clear that the maximum symmetric rate is nothing but the largest rate $R$ such that the rate pair $(R,R)$ lies on the boundary  $R_{\textit{ZF}}^{\textit{Bd}}(\bm H , P_0/\sigma, \xi)$. From the characterization of the boundary points in (\ref{eq:bd_rate_region}), it follows that there must exist $(x, L_2^\xi(x))$ for some $0 \leq x \leq L_1^{\max}(\xi)$ such that
   \begin{equation}
   R = C(x,P_0/\sigma),~~\text{and}~~
   R = C(L_2^\xi(x), P_0/\sigma)
   \end{equation}
   and therefore 
   \begin{equation}\label{eq:x=l2x}
   x = L_2^\xi(x)
   \end{equation}
   since from Result 2 we know that $C(L,P_0/\sigma)$ is a continuous and monotonic function.
   From \ref{eq:rate_region} it follows that there exists a rectangle $Rect(x, L_2^\xi(x),D(\bm H , \xi)) \subset R_{//}(\bm H )$ corresponding to the rate pair $(R, R)$ where $x$ satisfies $\ref{eq:x=l2x}$.  
   
   Since $x=L_2^\xi(x)$ it follows that this rectangle is infact a square. Further, from the definition of $L_2^\xi(x)$ in (\ref{eq:defl2xi}) it follows that this is the largest sized square whose midpoint is at $D(\bm H,\xi)$ and has side length $x$. \par \textit{Hence, the  maximum achievable symmetric rate corresponds to the largest sized square which is completely inside $R_{//}(\bm H)$ and has its midpoint at $D(\bm H,\xi).$}
   
 \section{A Novel Transceiver Architecture}
  \begin{figure}[!t]
                         
                         \center           
                         {
                         
                         \subfigure[\scriptsize Transmitter (Tx) block diagram.]{\includegraphics[width=9cm,height=6cm]{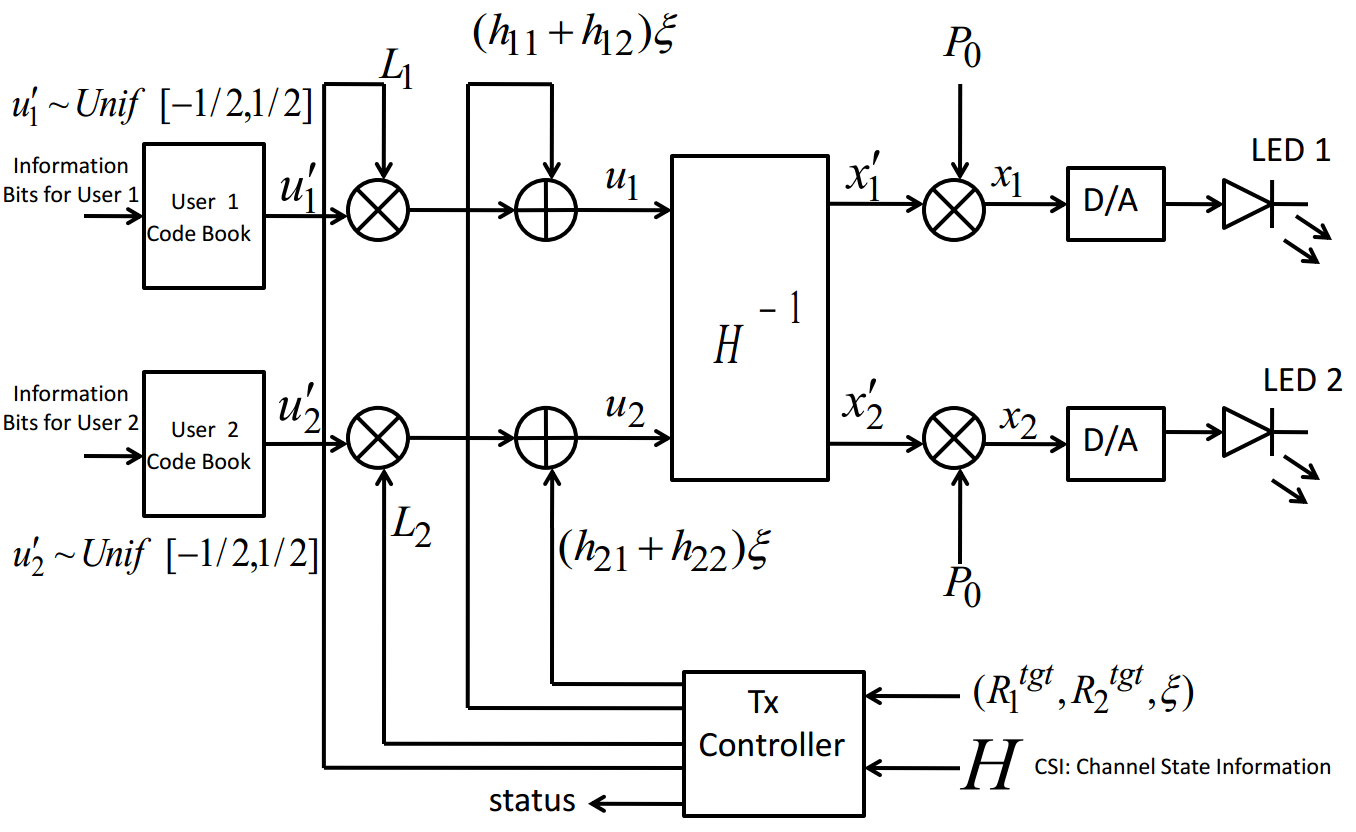}\label{fig:AD1}}
                         
                         \subfigure[\scriptsize Tx Controller block diagram.]{\includegraphics[width=9cm,height=5cm]{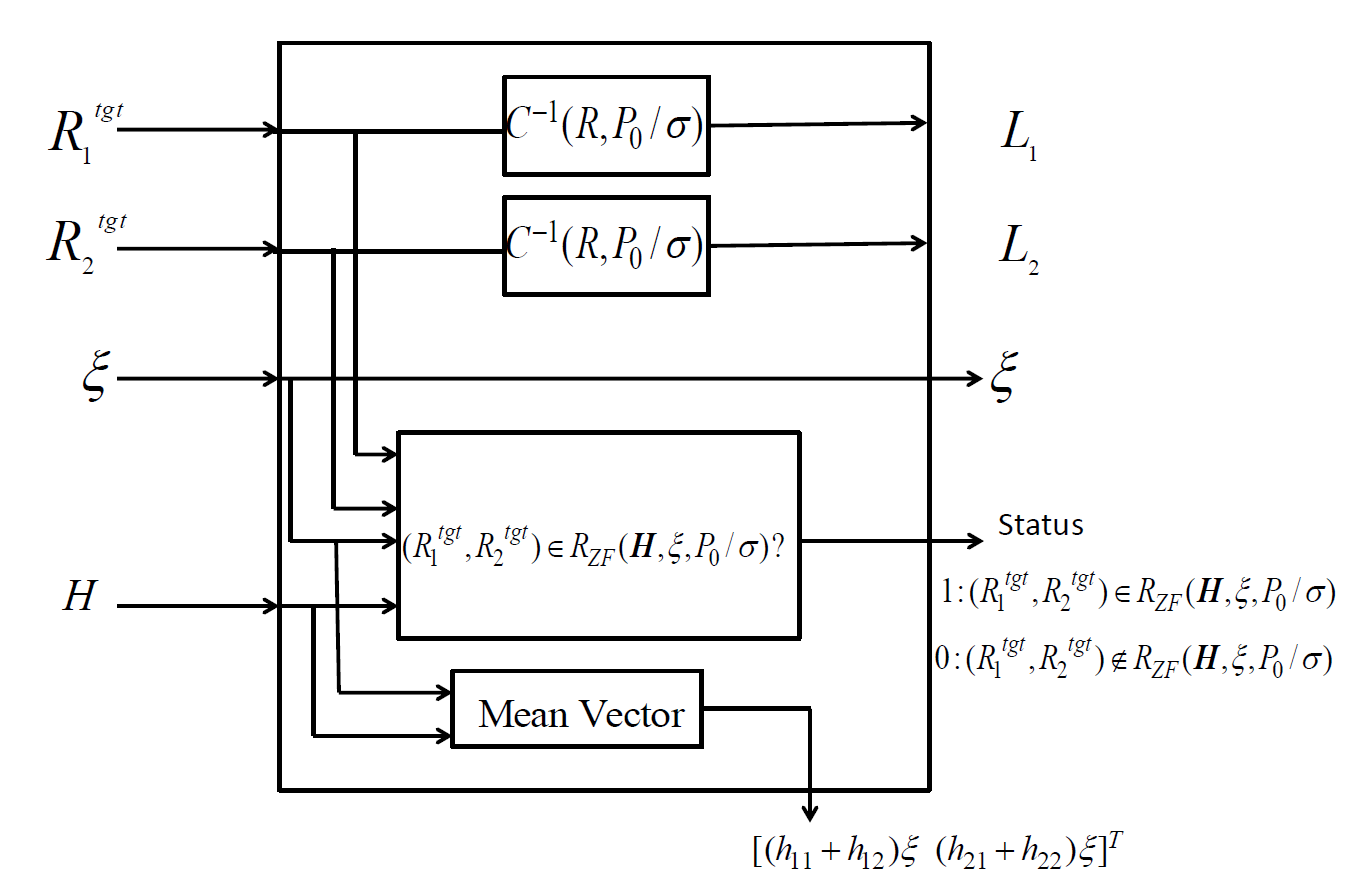}\label{fig:AD2}}
                         
                         \subfigure[\scriptsize Receiver block diagram for $i^{th}$ user,~$i=1,2$. ]{\includegraphics[width=9cm,height=2.5cm]{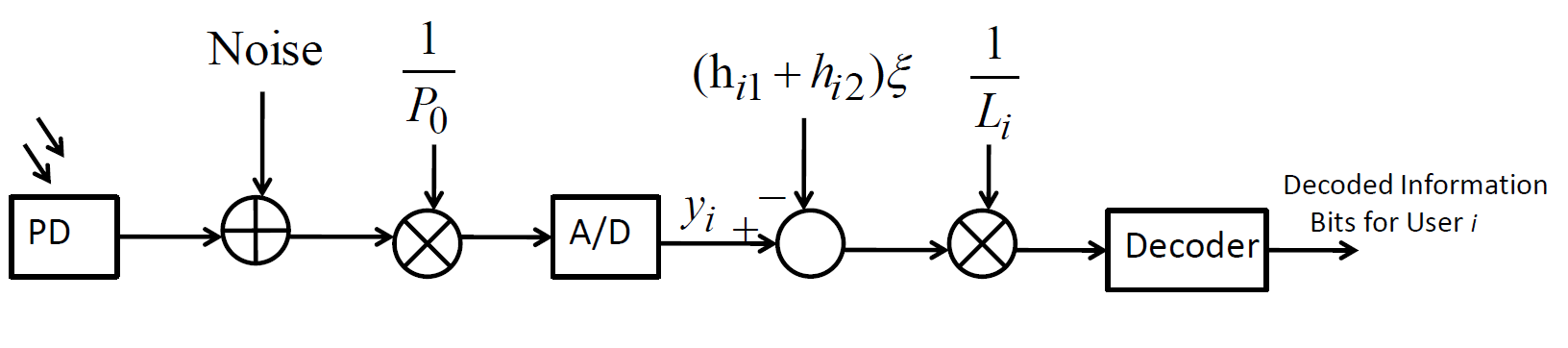}\label{fig:AD3}}

                         \caption{A novel transceiver architecture for the proposed $2\times 2 $ MU-MISO VLC system with dimming Target of $\xi$ and target rate pair $(R_1^{tgt},R_2^{tgt})$.}
                         
                         \label{fig:AD}
                         
                         }
                         
                         \end{figure}
  In this section we propose a novel transceiver architecture for the practical implementation of the proposed $2\times 2$ MU-MISO VLC system  to achieve any rate pair $(R_1, R_2) \in R_{\textit{ZF}}(\bm H , P_0/\sigma, \xi )$ (see Section~\ref{sec:bdzf}), under a per-LED peak power constraint of $P_0$ and a controllable dimming target. \par 
  In Fig.~\ref{fig:AD}, we have shown the block diagram of both the transmitter and the receiver. The block diagram in Fig.~\ref{fig:AD1} depicts the transmitter, the block diagram in Fig.~\ref{fig:AD2} depicts the controller for the transmitter which we call as Tx controller and the block diagram in Fig.~\ref{fig:AD3} depicts the receiver. The working of this transceiver is as follows. \par
 Consider a scenario where the rate requested by User~1 and User~2 are $R_1^{tgt}$ bpcu and $R_2^{tgt}$ bpcu respectively and to satisfy the lighting requirement inside the room the required dimming target is $\xi$. We call this rate pair $(R_1^{tgt}, R_2^{tgt})$, as the target rate pair of the system. The Tx controller first 
 checks if this target rate pair lies in the proposed achievable rate region $R_{\textit{ZF}}(\bm H , P_0/\sigma, \xi )$ (see Section~\ref{sec:bdzf}). If the target rate pair lies inside the proposed achievable rate region, (i.e.,  $(R_1^{tgt}, R_2^{tgt}) \in R_{\textit{ZF}}(\bm H , P_0/\sigma, \xi )$) then the Tx controller flags $1$, otherwise it flags $0$ (see status output of the Tx controller in Fig.~\ref{fig:AD2}). If this flag is 1, then the Tx controller provides $L_1$ and $L_2$, the lengths of the intervals $\mathcal{U}_1$ and $\mathcal{U}_2$. From (\ref{eq:rate_region_equation}) we know that since $(R_1^{tgt}, R_2^{tgt}) \in R_{\textit{ZF}}(\bm H, P_0/\sigma, \xi),$ there must exist some $(L_1,L_2)$ such that $R_1^{tgt} = C(L_1, P_0/\sigma)$ and $R_2^{tgt} = C(L_2, P_0/\sigma)$. From Result~2 we also know that for a given $P_0/\sigma$, $C(x, P_0/\sigma)$ is a monotonic function of $x$, and therefore there exists a corresponding inverse function  $C^{-1}(R, P_0/\sigma)$ such that $C^{-1}(C(L,P_0/\sigma), P_0/\sigma) = L$ and $C(C^{-1}(R,P_0/\sigma), P_0/\sigma) = R$. It then follows $L_i= C^{-1}(R_i^{tgt}, P_0/\sigma),~i =1,2$ see Fig.~\ref{fig:AD3}.
  In the Tx controller we also have a  block which outputs the mean information symbol  vector $\xi(\bm h_1+\bm h_2) = [\xi(h_{11}+h_{12})~~\xi(h_{21}+h_{22})]^T $ (defined in (\ref{eq:mean_info_vec})).
  \par 
  Further, in Fig.~\ref{fig:AD1} the information bits for user 1 and user 2 are coded separately using independent codebooks each having i.i.d. codeword symbols which are unifromly distributed in $[-1/2, 1/2]$.
  The codeword symbols for user~1 and user~2 are denoted by $u_1^{\prime}$
   and $u_2^{\prime}$ respectively (note that $u_1$ and $u_2$ are the information symbols for User ~1 and User~2 respectively). 
  From Section III, we know that the information symbols for the $i^{th}$ user must be uniformly distributed in the interval $\mathcal{U}_i $ i.e.,  $u_i \in \mathcal{U}_i = [ \xi(h_{i1}+h_{i2}) - L_i/2  ,\xi(h_{i1}+h_{i2})+ L_i/2]$ (since the horizontal length of the rectangle corresponding to the rate pair $(R_1^{tgt},R_2^{tgt})$ is $L_1$, the vertical length of this rectangle is $L_2$ and its midpoint is $D(\bm H,\xi)$). Therefore, starting with the codeword symbol $u_i^{\prime}$ we can get the information symbol $u_i$ by
  \begin{equation}
  u_i = L_i u_i^{\prime} + \xi(h_{i1}+h_{i2}),~i=1,2.\end{equation}
  This is also shown in Fig. 3. 
  The information vector $[u_1~u_2]^T$ is then precoded with $H^{-1}$ and scaled by $P_0$ to give the transmit signal vector $[x_1~x_2]^T$.
  It is noted that the proposed transmitter architecture in Fig.~\ref{fig:AD1} allows us to use the \emph{same channel encoder/codebook irrespective of the dimming target $\xi$}. This is because the effect of the dimming control is only in shifting the mean of the information symbols $(u_1,u_2)$ (see the adders in Fig~\ref{fig:AD1}).\footnote{Note that different target rates can be achieved by the same codebook through puncturing of the codewords.}
   \par 
 At the receiver after performing the operations shown in Fig.~\ref{fig:AD3}, we obtain the received vector as given by (\ref{eq:rx_signal}). 
 \section{Numerical Results and Discussions}\label{sec:numres}
 
  \begin{table}[t]
     \centering
      \caption{System Parameters used for Simulation}
      \label{tab:table1}
      \begin{tabular}{|c|c|}   
      \hline
       PD area  & 1 $\text{cm}^2$ \\ \hline
       Receiver Field of Veiw (FOV) & 60 [deg.] \\ \hline
       Refractive index of a lens at the PD & 1.5 \\ \hline
       Semi-angle at half power  & 70 [deg.] \\ \hline
      \end{tabular}   
     \end{table}
 \begin{figure}[t]
                                  \centering
                                  \includegraphics[width=9cm,height=6cm]{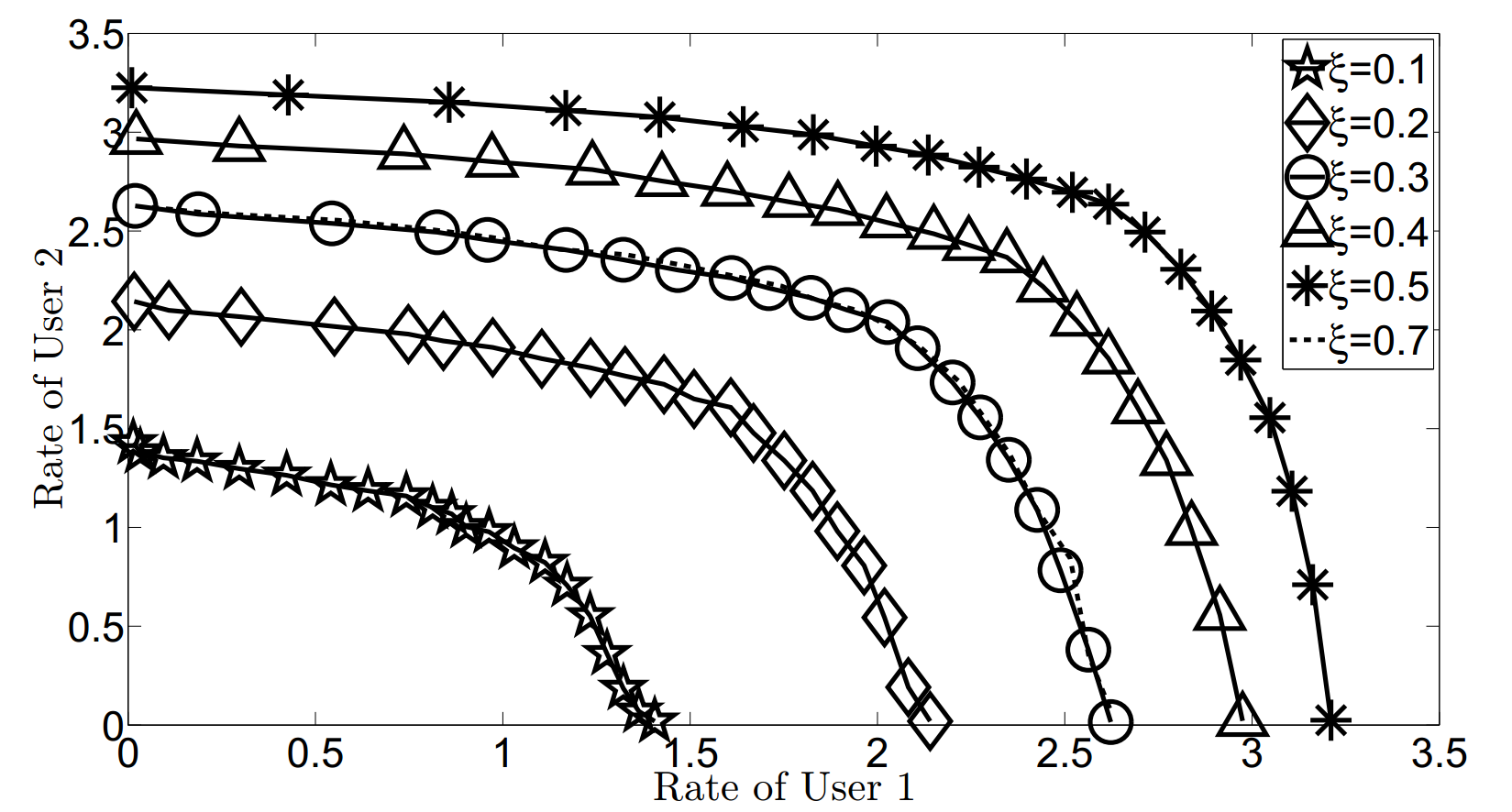}
                                  \caption{Rate region boundary, $R_{\textit{ZF}}^{\textit{Bd}}(\bm H, P_0/\sigma, \xi)$ for different values of dimming target, $\xi$.}
                                  \label{fig:rzfbdwithxi}
                                  \end{figure}      
In this section, we present numerical results in
support of the results reported in previous sections.
For all numerical results we consider an indoor office
room environment where the room is $5$~m $\times~5$~m and its
height is 3 m. The two LEDs are attached to the
ceiling and the two PDs (users) are placed
at a height of 50 cm from the floor of the room. The
two LEDs and the PDs lie in a plane perpendicular
to the floor of the plane. The LEDs are placed
60 cm apart and the ratio $\frac{P_0}{\sigma}$ is
fixed to 70 dB. The channel gains are
modeled for an indoor line of sight (LOS) channel. The other parameters used for simulation are given in Table \ref{tab:table1}. All these parameters and the channel model are taken from prior work \cite{barry_infrared,channel_model_komine,perLEDmu-miso,mu-misosumrate_max}. \par
In Fig.~\ref{fig:rzfbdwithxi}, for a LED separation of 0.6 m and PD (user) separation of 4 m such that the placement of both the LEDs and the PDs is symmetric\footnote{Both the line segment joining the two users and the line segment joining the two LEDs have the same perpendicular bisector.}, we plot the proposed rate region boundary $R_{\textit{ZF}}^{\textit{BD}}(\bm H , P_0/\sigma, \xi)$ (see \ref{eq:bd_rate_region}), for $\xi = 0.1,0.2,0.3,0.4,0.5,0.7.$ \par 
For a given $\xi$, it is observed that the boundary is indeed Pareto-optimal  as is stated in Lemma~\ref{lem:pareto}.  We also observe that as $\xi$ increases from $\xi=0.1$ to $\xi=0.5$, the rate region expands and then it shrinks with further increase in $\xi$ from $\xi =0.5$ onwards to $\xi =1$. We have also observed that rate region boundary is same for both $\xi = 0.3$ and $\xi=1-0.3=0.7$ as is stated in Result~\ref{res:bdsym} (see the dotted line and the solid line marked with circle in Fig.~\ref{fig:rzfbdwithxi}). It is also observed that $\xi=1/2$ gives us the largest rate region as is stated in Theorem~\ref{the:maxathalf}. The expansion/shrinking of the rate region with changing $\xi$ is explained in the following.\par 
For a given $\xi$, the points on the rate region boundary correspond to rectangles in the $u_1-u_2$ plane having their midpoints at $D(\bm H , \xi)$, i.e., on the diagonal $(\bm h_1+\bm h_2)$ and at a distance of $\xi ||\bm h_1+\bm h_2||$ from the origin. As $\xi$ increases, the midpoint of the rectangles move away from the origin and towards the interior of the parallelogram 
$R_{//}(\bm H)$. This allows us to fit bigger rectangles and hence the rate region expands. As $\xi$ is increased beyond $\xi=0.5$  the midpoint of the rectangles moves towards the other end of the diagonal $(\bm h_1+\bm h_2)$ and hence the size of the rectangles reduces thereby shrinking the rate region.\par
\begin{figure}[t]
                                 \centering
                                 \includegraphics[width=9cm,height=6cm]{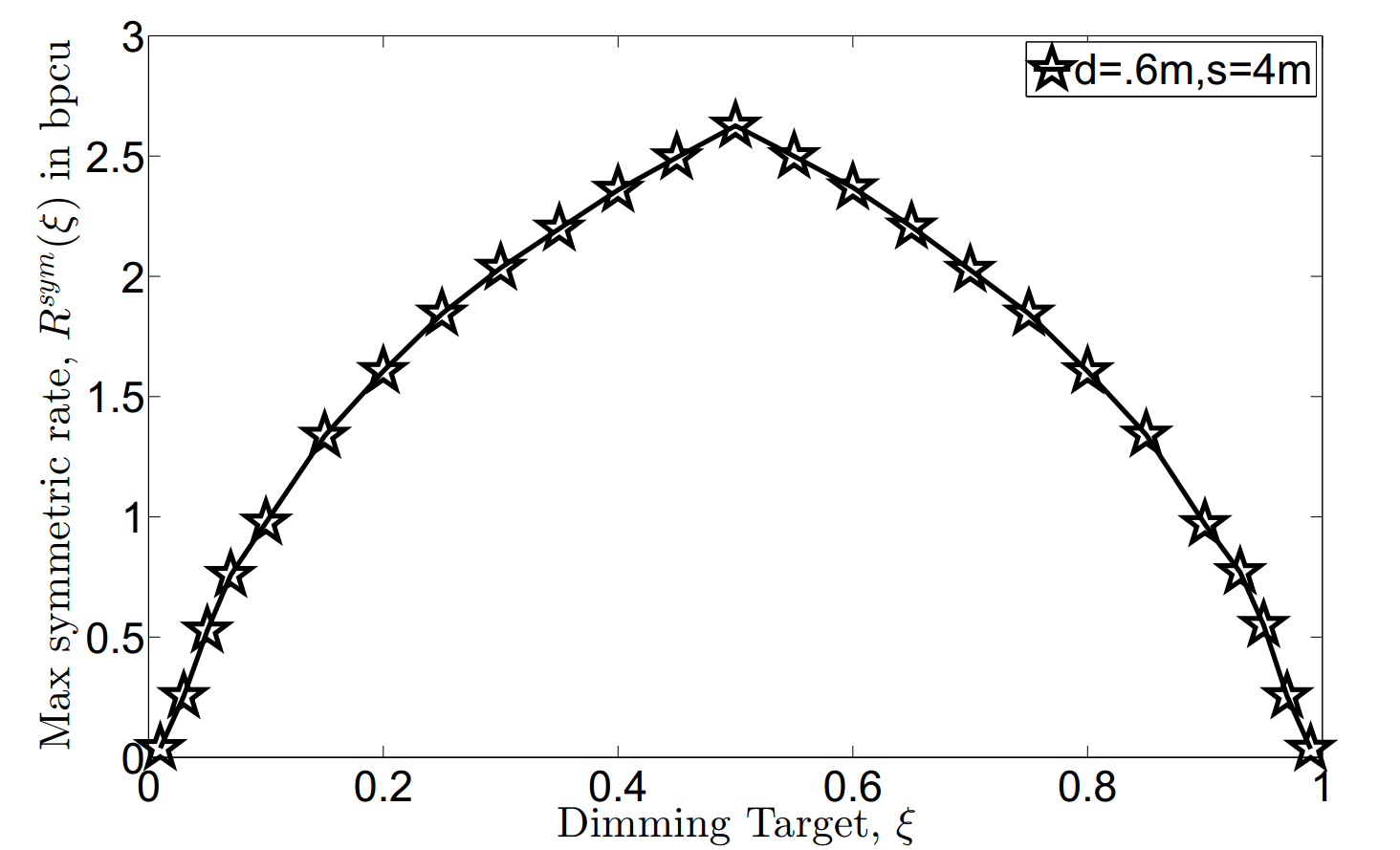}
                                 \caption{Plot between Maximum Symmetric Rate and dimming Target, $\xi$.}
                                 \label{fig:r_sym_max_xi}
                                 \end{figure} 
In Fig.~\ref{fig:r_sym_max_xi}, for a fixed user separation of $s = 4$ m, an LED separation of $d = 60$ cm and symmetric placement of LEDs and PDs,  we plot the maximum achievable symmetric rate $R^{sym}(\xi) \Define R_{max}^{\alpha =1}(\xi)$ as a function of varying $\xi \in [0 , 1]$. We numerically find this operating point by considering all possible points in the $R_1$-$R_2$ plane which lie in the achievable rate region and also lie on the line $R_2 = R_1$. Then among all these possible points we choose the one which has the largest component along the $R_1$ axis. From the figure it is observed that the variation in the maximum symmetric rate with 
 change in the dimming target $\xi$ is small when $\xi$ is around $1/2$, as compared to when $\min(\xi,1 - \xi)$ is small. For example, when $\xi$ is reduced from $\xi=1/2$ to $\xi=0.4$ (i.e., $20\%$ reduction), the corresponding  maximum symmetric rate drops only by $11\%$. However when $\xi$ is reduced by $20\%$ from $\xi=0.07$ to $\xi = 0.056$, the maximum symmetric rate decreases by approximately $25\%$. From this it appears that the maximum symmetric rate is lesser sensitive to variations in the dimming target around $\xi=1/2$ as compared to variations around smaller values of $\xi$. It is also observed that symmetric rate $R^{sym}(\xi)$ is symmetric about $\xi=1/2$ as is stated in Result~\ref{res:rmaxalpha_sym_half} (symmetric rate is nothing but $R_{\max}^{\alpha}(\xi)$ for $\alpha =1$). \par
 \begin{figure}[t] 
                                  \centering
                                  \includegraphics[width=9cm,height=6cm]{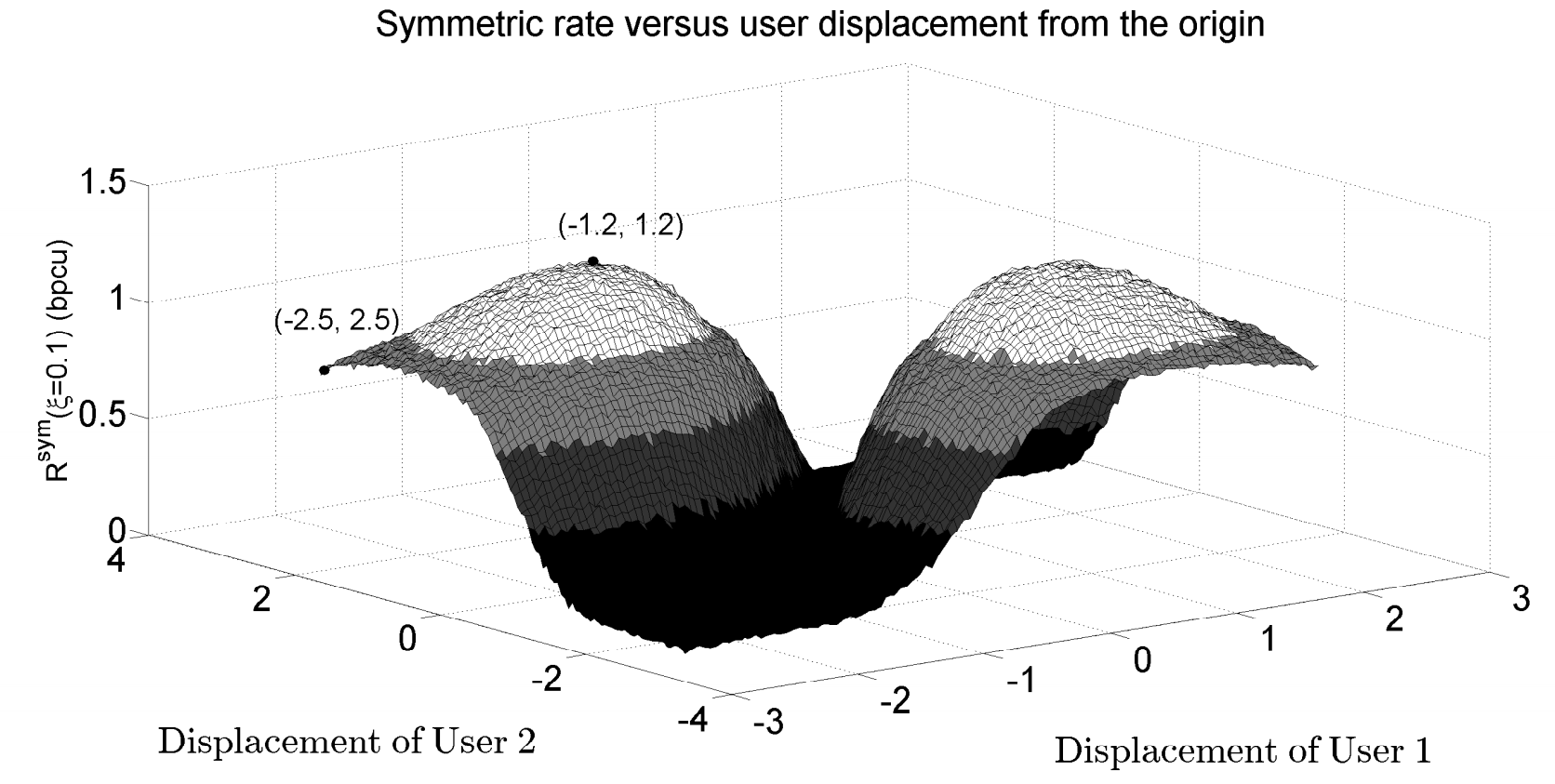}
                                  \caption{Maximum symmetric rate vs displacement of the two users from the origin.}
                                  \label{fig:3d}
                                  \end{figure} 
 We next study the variation in the maximum symmetric rate when the two users (PDs) are  moved along a line parallel to the ceiling (at a height of 50 cm above the floor) while the two LEDs are stationary and fixed to the ceiling with a fixed separation of 60 cm between them and the dimming target is also fixed to $\xi=0.1$. Further, the two LEDs and the two PDs are co-planar. In Fig.~\ref{fig:3d}, we plot the symmetric rate on the vertical axis as a function of the displacement\footnote{Displacement is nothing but the distance of the user from the origin (see Fig.~\ref{fig:setup}.} of the two users from the origin (origin is the point of intersection of the perpendicular bisector of the line joining the LEDs with the line joining the two users, see Fig .~\ref{fig:setup}). In Fig.~\ref{fig:3d} a positive displacement implies that the user PD is located on the right side of the origin and vice versa (see Fig.~\ref{fig:setup}). \par
 \begin{figure}[t] 
                                  \centering
                                  \includegraphics[width=9cm,height=6cm]{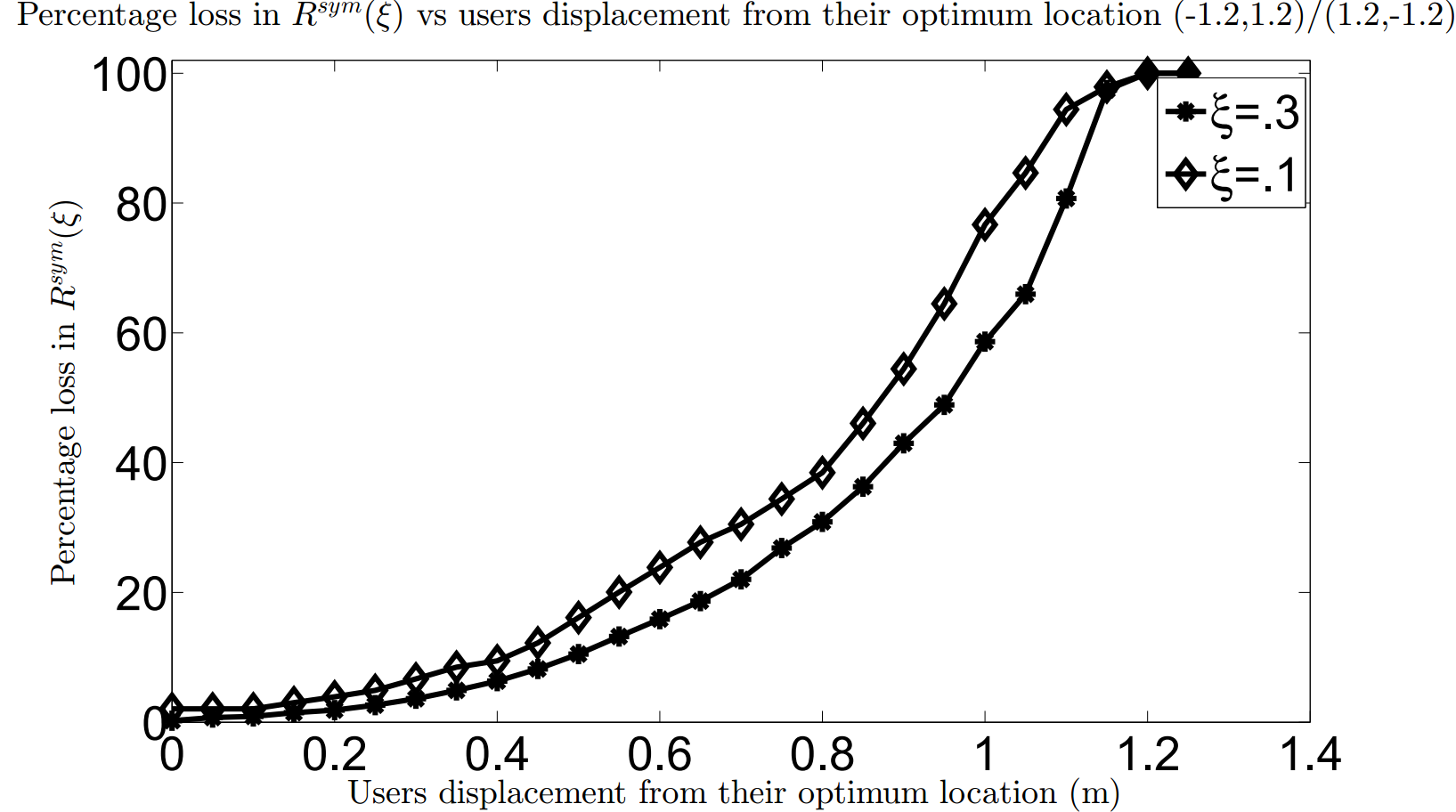}
                                  \caption{Plot between percentage loss in $R^{sym}(\xi)$ and users displacement from their optimum location.}
                                  \label{fig:percent_loss}
                                  \end{figure}       
 It is observed that the maximum symmetric rate is almost zero if the displacement of both the users is same, i.e., the two users are almost co-located. In the figure this is represented by the dark black region. This is expected since in that case the channels to the users is also the same and hence the performance of the ZF precoder degrades. From the figure we observe that starting with both the users at the origin, as user~2 moves towards the right and User~1 moves towards left the maximum symmetric rate increases sharply (in the figure the colour changes from dark black to light black to gray to white as the displacement vector moves from $(0,0)$ to $(-1.2, 1.2)$). This happens because as the users move away from each other, their channels become distinct i.e., the angles between the vectors $\bm h_1$ and $\bm h_2$ increases and hence the area of $R_{//}(\bm H)$ increases. This results in an increase in the largest sized square that can be fit into $R_{//}(\bm H)$ with center at $D(\bm H,\xi)$. This then implies that the maximum symmetric rate would also increase (see the discussion in Section~\ref{subsec:rsym} for the correspondence between the largest square and the maximum symmetric rate). With further increase in the separation between the two users the angular separation between the channel vectors does not increase as sharply as before. At the same time,  due to increased path loss from the LEDs to the users, the area of $R_{//}(\bm H)$ starts decreasing which results in the decrease in the maximum symmetric rate. This can be seen in the figure, as the colour changes back from white to gray, as we move from the displacement vector $(-1.2, 1.2)$ to $(-2.5,2.5)$. This shows that the maximum symmetric rate is dependent on the location of the users and therefore there is an \textit{optimal location}\footnote{By the optimum user location we mean the displacement vector of the users at which we get maximum $R^{sym}(\xi)$.} for both the users which results in the highest symmetric rate. In Fig.~\ref{fig:3d} the optimum location is $(-1.2,1.2)$, or $(1.2,-1.2)$. \par
 
 Next in Fig.~\ref{fig:percent_loss}, for a fixed LED separation of 60 cm we plot the percentage loss in the maximum symmetric rate $R^{sym}(\xi)$ (w.r.t. the symmetric rate at the optimum location) with the users' displacement from their optimum location for two different values of $\xi = 0.1,0.3$.\par
 It is observed that the percentage loss increases with increasing displacement of the PDs from their optimal location. Further, the increase in the percentage loss is small when the displacement is small as compared to when the displacement is large. For example, with $\xi = 0.3$, the percentage loss increases only by $6\%$ as the displacement increases from 0 cm to 40 cm. However with a further increase in displacement from 40 cm to 80 cm, the percentage loss increases sharply from $6\%$ to $30\%$. A similar behavior is also observed with $\xi = 0.1$, though for a given displacement the loss is greater when $\xi = 0.1$ as compared to when $\xi=0.3$.  A practical application of this study could be in defining \textit{coverage zones} for the PDs, i.e., the maximum allowable displacement for a fixed desired upper limit on the percentage loss. For example, in the current setup with $\xi =0.3$, for a $20\%$ upper limit on the percentage loss, the maximum allowable displacement is roughly 70 cm. It therefore appears that indoor VLC systems allow for a lot of flexibility in the movement of the user terminals without significant loss in the information rate.
 
 \section{Conclusion}
 We have proposed an achievable rate region for the  $2\times 2$ MU-MISO broadcast VLC channel under per-LED peak power constraint and dimming control. 
 The boundary of the proposed rate region has been analytically characterized. 
 We propose a novel transceiver architecture to implement such systems. Interestingly, the design of encoder/codebook is independent of the dimming target, which reduces the complexity of the transceiver. Work done in this paper reveals that, in an indoor setting, the two users have enough mobility around their optimal placement without sacrificing their information rates. Our work can also be applied to a 2-D setting, where the users are allowed to move in a plane rather than being restricted to a line.
 
     \begin{appendices}   
     
       \section{Proof of Proposition 1}\label{ap:prop1}
       \begin{proof}
       Under the condition in (\ref{eq:dethlesszero}), to find $L_1^{\text{max}}(\xi)$ we need to consider three scenarios that cover all geometrically possible parallelograms $R_{//}(\bm H)$: (a) ($h_{11} < h_{12}$ and $h_{21} > h_{22})$; (b) $(h_{21} \leq h_{22})$; and (c) $(h_{12} \leq h_{11}$ and $h_{21}>h_{22})$. \par
     
     For a given dimming target, $\xi$, let $L_3$ denote the length of the longest line segment  parallel to the $u_1$-axis lying completely inside $R_{//}(\bm H)$ and whose midpoint coincides with the point $D(\bm H, \xi)$ ($D(\bm H, \xi)$ is defined in (\ref{eq:tip})). For any rectangle $Rect(L_1, L_2, D(\bm H, \xi)) \subset R_{//}(\bm H)$, its side along the $u_1$ axis is a line segment inside $R_{//}(\bm H)$. 
     From the definition of $L_3$, it follows that $L_1 \leq L_3$ for any rectangle $Rect(L_1,L_2,D(\bm H,\xi)) \subset R_{//}(\bm H)$. Additionally, the longest line segment of length $L_3$ corresponds to a rectangle $Rect(L_3, L_2 =0,D(\bm H, \xi)) \subset R_{//}(\bm H)$. Hence, it is clear that $L_1^{\max}(\xi) = L_3$, i.e.
     \begin{align}
     L_1^{\text{max}}(\xi)&\triangleq \underset{\underset{Rect(L_1,L_2, D(\bm H, \xi))\subset R_{//}(\bm H)}{L_1 \geq 0, L_2 \geq 0 }}{\max}L_1 \nonumber \\
     &=\max_{\{L_1 > 0  \vert  Rect(L_1, L_2 =0, D(\bm H,\xi)) \subset R_{//}(\bm H)\}}   L_1
     \end{align}
     In the following, we firstly evaluate the expression for $L_1^{\max}(\xi)$ for scenario (a), i.e., when the channel gains satisfy ($h_{11} < h_{12}$ and $h_{21} > h_{22})$. Towards this end, we partition $R_{//}(\bm H)$ into three regions, Region~$i , i=1,2,3,$ as is shown in Fig.~\ref{fig:l1_max}. We now derive an expression for $L_1^{\max}(\xi)$ depending upon the region where $D(\bm H,\xi)$ lies. In Fig.~\ref{fig:l1_max} we denote $D(\bm H, \xi)$ by the point $P$ if $D(\bm H,\xi)$ lies in Region 1, by the point $Q$ if $D(\bm H,\xi)$ lies in Region 2 and by the point $S$ if $D(\bm H,\xi)$ lies in Region 3. Next, we compute $L_1^{\max}(\xi)$ when $D(\bm H, \xi)$ lies in $\text{Region}~1$.\\
      \noindent \textbf{Computation of $L_1^{\max}(\xi)$ when $P=D(\bm H, \xi) \in \text{Region}~1$ : }\vspace{1mm} 
     \hspace{1mm} The point $D(\bm H, \xi)$ belongs to Region 1 if and only if
     \begin{equation}\label{eq:prop1EQ0} 0 \leq OP \leq OT, \end{equation} where the point $T$ denote point of intersection of the diagonal $OB$ and $CC^\prime$. Further, the line $CC^\prime$ is the line parallel to the $u_1$-axis. Next, by looking at the right angle triangle $OT_1T$ in Fig.~\ref{fig:l1_max}, it follows that $OT = TT_1/ \sin\gamma$, where $\gamma$ denotes inclination of the diagonal, $OB$, of the parallelogram $R_{//}(\bm H)$ from the $u_1$-axis. Since, from Fig.~\ref{fig:l1_max}, $TT_1 = h_{22}$, and $\sin\gamma = (h_{21}+h_{22})/OB$, it follows that
     \begin{equation}\label{eq:prop1EQ1} 
     OT = \frac{h_{22}~OB}{h_{21}+h_{22}}
     \end{equation}
             
     \begin{figure}[!t]
            \centering
            \includegraphics[scale=.3]{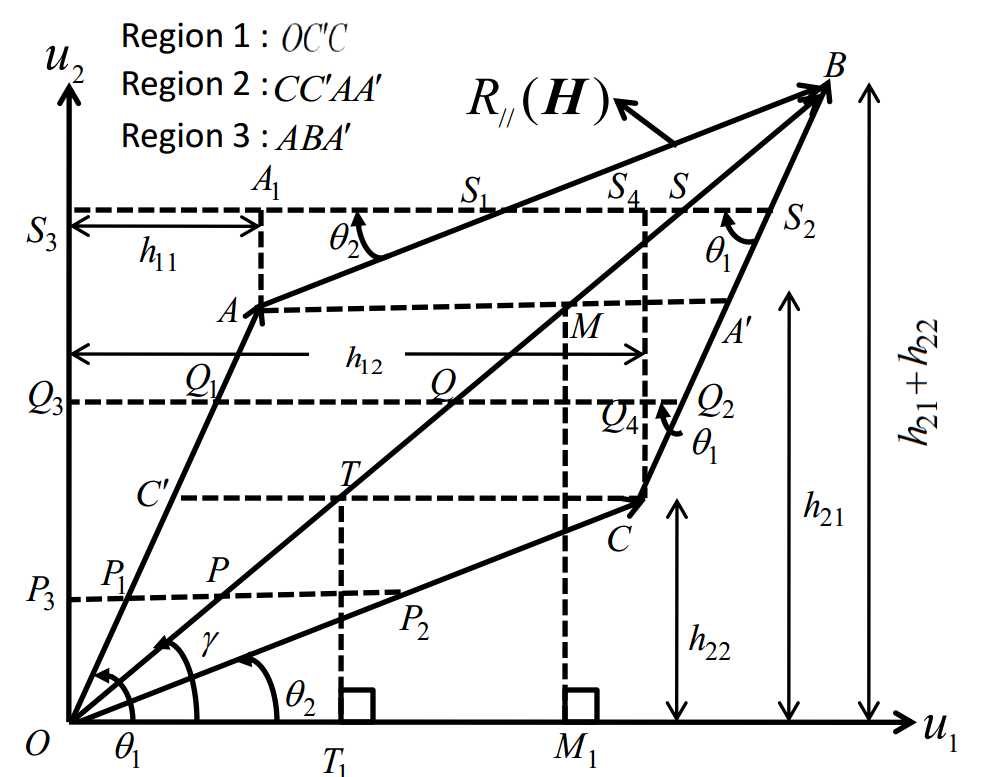}
            \caption{Partition of the parallelogram $OABC$ $\triangleq R_{//}(\bm H)$ into three different regions for the scenario $(h_{11} < h_{12}~\text{and}~ h_{21} > h_{22})$. Note that $AA^{\prime}$ and $CC^{\prime}$ are both parallel to the $u_1$ axis.}
            \label{fig:l1_max} \vspace{-3mm}
            \end{figure}
     
     Since the point $P$ is nothing but the point $D(\bm H, \xi)$, from (\ref{eq:tip}), it follows that $OP = \xi~OB$. Using $OP = \xi~OB$ and~(\ref{eq:prop1EQ1}) in (\ref{eq:prop1EQ0}) we have that $D(\bm H, \xi) \in \text{Region}~1$ if and only if 
          \begin{equation}\label{eq:prop1EQ2} 0\leq \xi \leq \frac{h_{22}}{h_{21}+h_{22}}, \end{equation}
          For all such values of the dimming target, $\xi$, satisfying (\ref{eq:prop1EQ2}), it follows that $D(\bm H, \xi) \in \text{Region}~1.$ Next, we evaluate $L_1^{\max}(\xi)$ when $0\leq \xi \leq {h_{22}}/(h_{21}+h_{22})$.\par
          Since $L_1^{\max}(\xi)$ is the length of the line segment parallel to the $u_1$-axis 
          having its midpoint at point $P$ and lying completely inside $R_{//}(\bm H)$.
          it follows that
          \begin{equation}
          \label{eq:propo1EQ3}
          L_1^{\max}(\xi) = 2~\min (PP_1, PP_2),  \end{equation} where both the line segments $PP_1$ and $PP_2$ are parallel to the $u_1$-axis. Further, $P_1$ lies on the line $OA$ whereas $P_2$ lies on the line $OC$ as shown in Fig.~\ref{fig:l1_max}. Next, we evaluate $PP_1$ and $PP_2$. To this end, from Fig.~\ref{fig:l1_max}, we compute the length of the line segment $PP_1$ as follows 
         \begin{align}\label{eq:propo1EQ4}
          PP_1 &= PP_3 -  P_1P_3 \nonumber \\
          &\overset{(a)}{=} \xi (h_{11}+h_{12}) - OP_3/{\tan\theta_1} \nonumber \\
          &\overset{(b)}{=} \xi (h_{11}+h_{12}) - \xi (h_{21}+h_{22})h_{11}/h_{21} \nonumber \\
          &=  \xi (h_{12}h_{21}-h_{11}h_{22})/h_{21} = -\xi~det(\bm H)/h_{21},
          \end{align} 
          where step (a) follows from the fact that, $PP_3$ is equal to the co-ordinate of the point $D(\bm H, \xi)$ along the $u_1$-axis and therefore from (\ref{eq:tip}), we have $PP_3 = \xi (h_{11}+h_{12})$. In step (a) we have also used the fact that since $OP_1P_3$ is a right angle triangle having $\angle OP_3P_1 = \theta_1$. Hence, it follows that $P_1P_3 = OP_3/{\tan\theta_1}$. Step (b) also follows from two facts. Firstly, $OP_3$ is equal to the co-ordinate of the point $D(\bm H, \xi)$ along the $u_2$-axis and therefore from (\ref{eq:tip}), we have that $OP_3 = \xi (h_{21}+h_{22})$ and secondly, from (\ref{eq:tantheta1}), we know that $\tan{\theta_1} = h_{21}/h_{11}$.
          Similarly from Fig.~\ref{fig:l1_max}, we calculate the length of $PP_2$ as follows
          \begin{align} \label{eq:propo1EQ5}
               PP_2 &= P_2P_3 -  PP_3 \nonumber \\
               &\overset{(a)}{=} OP_3/{\tan\theta_2} - \xi (h_{11}+h_{12})  \nonumber \\
               &\overset{(b)}{=} \xi (h_{21}+h_{22})h_{12}/h_{22} - \xi (h_{11}+h_{12})  \nonumber \\
               &=  \xi (h_{12}h_{21}-h_{11}h_{22})/h_{22} = -\xi~det(\bm H)/h_{22},
               \end{align}
             where step (a) follows from the fact that, $PP_3$ is equal to the co-ordinate of the point $D(\bm H, \xi)$ along the $u_1$-axis and therefore from (\ref{eq:tip}), we have that $PP_3 = \xi (h_{11}+h_{12})$. In step (a) we have also used the fact that $OP_3P_2$ is a right angle triangle having $\angle OP_2P_3 = \theta_2$. Hence, it follows that $P_2P_3 = OP_3/{\tan\theta_2}$. Step (b) follows from two facts. Firstly, $OP_3$ is equal to the co-ordinate of the point $D(\bm H, \xi)$ along the $u_2$-axis and therefore from (\ref{eq:tip}), we have that $OP_3 = \xi (h_{21}+h_{22})$ and secondly,  $\tan{\theta_2} = h_{22}/h_{12}$.\\
             \hspace{2mm} Using (\ref{eq:propo1EQ4}) and (\ref{eq:propo1EQ5}) in (\ref{eq:propo1EQ3}) we see that when $0\leq \xi \leq {h_{22}}/({h_{21}+h_{22}}) = {\min(h_{21},h_{22})}/(h_{21}+h_{22})$ (since $h_{21} > h_{22}$ in scenario (a)), we have
     \begin{align}
     L_1^{\max}(\xi) &= -2~\xi~det(\bm H)~\min\left( \frac{1}{h_{21}},\frac{1}{h_{22}}\right) =\frac{ -2~\xi~det(\bm H)}{\max(h_{22},h_{21})}.
     \end{align}
     \vspace{1mm}\\
     \noindent  \textbf{Computation of $L_1^{\max}(\xi)$ when $Q = D(\bm H, \xi) \in \text{Region}~2$ : }\vspace{1mm}
     \hspace{1mm} Point $Q=D(\bm H, \xi)$ lies in Region~2=$CC^{\prime}AA^{\prime}$ if and only if \begin{equation}\label{eq:Dregion2} OT \leq OQ \leq OM, \end{equation} where $M$ is the point of intersection of the line segment $AA^{\prime}$ and the diagonal $OB$ (see Fig.~\ref{fig:l1_max}). In the following we firstly show that $D(\bm H, \xi) \in \text{Region}~2$ if and only if
       \begin{align}\label{eq:xiregion2} 
       \frac{h_{22}}{h_{21}+h_{22}} ~\leq &~\xi \leq \frac{h_{21}}{h_{21}+h_{22}}.
        \end{align}
         Towards this end, we firstly derive an expression for $OM$. From the right angle triangle $OM_1M$ in Fig.~\ref{fig:l1_max}, we know that $OM = MM_1/\sin\gamma$ and since $MM_1 = h_{21}$, $\sin\gamma = (h_{21}+h_{22})/OB$. We have
        \begin{equation}
        \label{eq:om}
        OM = \frac{h_{21}~OB}{h_{21}+h_{22}}.
        \end{equation} Since the point $Q$ is nothing but the point $D(\bm H, \xi)$, from (\ref{eq:tip}), we have  $OQ =  \xi OB$ and from (\ref{eq:prop1EQ1}), it follows that $OT = \frac{h_{22}~OB}{h_{21}+h_{22}}$. In (\ref{eq:Dregion2}), we substitute $OQ$ by $\xi OB$,  $OM$ by the R.H.S in (\ref{eq:om}) and $OT$ by $\frac{h_{22}~OB}{h_{21}+h_{22}}$ to get    
        \begin{align}\label{eq:xiregion2}
        \frac{h_{22}~OB}{(h_{21}+h_{22})} ~\leq ~\xi ~OB \leq \frac{h_{21} ~OB}{(h_{21}+h_{22})} \nonumber \\
        \frac{h_{22} }{(h_{21}+h_{22})} ~\leq ~\xi ~ \leq \frac{h_{21}}{(h_{21}+h_{22})}
        \end{align}
          For all such values of the dimming target, $\xi$, satisfying (\ref{eq:xiregion2}), it follows that $D(\bm H, \xi) \in \text{Region}~2.$ Next, we evaluate $L_1^{\max}(\xi)$ when ${h_{22}}/(h_{21}+h_{22}) ~\leq~ \xi ~\leq~ {h_{21}}/(h_{21}+h_{22})$. For this scenario, from Fig.~\ref{fig:l1_max} we see that \vspace{1mm}
        \small \begin{align}\label{eq:propo1EQ7}
         L_1^{\max}(\xi) = 2~\min(QQ_1, QQ_2), \end{align}\normalsize
          where construction of $QQ_1$ and $QQ_2$ is similar to the construction of $PP_1$ and $PP_2$ (see Fig.~\ref{fig:l1_max}), except the fact that $Q_2$ lies on $CB$ instead of $OC$. Next, we evaluate $QQ_1$ and $QQ_2$. To this end, using the similar steps as for the evaluation of $PP_1$ (see (\ref{eq:propo1EQ4})) we have,
            \small\begin{align}\label{eq:propo1EQ8}
                QQ_1 = QQ_3 -  Q_1Q_3 
                 = \frac{-\xi~det(\bm H)}{h_{21}}.
                \end{align} \normalsize
         However, evaluation of $QQ_2$ is not the same as evaluation of $PP_2$, as the point $Q_2$ lies on the line segment $CB$ whereas the point $P_2$ lies on the line $OC$. Towards this end, using Fig.~\ref{fig:l1_max}, we evaluate $QQ_2$ as follows
        \begin{align}
                    QQ_2 &= Q_3Q_2 - Q_3Q \nonumber \\               
                     &= Q_3Q_4 + Q_4Q_2 - Q_3Q \nonumber \\
                     &= h_{12} + \frac{CQ_4}{\tan\theta_1} - \xi (h_{11} + h_{12}) \nonumber \end{align}
                     \begin{align}\label{eq:propo1EQ9}
                     ~&= h_{12} + \frac{\xi(h_{21} + h_{22}) - h_{22}}{(h_{21}/h_{11})} - \xi (h_{11} + h_{12}) \nonumber \\               
                     ~&= \frac{-~det(\bm H)~(1-\xi)}{h_{21}} 
                    \end{align}\normalsize
       Using (\ref{eq:propo1EQ8}) and (\ref{eq:propo1EQ9}) in (\ref{eq:propo1EQ7}), we see that when ${h_{22}}/(h_{21}+h_{22}) \leq \xi \leq {h_{21}}/(h_{21}+h_{22})$, i.e. $\min(h_{21},h_{22})/(h_{21}+h_{22}) \leq \xi \leq \max(h_{21}, h_{22})/(h_{21}+h_{22})$ (since $h_{21} > h_{22}$ in scenario (a)), we have
         \begin{equation}
         L_1^{\max}(\xi) = \frac{-2~det(\bm H)~\min\left( \xi,(1-\xi) \right)}{ h_{21}} \end{equation} 
         \begin{equation} \overset{(a)}{=} -2~det(\bm H)~\frac{\min(\xi,(1-\xi))}{\max(h_{21}, h_{22})},
         \end{equation}\normalsize             
          where step (a) follows from the fact that $h_{21} = \max(h_{21}, h_{22})$, since for scenario (a), we know that $h_{21}  > h_{22}. $
                  
        \noindent  \textbf{Computation of $L_1^{\max}(\xi)$ when $S = D(\bm H, \xi) \in \text{Region}~3$:}\par
                Point $S=D(\bm H, \xi)$ lies in Region~3=$AA^{\prime}B$ if and only if $OM \leq OS \leq OB$. Using (\ref{eq:om}) $\left(\frac{OM}{OB} = \frac{h_{21}}{h_{21}+h_{22}} \right)$ and the fact that $OS = \xi OB$ (from (\ref{eq:tip})), we have
                \begin{align} \label{eq:xiregion3}
                 {h_{21}}/({h_{21}+h_{22}}) &\leq \xi \leq 1.
                \end{align}
             Next, we evaluate $L_1^{\max}(\xi)$ when ${h_{21}}/({h_{21}+h_{22}}) \leq \xi \leq 1$. From Fig.~\ref{fig:l1_max} it is clear that when $S = D(\bm H, \xi) \in \text{Region}~3$ then $L_1^{\max}(\xi)$ is given by \par
            \small\begin{align}\label{eq:propo1EQ10}
              L_1^{\max}(\xi) &= 2~\min(SS_1, SS_2),
               \end{align}\normalsize
               where $S_1$ and $S_2$ are the intersections of the straight line parallel to the $u_1$ axis passing through $S$, with the line segment $AB$ and $CB$ respectively. Next, we evaluate $SS_1$ and $SS_2$. To this end, using similar steps as for the evaluation of $QQ_2$ (see (\ref{eq:propo1EQ9})) we have,
                \small \begin{align}\label{eq:propo1EQ12}
                                     SS_2 &= S_3S_2 -S_3S \nonumber \\               
                                      &= S_3S_4 + S_4S_2 -S_3S \nonumber \\
                                      &= \frac{-~det(\bm H)~(1-\xi)}{h_{21}} 
                                     \end{align}\normalsize
              However, evaluation of $SS_1$ is not same as evaluation of $QQ_1$, as the point $S_1$ lies on the line segment $AB$ whereas the point $Q_1$ lies on the line $OA$. Towards the evaluation of $SS_1$, using Fig.~\ref{fig:l1_max}, we have
           \small\begin{align}\label{eq:propo1EQ11}
                         SS_1 &= SS_3 - S_1S_3 \nonumber \\               
                         &= \xi(h_{11}+h_{12}) - (S_3A_1 + A_1S_1) \nonumber \\
                         &= \xi (h_{11} + h_{12}) - \left(h_{11} +\frac {AA_1}{\tan\theta_2}\right)\nonumber \\
                         &=\xi (h_{11} + h_{12}) - \left(h_{11} +\frac{\xi(h_{21}+h_{22})- h_{21}}{(h_{22}/h_{12})}\right)\nonumber \\
                          &= \frac{-~det(\bm H)~(1-\xi)}{h_{22}} 
                         \end{align}\normalsize
                        
            Using (\ref{eq:propo1EQ12}) and (\ref{eq:propo1EQ11}) in (\ref{eq:propo1EQ10}), we see that when ${h_{21}}/({h_{21}+h_{22}}) \leq \xi \leq 1 $, i.e. ${\max(h_{21},h_{22})}/({h_{21}+h_{22}}) \leq \xi \leq 1$ (since $h_{21} > h_{22}$ in scenario (a)), we have
              \begin{align}
              L_1^{\max}(\xi) &= -2~det(\bm H) (1-\xi)\min\left(\frac{1}{h_{21}}, \frac{1}{h_{22}}\right) \nonumber \\ &\overset{(a)}{=} \frac{-2~det(\bm H) (1-\xi)}{\max(h_{21}, h_{22})},
              \end{align}            
               where step (a) follows from the fact that $h_{21} = \max(h_{21}, h_{22})$, since for scenario (a), we know that $h_{21}  > h_{22}. $\\   Therefore for the scenario (a), we have the expression of $L_1^{\max}(\xi)$ as follows
                  \begin{align}
                  \label{eq:zero}
                   L_1^{\text{max}}(\xi) &= \left\{\hspace{-3mm}\begin{array}{ll}  
                  \frac {-2 \xi det(\bm H)}{ \max(h_{21}, h_{22})}, & 0 \leq \xi \leq  \eta_1 \triangleq \frac{\min(h_{21}, h_{22})}{(h_{21} + h_{22})}\\ \\
                      \frac {-2det(\bm H) \min(\xi, (1-\xi))} {\max(h_{21}, h_{22})}, & \eta_1 \leq \xi \leq  \eta_2 \triangleq \frac{\max(h_{21}, h_{22})}{(h_{21} + h_{22})}  \\ \\
                       \frac {-2(1-\xi) det(\bm H)}{ \max(h_{21}, h_{22})},&
                                \eta_2 \leq \xi  \leq 1.  \\  
                     \end{array} \right.
                  \end{align} 
                 Since 
                 \begin{equation}\label{eq:one} 
                 \eta_1 \triangleq \frac{\min(h_{21}, h_{22})}{(h_{21} + h_{22})}  \leq 1/2\end{equation} and 
                 \begin{equation}\label{eq:two} 
                 \eta_2 \triangleq \frac{\max(h_{21}, h_{22})}{(h_{21} + h_{22})} \geq 1/2, \end{equation} where the above two inequalities follows from the simple mathematical manipulations, and therefore  the expression in (\ref{eq:zero}) can be  further simplified. To this end, we consider two cases based on the  values of $\xi$.\par
                 \textit{Case (a): $0 \leq \xi \leq 1/2$}\par
                  \noindent
                  For this case we know that $\xi \leq (1-\xi)$ and hence 
                  \begin{equation}
                  \label{eq:three}
                 \min(\xi,(1-\xi)) = \xi 
                  \end{equation}              
                  Since from (\ref{eq:one}) and (\ref{eq:two}), we know that $\eta_1 \leq 1/2$ and $\eta_2 \geq 1/2$ and hence, for $0 \leq \xi \leq 1/2$ from
                  (\ref{eq:zero}) and (\ref{eq:three}) we have             
                 \begin{equation}\label{eq:five} L_1^{\text{max}}(\xi) =  \frac {-2 \xi det(\bm H)}{ \max(h_{21}, h_{22})} \end{equation}\par    
                  \textit{Case (b):  $ 1/2 \leq \xi \leq 1$}\par
                                \noindent For this case we know that $\xi \geq (1-\xi)$ and hence 
                              \begin{equation}
                               \label{eq:four}
                                 \min(\xi,(1-\xi)) = (1-\xi) 
                                 \end{equation}
                Since from (\ref{eq:one}) and (\ref{eq:two}), we know that $\eta_1 \leq 1/2$ and $\eta_2 \geq 1/2$ and hence, for $1/2 \leq \xi \leq 1$ from (\ref{eq:zero}) and (\ref{eq:four}) we have \begin{equation}\label{eq:six} L_1^{\text{max}}(\xi) =  \frac {-2 (1-\xi) det(\bm H)}{ \max(h_{21}, h_{22})}.\end{equation}
                  Therefore from (\ref{eq:five}) and (\ref{eq:six}) the final expression of $L_1^{\max}(\xi)$ for scenario (a) is as follows
                  \begin{align}
                       \label{eq:seven}
                        L_1^{\text{max}}(\xi)&= \left\{\hspace{-3mm}\begin{array}{ll}  
                       \frac {-2 \xi det(\bm H)}{ \max(h_{21}, h_{22})}, & 0 \leq \xi \leq  1/2 \vspace{1mm} \\
                                      \frac {-2(1-\xi) det(\bm H)}{ \max(h_{21}, h_{22})},& 1/2 \leq \xi  \leq 1  
                          \end{array} \right.
                       \end{align}\par

               \noindent Using similar arguments as for scenario (a), we evaluate $L_1^{\max}(\xi)$ for  \textbf{Scenario (b):} $(h_{21} \leq h_{22})$; and for \textbf{Scenario (c):} $(h_{12} \leq h_{11}$ and $h_{21}>h_{22})$ as follows. \par
                To this end, we first partition $R_{//}(\bm H)$ into three regions as shown in 
                Fig.~\ref{fig:proposition1scenariob}. Next, we denote $D(\bm H, \xi)$ by the point $P$ if $D(\bm H,\xi)$ lies in Region 1, by the point $Q$ if $D(\bm H,\xi)$ lies in Region 2 and by the point $S$ if $D(\bm H,\xi)$ lies in Region 3. \par 
              Using (\ref{eq:om}) and (\ref{eq:prop1EQ1}) we can also show that the point $D(\bm H, \xi)$ lies in $\text{Region}~1$ if and only if $0\leq \xi \leq  \frac{\min(h_{21}, h_{22})}{h_{21}+h_{22}}$, $D(\bm H, \xi)$ lies in $\text{Region}~2$ iff $\frac{\min(h_{21}, h_{22})}{h_{21}+h_{22}} \leq  \xi \leq \frac{\max(h_{21}, h_{22})}{h_{21}+h_{22}}$ and it lies in $\text{Region}~3$ iff $\frac{\max(h_{21}, h_{22})}{h_{21}+h_{22}} \leq  \xi \leq 1.$ Next, we evaluate $L_1^{\max}(\xi)$ when $D(\bm H, \xi)$ lies in Region~$i,~i=1,2,3$. \par            
                   Following similar steps as for scenario (a), from Fig.~               \ref{fig:proposition1scenariob} it follows that when $D(\bm H, \xi) \in \text{Region}~1$, i.e., when $0\leq \xi \leq \frac{\min(h_{21}, h_{22})}{(h_{21} + h_{22})}$
                  \begin{equation} 
                  L_1^{\max}(\xi) = \frac {-2 \xi det(\bm H)}{ \max(h_{21}, h_{22})}
                  \end{equation}   
                  Similarly, using Fig.~\ref{fig:proposition1scenariob} it can be shown that when $D(\bm H, \xi) \in \text{Region}~2$, i.e., when $\frac{\min(h_{21}, h_{22})}{(h_{21} + h_{22})} \leq \xi \leq \frac{\max(h_{21}, h_{22})}{(h_{21} + h_{22})}$
                  \begin{equation}
                 L_1^{\max}(\xi) = -2~det(\bm H) \frac{\min(\xi, (1-\xi))}{\max(h_{21}, h_{22})}. \end{equation}
                 Further, using Fig.~\ref{fig:proposition1scenariob} it can be shown that when $D(\bm H, \xi) \in \text{Region}~3$, i.e., when $\frac{\max(h_{21}, h_{22})}{(h_{21} + h_{22})} \leq \xi \leq 1$
                              \begin{equation}
                             L_1^{\max}(\xi) = \frac {-2(1-\xi) det(\bm H)}{ \max(h_{21}, h_{22})}. \end{equation}
                                          
       Following the  steps used to arrive at (\ref{eq:seven}) from (\ref{eq:zero}) in scenario(a), for scenario (b) and (c) also we get the same final expression for $L_1^{\max}(\xi)$ as in (\ref{eq:seven}). This completes the proof. \end{proof}
        \begin{figure}[t]
                                      \centering
                                      \includegraphics[width=9cm,height=4cm]{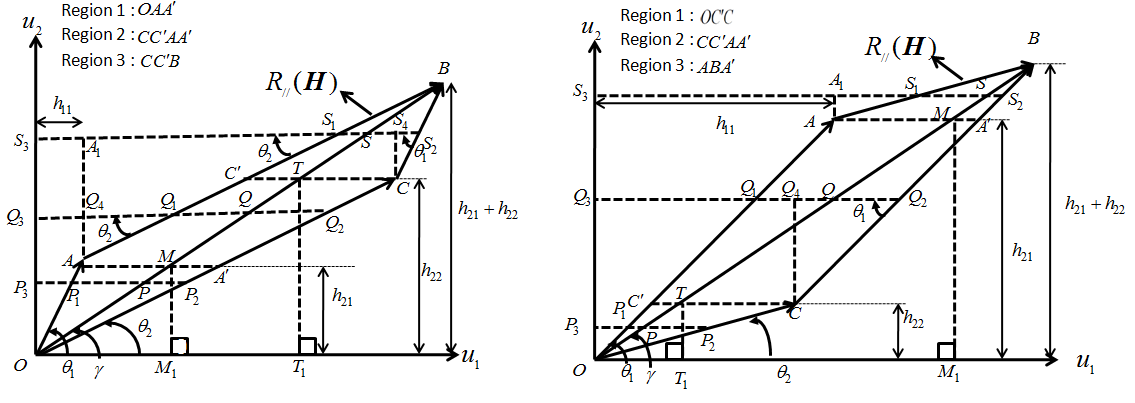}
                                      \caption{Partition of the parallelogram OABC $\triangleq R_{//}(\bm H)$ into three different regions for the scenario (b) $(h_{21} \leq h_{22})$; and scenario
                                            (c) $(h_{12} \leq h_{11}$ and $h_{21}>h_{22})$ (left to right).}
                                      \label{fig:proposition1scenariob}
                                      \end{figure}

       \section{Proof of Proposition 2}\label{ap:prop2}
         \begin{proof}

            Similar to proposition 1, for proving proposition 2, we consider
            three mutually exclusive scenarios. (a) $(h_{11} < h_{12}$ and 
            $h_{21} > h_{22})$; (b) $(h_{21} \leq h_{22})$; and 
            (c) $(h_{12} \leq h_{11}$ and $h_{21}>h_{22})$. Moreover,
            from the definition of LED~1 and LED~2, it follows that, 
            the channel matrix satisfies (\ref{eq:dethlesszero}), i.e. $det(\bm H) < 0.$
            For a fixed $(\bm H, P_0/\sigma, \xi)$, in the following, 
            for a given $L_1 = x, 0\leq x \leq L_1^{\max}(\xi)$, we derive the
            expression for the maximum $L_2$, (i.e., length of the side of the rectangle 
            along the $u_2$-axis (vertical length)), such that there exists a
            rectangle $Rect(L_1 = x, L_2, D(\bm H, \xi) ) \subset R_{//}(\bm H)$, 
            i.e.
            \begin{equation}
            L_2^{\xi}(x) \triangleq 
            \max_{\overset{L_2 \geq 0}{{Rect}(x,L_2,D(\bm H, \xi))\subset 
            \mathcal{R}_{//}(\bm H)}} L_2. \end{equation}                        
             To this end, for a fixed $\xi$,\footnote{Once we fix $\xi$, location of the point $D(\bm H, \xi)$
            gets fixed (see (\ref{eq:tip})).} and given $L_1  = x$, we construct 
            all such possible rectangles $Rect(L_1 = x, L_2, D(\bm H, \xi)) 
            \subset R_{//}(\bm H)$ and among them we choose the rectangle having
            the maximum possible vertical length. \par
            To get this rectangle we first construct a horizontal line segment of 
            the  length $x$ parallel to the $u_1$-axis such that its midpoint coincides with the point $D(\bm H, \xi)$.
             We denote this line segment by 
            $LINE(x,  D(\bm H, \xi))$, i.e.      
            \begin{equation}\label{eq:defline}
            LINE(x, D(\bm H, \xi)) \hspace{-1mm}\Define \hspace{-1mm}\{ v=(v_1,v_2) \in \mathbb{R}^2~|~ v_1 \in S_1, v_2 \in S_2\},
            \end{equation} where $S_1 \Define \{v_1 \in \mathbb{R}~|~|v_1 - \xi(h_{11} + h_{12}) | \leq (x/2)$ and $S_2 \Define \{v_2 \in \mathbb{R} ~|~v_2 = \xi (h_{21} + h_{22})\}.$
              Since $x \leq L_1^{\max}(\xi)$, from the definition of $L_1^{\max}(\xi)$ it follows that $LINE(x,D(\bm H, \xi))$ lies completely inside $R_{//}(\bm H)$.\par
            Given any horizontal line segment of length $x$ parallel to the $u_1$-axis, 
            any rectangle inside $R_{//} (\bm H)$ having this line segment 
            as one of its side  can be constructed in two possible ways, 
             either by extending it vertically downwards or extending it vertically upwards.\footnote{By extending a horizontal 
            line segment vertically downwards/upwards, we mean that we 
            create a rectangle by drawing two vertical lines from the
            end points of this horizontal line segment in the 
            downward/upward direction and then connecting the other two 
            end points of these two vertical lines to form a
            rectangle.} Subsequently, we shall refer to these construction
            methods as ``downward extension'' and ``upward extension''. \par
              
               \begin{figure}[t]
                                             \centering
                                             \includegraphics[scale=.34]{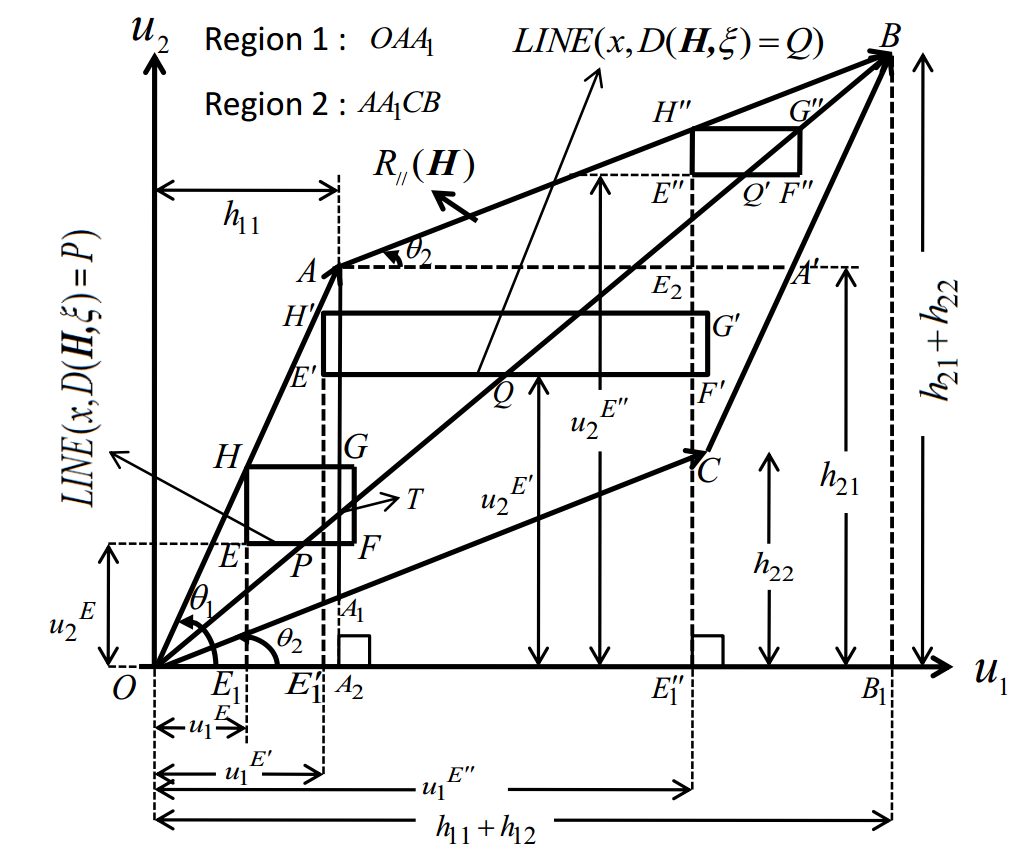}
                                             \caption{ Evaluation of $L_2^{up,\xi}(x)$ for Scenario (a) $(h_{11} < h_{12}~\text{and}~ h_{21} > h_{22})$.}
                                             \label{fig:proposition2L2up}
                                             \end{figure}

  Using this for a given $L_2 > 0$ we can construct a rectangle by extending 
       the line segment 
       $LINE(x,  D(\bm H, \xi))$
       vertically upwards. We denote this rectangle by 
       \begin{equation}
       \label{eq:defrup}
       Rect^{up}( x, L_2, D(\bm H, \xi)) \Define Rect(x, L_2, D(\bm H,\xi) + C_0),
       \end{equation} 
       where $C_0 \Define (0, {L_2}/{2})$. Let $L_2^{up,\xi}(x)$ denote the largest possible vertical length
       of all such rectangles which lie completely inside $R_{//}(\bm H)$ and are constructed by the upward extension of the line segment $LINE (x, D(\bm H, \xi))$
       , i.e.
       \begin{equation}\label{eq:defl2up}
        L_2^{up,\xi}(x)\triangleq
        \max_{\overset{L_2 \geq 0} {Rect^{up}(L_1 = x, L_2, D(\bm H, \xi)) \subset R_{//}(\bm H)}} L_2.
       \end{equation}
       Similarly, we construct any rectangle of vertical length $L_2 \geq 0$ by extending the line segment $LINE (x , D(\bm H, \xi))$ vertically downwards. 
       We denote this rectangle by 
      \begin{equation}\label{eq:defrdown}
      Rect^{down}(x, L_2, D(\bm H, \xi)) \Define Rect(x, L_2, D(\bm H,\xi) + C_1),
             \end{equation} 
       where $C_1 \Define (0,{-L_2}/{2})$. Let $L_2^{down,\xi}(x)$ denote the largest possible vertical length
       of all such rectangles which lie completely inside $R_{//}(\bm H)$ and are  constructed using the ``downward extension''
       method, i.e.
       \begin{equation}\label{eq:defl2down}
        L_2^{down,\xi}(x) \triangleq 
        \max_{\overset{L_2 \geq 0}{Rect^{down}(x, L_2, D(\bm H, \xi)) \subset R_{//}(\bm H)}} L_2.
       \end{equation} 
        $L_2^\xi(x)$  is the maximum possible vertical 
       length of any rectangle lying completely inside $R_{//}(\bm H)$ with horizontal side length 
       equal to $x$ and having its  mid point at 
       $D(\bm H,\xi)$, which is the midpoint of the line segment $LINE(x,  D(\bm H, \xi))$. Equivalently, such a maximal rectangle\footnote{\footnotesize{ By the maximal rectangle we mean, the rectangle with the maximum possible vertical length for a given horizontal length and which lies completely inside the rectangle $R_{//}(\bm H)$}.} must be symmetric about the line segment $LINE(x,  D(\bm H, \xi))$. Since such a maximal rectangle $Rect(x, L_2^\xi(x), D(\bm H, \xi))$ lies inside $R_{//}(\bm H)$, from (\ref{eq:defl2up}) and (\ref{eq:defl2down}) it follows that $$Rect(x, L_2^\xi(x), D(\bm H ,\xi)) \subset S_3 \cup S_4,$$ where $S_3 \Define  Rect^{up}(x, L_2^{up, \xi}(x), D(\bm H, \xi))$ and $S_4 \Define Rect^{down}(x, L_2^{down,\xi}(x), D(\bm H, \xi))$ and $S_3 \cap S_4 = LINE(x, D(\bm H, \xi))$. Further since the maximal rectangle $Rect(x, L_2^\xi(x), D(\bm H , \xi))$ has the maximum possible vertical length and is symmetric about $LINE(x, D(\bm H , \xi))$, it follows that $L_2^\xi(x)/2 =  L_2^{up, \xi}(x)$ if  $L_2^{up, \xi}(x)  \leq L_2^{down, \xi}$,  and $L_2^\xi(x)/2 = L_2^{down, \xi}(x)$ if  $L_2^{down, \xi}(x)  \leq L_2^{up, \xi}(x)$  , i.e.      
       \begin{equation}\label{l2asupdown}
       L_2^{\xi}(x)= 2 
       \min(L_2^{down,\xi}(x),L_2^{up,\xi}(x)).
       \end{equation}
       We next derive expressions for  $L_2^{up,\xi}(x)$ and 
       $L_2^{down,\xi}(x)$ for $0 \leq 
       x \leq L_1^{\max}(\xi)$ for a fixed $0\leq \xi \leq 1$. 
        We firstly consider scenario
       (a) $(h_{11} < h_{12}$~ and~
       $h_{21} > h_{22} )$.
       
       \subsection{Computation of $L_2^{up,\xi}(x)$ for scenario (a)}
             Towards this end, we divide $R_{//} (\bm H)$ 
                   into two regions, Region~$i, i=1,2$, namely Region 1= $OAA_1$ and Region 2 = $AA_1CB$ (see Fig.~\ref{fig:proposition2L2up}). Note that in Fig.~\ref{fig:proposition2L2up}, the straight line $AA_1A_2$ is parallel to the $u_2$-axis and $A_1$ is the point of intersection of this line segment with the side OC of $R_{//}(\bm H)$. Next, we evaluate expressions for $L_2^{up, \xi}(x)$ depending upon the region
             where $D(\bm H, \xi)$ lies. In Fig.~\ref{fig:proposition2L2up}, we denote $D(\bm H, \xi)$ by the point $P$ if $D(\bm H, \xi)$ lies in Region~1 and by the point $Q$/$Q^{\prime}$ if $D(\bm H, \xi)$ lies in Region 2.
              \vspace{1mm}\par
             \noindent\textit{Computation of $L_2^{up, \xi}(x)$ when $D(\bm H, \xi) = P \in$~Region$~1$}:\vspace{1mm} 
             \noindent The point $D(\bm H, \xi) = P 
             \in \text{Region}~1 = OAA_1 $ iff \begin{equation}\label{eq:condr1up}0 \leq OP  \leq OT, \end{equation} where $T$ is the point of intersection of the line segment $AA_2$ with the diagonal $OB$ (see Fig.~\ref{fig:proposition2L2up}). Next, we evaluate expression for $OT$. Towards this end, from the similarity of the triangles $OTA_2$ and $OBB_1$ it follows that $\frac{OT}{OB} = \frac{OA_2}{OB_1}$. Further, from Fig.~\ref{fig:proposition2L2up}, it follows that $OA_2 = h_{11}$ and $OB_1 = h_{11} + h_{12}$ and therefore we have, \begin{equation}\label{eq:ot2}OT = \frac{h_{11}}{h_{11}+h_{12}} OB. \end{equation} Since the point $P$ is nothing but the point $D(\bm H , \xi)$, from (\ref{eq:tip}) we have $OP = \xi
                   OB$. Therefore, using (\ref{eq:ot2}) and $OP = \xi
                               OB$ in (\ref{eq:condr1up}) we have,
              $D(\bm H, \xi) 
             \in \text{Region}~1 $ iff $0\leq \xi \leq \frac{h_{11}}{h_{11}+h_{12}}$. \par
             When 
             $0 \leq \xi \leq \frac{h_{11}}{h_{11}+h_{12}}$, from (\ref{eq:defl2up}) it follows that for evaluating $L_2^{up, \xi}(x)$, we need to construct rectangles $Rect^{up}(x,L_2,D(\bm H, \xi))$ using the ``upward extension'' of the line segment $LINE(x,D(\bm H, \xi) = P) = EF$  as shown in Fig.~\ref{fig:proposition2L2up}. $L_2^{up, \xi}(x)$ is then the largest possible vertical length of all such rectangles which lie inside $R_{//}(\bm H)$. From Fig.~\ref{fig:proposition2L2up}, it is clear that during the upward extension of the line $EF$, with increasing vertical length $L_2$ of the constructed rectangle $Rect^{up}(x, L_2, D(\bm H,\xi))$, the vertically upward line from $E$ will be the first to move out of $R_{//}(\bm H)$ when compared to the vertical line from $F$. Hence it follows that in Fig.~\ref{fig:proposition2L2up}, for $x = EF$ and $0 \leq \xi \leq \frac{h_{11}}{h_{11}+h_{12}}$, we have $L_2^{up,\xi}(x) = EH$. To evaluate $EH$ , we firstly denote the $(u_1,u_2)$ coordinates of the point $E$ by $(u_1^E, u_2^E)$. From the definition of $LINE(x,D(\bm H, \xi)$ and (\ref{eq:tip}) it is clear that 
             \begin{equation} \label{eq:u1e}
                   u_1^E  = \xi (h_{11}+h_{12}) - x/2,~~~
                   u_2^E  = \xi (h_{21}+h_{22}).
                   \end{equation}  
              When 
             $0 \leq \xi \leq \frac{h_{11}}{h_{11}+h_{12}}$ and 
             $ 0 \leq x \leq L_1^{\max}(\xi)$, using Fig.~\ref{fig:proposition2L2up}, $EH$ is computed as follows
            \small{ \begin{align} \label{eq:l2upreg1}
              L_2^{up, \xi}(x) = EH &= E_1H-E_1E \nonumber \\
              &= u_1^E \tan\theta_1 - u_2^E \nonumber \\
              &\overset{(a)}{=} \left(\xi (h_{11}+h_{12}) - \frac{x}{2}\right)
              \frac{ h_{21}}{h_{11}} - \xi (h_{21}+h_{22})  \nonumber \\
              &= \frac{-\xi det(\bm H)- \frac{x}{2} h_{21}}{h_{11}}
             \end{align}}\normalsize
             where $E_1$ is the point of intersection of the extension of the 
             line segment $EH$ and the $u_1$-axis (see Fig.~\ref{fig:proposition2L2up}).
             Step (a) follows from (\ref{eq:u1e}) and (\ref{eq:tantheta1}).\vspace{1mm}\par
             \noindent\textit{Computation of $L_2^{up, \xi}(x)$ when $D(\bm H, \xi) = Q \in$~Region$~2$:}\vspace{1mm}
             \noindent Point $D(\bm H, \xi) = Q$ lies in Region 2 = $AA_1CB$ if and only if
             $OT \leq OQ \leq OB$. Since $Q$ denote the point $D(\bm H, \xi)$, from (\ref{eq:tip}) we have, $OQ = \xi~OB$ and from (\ref{eq:ot2}) we have 
             $OT = \frac{h_{11}OB}{h_{11}+h_{12}}$. Hence, it follows that
             $D(\bm H, \xi)$ lies in Region 2 iff $ \frac{h_{11}}{h_{11}+h_{12}}
             \leq \xi \leq 1.$ We next evaluate
             $L_2^{up, \xi}(x)$ when $\xi$ lies in this interval.\par 
             From (\ref{eq:defl2up}), it follows that for evaluating $L_2^{up, \xi}(x)$, we need to construct rectangles $Rect^{up}(x,L_2,Q=D(\bm H, \xi))$ using the ``upward extension'' of the line segment $LINE(x, Q=D(\bm H,\xi))$ as shown in Fig.~\ref{fig:proposition2L2up}. $L_2^{up,\xi}(x)$ is then the largest possible vertical length of all such rectangles which lie inside $R_{//}(\bm H)$. From Fig.~\ref{fig:proposition2L2up}, it is clear that during the upward extension of the line segment $LINE(x, Q=D(\bm H, \xi))$, the upper left vertex of the constructed rectangle having the largest vertical length will either intersect with the side $OA$ of $R_{//}(\bm H)$ or with the side $AB$ of $R_{//}(\bm H)$ (see rectangles $E^{\prime}F^{\prime}G^{\prime}H^{\prime}$ and $E^{\prime\prime}F^{\prime\prime}G^{\prime\prime}H^{\prime\prime}$ in Fig.~\ref{fig:proposition2L2up}). The upper left vertex intersects with the side $OA$ if and only if  the lower left vertex of the constructed rectangle, (i.e.,~the leftmost point of the line segment $LINE(x,D(\bm H ,\xi))$ (see $E'$ in Fig.~\ref{fig:proposition2L2up}) lies inside Region~1, i.e. 
              \begin{align} \label{eq:conxOA}      
                           u_1^{D(\bm H , \xi)} - x/2 \leq h_{11}, ~\text{i.e.}\nonumber \\
                            2 \xi h_{12} - 2 (1-\xi)h_{11} \leq x ,  
                                           \end{align}
                                          where $u_1$ coordinate of the point $D(\bm H, \xi)$ is denoted by $u_1^{D(\bm H , \xi)}$. From (\ref{eq:tip}), we know that $u_1^{D(\bm H , \xi)} = \xi(h_{11}+h_{12}).$
             On the other hand, the upper left vertex intersects with the side $AB$ of $R_{//}(\bm H)$ if and only if the lower left vertex of the constructed rectangle, i.e., the leftmost point of the line segment $LINE(x,Q^{\prime} = D(\bm H ,\xi))$ (see $E^{\prime\prime}$ in Fig.~\ref{fig:proposition2L2up}) lies inside Region 2, i.e.
             \begin{align}\label{eq:conxAB}
                                      u_1^{D(\bm H , \xi)} - x/2 \geq h_{11},~\text{i.e.} \nonumber \\
                                 x \leq 2 \xi h_{12} - 2 (1-\xi)h_{11}. 
                                   \end{align}                                         
             From the above, we know that when $x$ satisfies (\ref{eq:conxOA}), i.e. $2 \xi h_{12} - 2 (1-\xi)h_{11} \leq x$, the lower left vertex of the  constructed rectangle lies in Region~1 and the upper left vertex lies on he side $OA$. Hence, we have      
             \begin{align}\label{eq:e'h'}
                          L_2^{up,\xi}(x) &= E^{\prime}H^{\prime} =E_1^{\prime}H^{\prime}-E_1^{\prime}E^{\prime}\\             
                            &= \frac{-\xi det(\bm H)- \frac{x}{2} h_{21}}{h_{11}} \nonumber
                                  \end{align} 
                   Similarly, when $x$ satisfies (\ref{eq:conxAB}), i.e. $2 \xi h_{12} - 2 (1-\xi)h_{11} \geq x$, the lower left vertex of the constructed rectangle lies in Region~2 and the upper left vertex lies on the side $AB$. Hence, we have
           \small \begin{align}
                      L_2^{up,\xi}(x) &=  E^{\prime\prime}H^{\prime\prime} \nonumber \\ &=E_1^{\prime\prime}H^{\prime\prime}-E_1^{\prime\prime}E^{\prime\prime} \nonumber \\
                        &= E_1^{\prime\prime}E_2 + E_2H^{\prime\prime} - E_1^{\prime\prime}E^{\prime\prime} \nonumber \\
                       &\overset{(a)}{=} h_{21} + AE_2 \tan\theta_2 - E_1^{\prime\prime}E^{\prime\prime} \nonumber \\
                        &\overset{(b)}{=} h_{21} + (u_1^{E^{\prime\prime}} - h_{11}) \tan\theta_2 - u_2^{E^{\prime\prime}} \nonumber \\
                       &\overset{(c)}{=} h_{21} + \left(\xi (h_{11}+h_{12}) - \frac{x}{2} - h_{11}\right)
                        \frac{ h_{22}}{h_{12}} - \xi (h_{21}+h_{22})  \nonumber \\
                         &= \frac{-(1-\xi) det(\bm H)- \frac{x}{2} h_{22}}{h_{12}},\nonumber
                        \end{align} \normalsize      
                    where $E_2$ is the point of intersection of the line  $E^{\prime\prime} H^{\prime\prime}$ with $AA^{\prime}$ and ($u_1^{E^{\prime\prime}}$, $u_2^{E^{\prime\prime}}$) are the $(u_1,u_2)$ coordinates of the point $E^{\prime\prime}$. Step (a) follows from right angle triangle $AE_2H^{\prime\prime}$. Step (b)
                   follows from the fact that, 
                   $AE_2 = u_1^{E^{\prime\prime}} - h_{11}$ and $E_1E^{\prime\prime} = u_2^{E^{\prime\prime}}$. In step (c) the expression for $u_1^{E^{\prime\prime}}$ and  $u_2^{E^{\prime\prime}}$ follows from 
                    the definition of $LINE(x,D(\bm H, \xi))$ in (\ref{eq:defline}) and (\ref{eq:tip}) and the value of $\tan\theta_2$ follows from
                     (\ref{eq:tantheta1}). Therefore, when $D(\bm H,\xi) \in$ Region 2, (i.e., $\frac{h_{11}}{h_{11}+h_{12}} \leq \xi \leq 1)$ and $(0 \leq x \leq 2\xi h_{12} - 2(1-\xi) h_{11}$, we have
                        \begin{equation}\label{eq:e''h''}
                        L_2^{up,\xi}(x) = \frac{-(1-\xi) det(\bm H)- \frac{x}{2} h_{22}}{h_{12}}
                        \end{equation}         
             Therefore, in scenario (a) $(h_{11} < h_{12}~ \text{and}~ h_{21} > h_{22})$, from (\ref{eq:l2upreg1}), (\ref{eq:e'h'}) and (\ref{eq:e''h''}) we finally have      
            \small {\begin{align}\label{eq:l2upfinala}
                        L^{up,\xi}_2(x)
                              &\hspace{-1mm}=\hspace{-1mm}\left\{\hspace{-2mm}
                                 \begin{array}{ll}
                                 \hspace{-1mm}{-\xi det(\bm H)} - {\frac{x}{2} h_{21}}\over {h_{11}} & \hspace{-2mm}0\leq\xi\leq\mu_1~ \text{and}  ~0 \leq  x \leq L_1^{\max}(\xi) \vspace{1mm}\\
                                 \hspace{-1mm} {-(1-\xi)det(\bm H)}- {\frac{x}{2} h_{22}} \over {h_{12}} &  \mu_1\leq\xi\leq 1~\text{and}~0 \leq x \leq \eta_3(\xi) \vspace{1mm} \\
                                                     \hspace{-1mm} {-\xi det(\bm H)} - {\frac{x}{2} h_{21}}\over {h_{11}} &\hspace{-6mm}\mu_1\leq\xi\leq 1~\text{and}~\eta_3(\xi) \leq x \leq L_1^{\max}(\xi)
                                     \end{array} \right.
                        \end{align}} \normalsize
                        where $\mu_1 \Define \frac{h_{11}}{h_{11}+h_{12}}$ and
                        $\eta_3(\xi) \Define 2\xi h_{12} - 2(1-\xi)h_{11}$. In the next section, we derive expressions for $L_2^{down,\xi}(x)$ for scenario (a).
                         \begin{figure}[t]
                       \centering
                      \includegraphics[width=9cm,height=6cm]{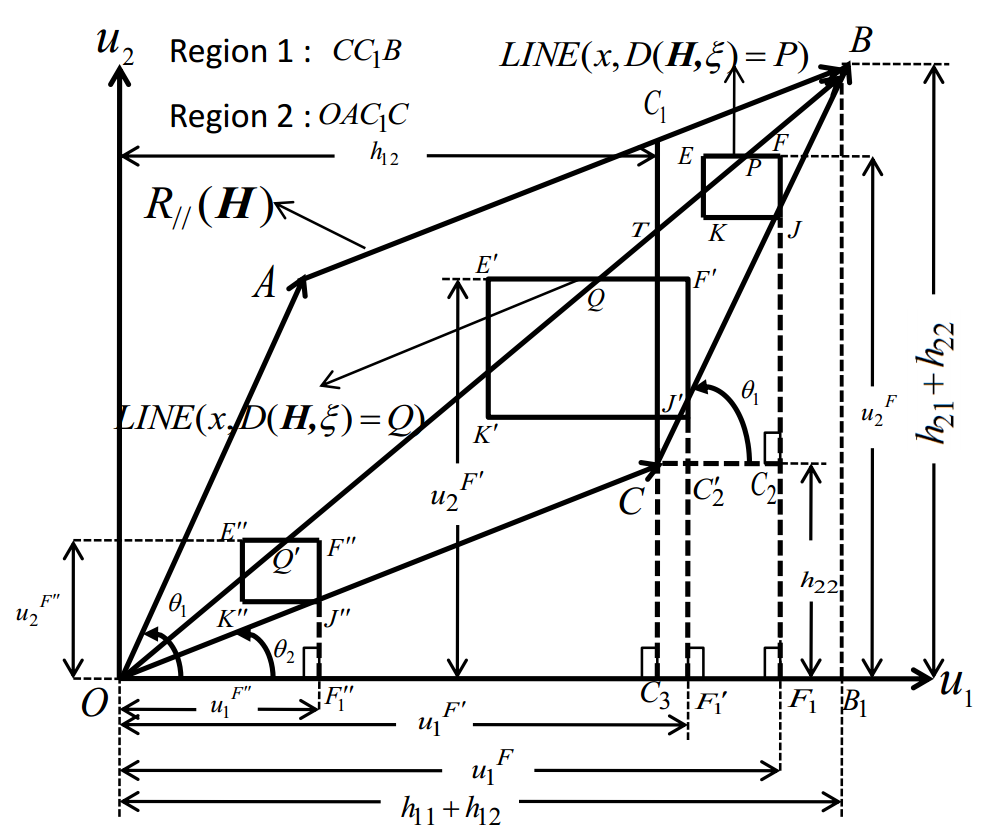}
                       \caption{Evaluation of $L_2^{down,\xi}(x)$ for Scenario (a) $(h_{11} < h_{12}~\text{and}~ h_{21} > h_{22})$.}
                     \label{fig:proposition2L2down}
                     \end{figure}                                     
                                      
          \subsection{Computation of $L_2^{down,\xi}(x)$ for scenario (a)}     
                Evaluation  of $L_2^{down, \xi}(x)$ is similar 
                to that of $L_2^{up, \xi}(x)$. In Fig.~\ref{fig:proposition2L2down}, we partition the 
                parallelogram $R_{//}(\bm H)$ into two regions, namely, Region~1 = 
                $CC_1B$ and Region 2 = $OAC_1C$. Next, we evaluate 
                the expression for $L_2^{down, \xi}(x)$ depending upon the region
                where $D(\bm H, \xi)$ lies. In Fig.~\ref{fig:proposition2L2down}, we denote $D(\bm H, \xi)$ by the point
                $P$ if $D(\bm H, \xi)$ lies in Region~1, by point $Q$/$Q^{\prime}$ if $D(\bm H, \xi)$ lies in Region 2.\vspace{1mm}  \par
                \noindent\textit{Computation of 
                $L_2^{down, \xi}(x)$ when $D(\bm H, \xi) \in$~Region~1:}\vspace{1mm}
                \par \noindent The point $D(\bm H, \xi) = P \in \text{Region}~1 = CC_1B$ iff \begin{equation}\label{eq:condr1down} OT \leq OP  \leq OB, \end{equation} where $T$ is the point of intersection of the straight line $CC_1$ with the diagonal $OB$ (see Fig.~\ref{fig:proposition2L2down}). Note that the straight line $CC_1$ is parallel to the $u_2$ axis (see Fig.~\ref{fig:proposition2L2down}). Next, we derive an expression for $OT$. Towards this end, from the similarity of the triangles $OTC_3$ and $OBB_1$ it follows that \begin{equation}\label{eq:ot2down}OT = \frac{h_{12}}{h_{11}+h_{12}} OB. \end{equation} Since $P=D(\bm H , \xi)$, from (\ref{eq:tip}) we have $OP = \xi
                            OB$. Therefore, using (\ref{eq:ot2down}) and $OP = \xi
                                        OB$ in (\ref{eq:condr1down}) we have, 
                                              $D(\bm H , \xi) \in \text{Region}~1 $ iff $ \frac{h_{12}}{h_{11}+h_{12}} \leq \xi \leq 1$. \par
                                              When 
                                              $ \frac{h_{12}}{h_{11}+h_{12}} \leq \xi \leq 1$, from (\ref{eq:defl2down}), it follows that for evaluating $L_2^{down, \xi}(x)$, we need to construct rectangles $Rect^{down}(x,L_2,D(\bm H, \xi))$ using the ``downward extension'' of the line segment $LINE(x,D(\bm H, \xi) = P) = EF$  as shown in Fig.~\ref{fig:proposition2L2down}. $L_2^{down, \xi}(x)$ is then the largest possible vertical length of all such rectangles which lie inside $R_{//}(\bm H)$. From Fig.~\ref{fig:proposition2L2down}, it is clear that during the downward extension of the line $EF$, with increasing vertical length $L_2$ of the constructed rectangle $Rect^{down}(x, L_2, D(\bm H,\xi))$, the vertically downward line from $F$ will be the first to move out of $R_{//}(\bm H)$ when compared to the vertical line from $E$. Hence, it follows that in Fig.~\ref{fig:proposition2L2down}, for $x = EF$ and $ \frac{h_{12}}{h_{11}+h_{12}} \leq \xi \leq 1$, we have $L_2^{down,\xi}(x) = FJ$. Let us denote the $(u_1,u_2)$ coordinates of the point $F$ by $(u_1^F, u_2^F)$. From the definition of $LINE(x,P= D(\bm H, \xi))$ in (\ref{eq:defline}) and from (\ref{eq:tip}) we have 
                                \begin{equation} \label{eq:u1f}
                u_1^F  = \xi (h_{11}+h_{12}) + \frac{x}{2},~~~
                u_2^F  = \xi (h_{21}+h_{22}).
                \end{equation}
                When 
                      $\frac{h_{12}}{h_{11}+h_{12}} \leq \xi \leq 1$ and 
                      $ 0 \leq x \leq L_1^{\max}(\xi)$, using Fig.~\ref{fig:proposition2L2down}, $FJ$ is computed as follows 
               \small \begin{align}\label{eq:fjoc}
                 L_2^{down, \xi}(x) &= FJ \nonumber \\
                 &= F_1F - F_1J \nonumber \\
                 &= F_1F - F_1C_2-C_2J \nonumber \\
                &\overset{(a)}{=} F_1F - h_{22}- CC_2 \tan\theta_1 \nonumber \\
                 &\overset{(b)}{=} u_2^F - h_{22} - (u_1^F - h_{12}) \tan\theta_1 \nonumber \\
                 &\overset{(c)}{=} \xi(h_{21} + h_{22}) - h_{22} - \hspace{-1mm}\left(\xi(h_{11}+h_{12})+ \frac{x}{2} - h_{12}\right)\hspace{-1mm}\frac{h_{21}}{h_{11}} \nonumber \\
                &= \frac{-(1-\xi) det(\bm H)- \frac{x}{2} h_{21}}{h_{11}},
                      \end{align} \normalsize
                     where $F_1$ is the point of intersection of the extension of the line $FJ$ with the $u_1$-axis (see Fig.~\ref{fig:proposition2L2down}). Step (a) follows from the right angle triangle $CC_2J$. Step (b)
                                 follows from the fact that, 
                                 $CC_2 = u_1^{F} - h_{12}$ and $F_1F = u_2^F$. In step (c) we use the expressions for $u_1^F$ and  $u_2^F$ from 
                                  (\ref{eq:u1f}) and the value of $\tan\theta_1$  from
                                   (\ref{eq:tantheta1}). \par
  \noindent\textit{Computation of 
        $L_2^{down, \xi}(x)$ when $Q = D(\bm H, \xi) \in$~Region$~2$}:
        \par \noindent We know from Fig.~\ref{fig:proposition2L2down}, that $Q = D(\bm H, \xi) 
        \in \text{Region}~2 = OAC_1C$ iff 
        \begin{equation}\label{condxireg2down}0 \leq OQ  \leq OT. \end{equation}
         Since $Q$ is nothing but $D(\bm H ,\xi)$, from (\ref{eq:tip}) we have $OQ = \xi~    OB$. Further, from (\ref{eq:ot2down}) we have $OT = \frac{h_{12}}{h_{11}+h_{12}} OB$. Therefore, it follows that $D(\bm H, \xi) 
        \in \text{Region} ~2 $ iff $  0\leq \xi \leq {h_{12}}/({h_{11}+h_{12}}) $. 
        Next, we evaluate $L_2^{down, \xi}(x)$ when $\xi$ lies in this interval.\par 
        From (\ref{eq:defl2down}), it follows that for evaluating $L_2^{down, \xi}(x)$, we need to construct rectangles $Rect^{down}(x,L_2,D(\bm H, \xi))$ using the ``downward extension'' of the line segment $LINE(x, D(\bm H,\xi))$ as shown in Fig.~\ref{fig:proposition2L2down}. $L_2^{down,\xi}(x)$ is then the largest possible vertical length of all such rectangles which lie inside $R_{//}(\bm H)$. From Fig.~\ref{fig:proposition2L2down}, it is clear that during the downward extension of the line segment $ LINE(x, D(\bm H, \xi))$, the lower right vertex of the constructed rectangle having the largest vertical length will either intersect with the side $OC$ of $R_{//}(\bm H)$ or with the side $CB$ of $R_{//}(\bm H)$ (see rectangles $E^{\prime}F^{\prime}J^{\prime}K^{\prime}$ and $E^{\prime\prime}F^{\prime\prime}J^{\prime\prime}K^{\prime\prime}$ in Fig.~\ref{fig:proposition2L2down}). The lower right vertex intersects with the side $CB$ if and only if  the upper right vertex of the constructed rectangle, (i.e., the rightmost point of the line segment $LINE(x, D(\bm H ,\xi))$ (see $F^{\prime}$ in Fig.~\ref{fig:proposition2L2down}) lies inside Region~1, i.e. 
               \begin{align} \label{eq:conxCB}      
                            u_1^{D(\bm H , \xi)} + x/2 \geq h_{12},~\text{i.e.} \nonumber \\
                            x \geq 2 (1-\xi)h_{12} - 2 \xi h_{11} ,  
               \end{align}
                                           where $u_1^{D(\bm H , \xi)}$ denote the $u_1$ coordinate of the point $D(\bm H, \xi)$. From (\ref{eq:tip}) we know that $u_1^{D(\bm H , \xi)} = \xi(h_{11}+h_{12}).$
              On the other hand, the lower right vertex of the constructed rectangle intersects with the side $OC$ of $R_{//}(\bm H)$ if and only if the upper right vertex of the constructed rectangle (i.e., the rightmost point of the line segment $LINE(x, D(\bm H ,\xi))$) (see $F^{\prime\prime}$ in Fig.~\ref{fig:proposition2L2down}) lies inside Region 2, i.e.
              \begin{align}\label{eq:conxOC}
                                       u_1^{D(\bm H , \xi)} + x/2 \leq h_{12},~~\text{i.e.} \nonumber \\
                                  x \leq   2 (1-\xi)h_{12} - 2 \xi h_{11}. 
                                    \end{align}                                         
              From the above, we know that when $x \leq 2 (1-\xi)h_{12} - 2 \xi h_{11}$, the upper right vertex of the constructed rectangle lies in Region~2 and the lower right vertex lies on the side $OC$. Hence, we have  $L_2^{down,\xi}(x) = F^{\prime\prime}J^{\prime\prime}$. Towards this end, we firstly denote the $(u_1,u_2)$ coordinates of the point $F^{\prime\prime}$ by $(u_1^{F^{\prime\prime}}, u_2^{F^{\prime\prime}})$. From the definition of $LINE(x, D(\bm H, \xi))$ in (\ref{eq:defline}) and from (\ref{eq:tip}) we have 
                                    \begin{equation} \label{eq:u1f''}
                    u_1^{F^{\prime\prime}} = \xi (h_{11}+h_{12}) + \frac{x}{2},~~~
                    u_2^{F^{\prime\prime}}  = \xi (h_{21}+h_{22}).
                    \end{equation}
                    Next, for $0\leq \xi \leq h_{12}/(h_{11}+h_{12})$ and $0 \leq x \leq 2(1-\xi)h_{12} - 2 \xi h_{11}$, we evaluate expression for $L_2^{down,\xi}(x) = F^{\prime\prime}J^{\prime\prime}$ as follows.
           \small   \begin{align}\label{eq:f''j''}
                           L_2^{down,\xi}(x) &= F^{\prime\prime}J^{\prime\prime} \nonumber\\
                           &=F_1^{\prime\prime}F^{\prime\prime}-F_1^{\prime\prime}J^{\prime\prime}\nonumber\\          
                           &\overset{(a)}{=} u_2^{F^{\prime\prime}} - u_1^{F^{\prime\prime}} \tan\theta_2 \nonumber \\
                             &\overset{(b)}{=} \xi(h_{21}+h_{22}) - \left(\xi(h_{11}+h_{12})+\frac{x}{2}\right)\frac{h_{22}}{h_{12}}
                             \nonumber \\
                            &= \frac{-\xi det(\bm H) - \frac{x}{2}h_{22}}{h_{12}},
                                   \end{align}\normalsize 
                   where $F_1^{\prime\prime}$ is the point of intersection of the line $F^{\prime\prime}
                   J^{\prime\prime}$ extended downward with the $u_1$-axis (see Fig.~\ref{fig:proposition2L2down}). Step (a) follows from the two facts. Firstly, from the fact that $F_1^{\prime\prime}F^{\prime\prime}$ is the $u_2$ coordinate  $F^{\prime\prime}$, and secondly from the right angle triangle $OF_1^{\prime\prime}J^{\prime\prime}$, we have $\tan\theta_2 = \frac{F_1^{\prime\prime}J^{\prime\prime}}{u_1^{F^{\prime\prime}}}$, i.e. $F_1^{\prime\prime}J^{\prime\prime}= u_1^{F^{\prime\prime}} \tan\theta_2  $. Step (b) follows from (\ref{eq:u1f''}) and (\ref{eq:tantheta1}).\vspace{1mm}\newline               
                   On the other hand, when $x \geq 2 (1-\xi)h_{12} - 2 \xi h_{11}$, the upper right vertex of the constructed rectangle lies in Region~1 and the lower right vertex lies on the side $CB$. Hence, we have  $L_2^{down,\xi}(x) = F^{\prime}J^{\prime}$. Towards this end, we firstly denote the $(u_1,u_2)$ coordinates of the point $F^{\prime}$ by $(u_1^{F^{\prime}}, u_2^{F^{\prime}})$. From the definition of $LINE(x, D(\bm H, \xi))$ in (\ref{eq:defline}) and from (\ref{eq:tip}) we have 
                                                     \begin{equation} \label{eq:u1f'}
                                     u_1^{F^{\prime}} = \xi (h_{11}+h_{12}) + \frac{x}{2},~~~
                                     u_2^{F^{\prime}}  = \xi (h_{21}+h_{22}).
                                     \end{equation}
                                     The steps involved in the evaluation of $F^{\prime}J^{\prime}$ is exactly the same as for the evaluation of $FJ$ in (\ref{eq:fjoc}). Hence, from Fig.~\ref{fig:proposition2L2down} we have
                             \small  \begin{align}
                                           L_2^{down, \xi}(x) &= F^{\prime}J^{\prime} \nonumber \\
                                           &= F_1^{\prime}F^{\prime} - F_1^{\prime}J^{\prime} \nonumber \end{align}
                                         \small  \begin{align}\label{eq:f'j'}
                                                 ~&= \frac{-(1-\xi) det(\bm H)- \frac{x}{2} h_{21}}{h_{11}},
                                                    \end{align}   \normalsize            
                         
              Therefore, in scenario (a) $(h_{11} < h_{12}~ \text{and}~ h_{21} > h_{22})$, from (\ref{eq:fjoc}), (\ref{eq:f''j''}) and (\ref{eq:f'j'}) we finally have
            \small \begin{align}\label{eq:l2downfinala}
                        \hspace{-1mm} L^{down,\xi}_2\hspace{-.75mm}(x)
                               &\hspace{-1mm}=\hspace{-1.5mm}\left\{\hspace{-3mm}
                                  \begin{array}{ll}
                                  \frac{-(1-\xi) det(\bm H)- \frac{x}{2} h_{21}}{h_{11}}, & \hspace{-3mm} \mu_2\leq \xi\leq 1~ \text{and}  ~0 \leq  x \leq L_1^{\max}(\xi) \vspace{1mm} \\
                                   \frac{-\xi det(\bm H) - \frac{x}{2}h_{22}}{h_{12}}, &  0 \leq\xi\leq \mu_2~\text{and}~0 \leq x \leq \eta_4(\xi) \vspace{1mm} \\
                                                       \frac{-(1-\xi) det(\bm H)- \frac{x}{2} h_{21}}{h_{11}}, & 0 \leq\xi\leq \mu_2~\text{and}~ \\&\eta_4(\xi) \leq x \leq L_1^{\max}(\xi)
                                      \end{array} \right.
                         \end{align}
                         \normalsize
                         where $\mu_2 \Define \frac{h_{12}}{h_{11}+h_{12}}$ and
                         $\eta_4(\xi) \Define 2 (1-\xi)h_{12} - 2 \xi h_{11}$. In the following, we derive expressions for $L_2^{up,\xi}(x)$~and~ $L_2^{down,\xi}(x)$ for scenario (b) $(h_{21} \leq h_{22})$; and 
                               (c) $(h_{12} \leq h_{11}$ and $h_{21}>h_{22})$. \par
                        Similarly, for scenarios (b) and (c) also, by using the \emph{upward} and \emph{downward extension} methods we derive the expressions for $L_2^{up, \xi}(x)$ and $L_2^{down, \xi}(x)$ respectively by constructing rectangles which lie inside $R_{//}(\bm H)$ and have the maximum possible vertical lengths for a given horizontal length. It turns out that the expression for $(L_2^{up, \xi}(x), L_2^{down, \xi}(x))$ is exactly the same as that for scenario (a).

\end{proof}                                      
\section{Proof of Lemma~\ref{lem:l2symmetric}}\label{ap:l2xisym}
\begin{proof}
To prove (\ref{symmetric_L2}) we consider its R.H.S. $L_2^{1-\xi}(x)$. From (\ref{eq:defl2xi}) the R.H.S. is given by
         \begin{align}\label{eq:l2minusxi}
         L_2^{(1-\xi)} (x) &= 2 \min(L_2^{up,(1-\xi)}(x),L_2^{down,(1-\xi)}(x)),  
           \end{align}
           where the expression for $L^{up,(1-\xi)}_2(x)$ is given by, \vspace{1mm} \\
                 \noindent \textit{Case I}: For $0 \leq (1- \xi) \leq \frac{h_{11}}{h_{11}+h_{12}} , \text{i.e.}~ \frac{h_{12}}{h_{11}+h_{12}} \leq \xi \leq 1$ From (\ref{eq:l2up1}) we have
                       \begin{align}
                       L^{up,(1-\xi)}_2(x)
                             &=\left\{\hspace{-2mm}
                                \begin{array}{ll}
                                \frac{{-(1-\xi) det(\bm H)} - {\frac{x}{2} h_{21}}}{h_{11}}, & 0 \leq  x \leq L_1^{\max}(1-\xi)\end{array} \right. \nonumber \\
                           &\overset{(a)}{=}\left\{\hspace{-2mm}
                                                           \begin{array}{ll}\frac{{-(1-\xi) det(\bm H)} - {\frac{x}{2} h_{21}}} {h_{11}}, & 0 \leq  x \leq L_1^{\max}(\xi) 
                                 \end{array}  \right. \nonumber \\
                           &\overset{(b)}{=} L^{down,\xi}_2(x),       
                       \end{align}
                       where step (a) follows from (\ref{eq:symmetric_L1}) and step (b) follows from (\ref{eq:ledown2}). \par
            \noindent\textit{Case II}:  For $\frac{h_{11}}{h_{11}+h_{12}} \leq (1-\xi) \leq 1 ,\text{i.e.}~0 \leq \xi \leq \frac{h_{12}}{h_{11}+h_{12}}$, from (\ref{eq:l2up2}) we have  
            \small \begin{align}
                 L^{up,(1-\xi)}_2(x)
                       &\hspace{-1mm}=\hspace{-1mm}\left\{\hspace{-2mm}
                          \begin{array}{ll}
                          \frac{{-\xi det(\bm H)}- {\frac{x}{2} h_{22}}}{h_{12}},  & 0 \leq x \leq \eta_3 (1-\xi)  \vspace{1mm}\\
                        \hspace{-1mm}\frac  {{-(1-\xi) det(\bm H)} - {\frac{x}{2} h_{21}}}{h_{11}},\hspace{-3mm} & \eta_3 (1-\xi) \leq x \leq L_1^{\max}(1-\xi)                
                                \end{array} \right.\nonumber \\
                      &\overset{(a)}{=}\left\{\hspace{-2mm}
                                               \begin{array}{ll}
                                               \frac {{-\xi det(\bm H)}- {\frac{x}{2} h_{22}}} {h_{12}},
                                               \hspace{-2mm}  & 0 \leq x \leq \eta_4 (\xi)  \vspace{1mm}\\
                                                {-(1-\xi) det(\bm H)} - {\frac{x}{2} h_{21}}\over {h_{11}}, & \eta_4 (\xi) \leq x \leq L_1^{\max}(\xi)                
                                                \end{array} \right.\nonumber \\ 
                      &\overset{(b)}{=}L^{down,\xi}_2(x),                                      
                     \end{align}       \normalsize              
                  where step (a) follows from two facts, firstly that  $\eta_3(1-\xi) \Define 2(1-\xi)h_{12}-2\xi h_{11}=\eta_4(\xi)$ and secondly from (\ref{eq:symmetric_L1}). Step (b) follows from (\ref{eq:l2down1}). Therefore we have,  \begin{equation}\label{eq:l2up=l2down}
                 L^{up,(1-\xi)}_2(x) =  L^{down,\xi}_2(x),~~ \xi \in [0,1], x\in[0,L_1^{\max}(\xi)]  
                 \end{equation} 
                 From (\ref{eq:l2up=l2down}) we also have \begin{equation}\label{eq:l2up=l2down1} L^{up,\xi}_2(x) =  L^{down,(1-\xi)}_2(x),~~ \xi \in [0,1], x\in[0,L_1^{\max}(\xi)]\end{equation} Using (\ref{eq:l2up=l2down}) and (\ref{eq:l2up=l2down1}) in (\ref{eq:l2minusxi}) we finally have 
                 \begin{align}
                 L_2^{(1-\xi)}(x) &= 2 \min(L^{up,(1-\xi)}_2(x),L^{down,(1-\xi)}_2(x)) \nonumber \\
                 &= 2 \min(L^{down,\xi}_2(x),L^{up,\xi}_2(x)) \nonumber \\
                 &\overset{(a)}{=} L_2^{\xi}(x) \nonumber \\ 
                 &= L.H.S.,
                 \end{align} where step (a) follows from (\ref{eq:defl2xi}). This therefore completes the proof. \end{proof} 

\section{Proof of Theorem~\ref{the:maxathalf}}\label{ap:bndmaxathalf}

To Prove this theorem we need the following Lemma.

\begin{lemma} \label{lem:lemmamaxl2}
          For a fixed $x \in [0,L_1^{\max}(1/2)]$, the function $L_2^\xi(x)$ attains its maximum  at $\xi=1/2$, i.e.
          \begin{equation}
                \label{eq:max_L2}
                L_2^\xi(x) \leq L_2^{1/2}(x)~~ \forall~~\xi \in [f(x),1/2]
                \end{equation}  
             where for any $0 \leq x \leq L_1^{\max}(1/2)$, $f(x)$ is the unique\footnote{Uniqueness follows from the fact that $L_1^{\max}(\xi)$ is continuous, increases linearly when $\xi \in [0,1/2]$, has a unique maximum at $\xi = 1/2$, and $L_1^{\max}(\xi)=L_1^{\max}(1-\xi).$} value such that \begin{equation}
             L_1^{\max}(f(x)) = x ~~\text{and}~~ f(x) \leq 1/2.
             \end{equation}
                  \end{lemma}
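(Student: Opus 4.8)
The plan is to collapse the whole statement onto the behaviour of a single one-variable function. Fix $x\in[0,L_1^{\max}(1/2)]$ and set $g(\xi)\triangleq L_2^{up,\xi}(x)$. By (\ref{eq:defl2xi}) we have $L_2^\xi(x)=2\min\!\big(L_2^{up,\xi}(x),L_2^{down,\xi}(x)\big)$, and the identity (\ref{eq:l2up=l2down}) derived inside the proof of Lemma~\ref{lem:l2symmetric} gives $L_2^{down,\xi}(x)=L_2^{up,1-\xi}(x)=g(1-\xi)$. Hence $L_2^\xi(x)=2\min\!\big(g(\xi),g(1-\xi)\big)$ and, at $\xi=1/2$, $L_2^{1/2}(x)=2g(1/2)$. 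So the lemma reduces to showing $\min\!\big(g(\xi),g(1-\xi)\big)\le g(1/2)$ for every $\xi\in[f(x),1/2]$. On this range both $\xi$ and $1-\xi$ lie in $[f(x),1-f(x)]$, which — by the symmetry and unimodality of $L_1^{\max}(\cdot)$ (Remarks~\ref{rem:l1sym} and \ref{rem:l1max}) — is precisely the set of $\xi$ with $x\le L_1^{\max}(\xi)$, so $g(\xi)$ and $g(1-\xi)$ are both well defined.

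The key structural fact I would extract is that $g$ is \emph{unimodal} in $\xi$. Reading off the expressions of Proposition~2, for fixed $x$ the function $g(\xi)$ is piecewise linear. On the initial interval it equals $\big(-\xi\,det(\bm H)-\tfrac{x}{2}h_{21}\big)/h_{11}$ (see (\ref{eq:l2up1}) and the lower branch of (\ref{eq:l2up2})), which is increasing in $\xi$ with slope $-det(\bm H)/h_{11}>0$ since $det(\bm H)<0$ by (\ref{eq:dethlesszero}). Past a single breakpoint $\xi^\star$, determined by $x=\eta_3(\xi^\star)$, it equals $\big(-(1-\xi)\,det(\bm H)-\tfrac{x}{2}h_{22}\big)/h_{12}$ (upper branch of (\ref{eq:l2up2})), which is decreasing in $\xi$ with slope $det(\bm H)/h_{12}<0$. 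Because $\eta_3(\xi)$ is itself increasing in $\xi$ and vanishes at $\xi=h_{11}/(h_{11}+h_{12})$, the increasing branch is used throughout Case~I and up to $\xi^\star$ in Case~II, and the two linear pieces agree at $\xi^\star$; thus $g$ is continuous, increasing on $[f(x),\xi^\star]$ and decreasing on $[\xi^\star,1-f(x)]$, i.e.\ unimodal with a single peak at $\xi^\star$.

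Unimodality then closes the proof through the elementary three-point inequality: if $a\le m\le b$ all lie in the domain of a function that rises to a peak and falls thereafter, then $\min\!\big(g(a),g(b)\big)\le g(m)$. I apply this with $a=\xi$, $m=1/2$, $b=1-\xi$. If $\xi^\star\ge 1/2$, then $g$ is nondecreasing on $[\xi,1/2]$, so $g(\xi)\le g(1/2)$; if $\xi^\star<1/2$, then $g$ is nonincreasing on $[1/2,1-\xi]$, so $g(1-\xi)\le g(1/2)$. In either case $\min\!\big(g(\xi),g(1-\xi)\big)\le g(1/2)$, and doubling gives $L_2^\xi(x)\le L_2^{1/2}(x)$, which is (\ref{eq:max_L2}). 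Since Proposition~2 states that $L_2^{up,\xi}(x)$ and $L_2^{down,\xi}(x)$ have identical expressions in scenarios (b) and (c), the same argument applies verbatim there.

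The only substantive work is the middle step, verifying the unimodality of $g$. This is where the piecewise definitions must be handled with care: one has to confirm that the increasing branch genuinely precedes the decreasing branch (equivalently, that the Case~I/Case~II split of (\ref{eq:l2up1})--(\ref{eq:l2up2}) produces the single switch at $\xi^\star$ where $\eta_3(\xi^\star)=x$) and that the two linear pieces join continuously at $\xi^\star$. Once unimodality is in hand, the remaining argument is the one-line three-point inequality, so I expect no genuine obstacle beyond this bookkeeping.
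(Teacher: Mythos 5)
Your proof is correct, and it is organized differently from the paper's. The paper proves Lemma~\ref{lem:lemmamaxl2} by explicit case analysis: it splits on the channel coefficients ($h_{12}\le h_{11}$ versus $h_{11}\le h_{12}$) and, in the latter case, further on $x$ ($x\le h_{12}-h_{11}$ versus $x\ge h_{12}-h_{11}$), and in each case exhibits \emph{one} of the two half-lengths $L_2^{up,\xi}(x)$ or $L_2^{down,\xi}(x)$ that is monotonically increasing in $\xi$ over the whole interval $[0,1/2]$, then bounds the min by that function and evaluates at $\xi=1/2$ using $L_2^{up,1/2}(x)=L_2^{down,1/2}(x)=L_2^{1/2}(x)/2$. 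You instead make the reflection identity $L_2^{down,\xi}(x)=L_2^{up,1-\xi}(x)$ (the paper's (\ref{eq:l2up=l2down}), which the paper uses only to get the $\xi=1/2$ coincidence) the central tool, reduce everything to the single function $g(\xi)=L_2^{up,\xi}(x)$, prove $g$ is unimodal with peak at $\xi^\star$ solving $\eta_3(\xi^\star)=x$, and finish with a three-point inequality. The two arguments rest on the same computational facts — indeed your dichotomy on the peak location is exactly the paper's case split, since $\xi^\star=\frac{x+2h_{11}}{2(h_{11}+h_{12})}\ge 1/2$ if and only if $x\ge h_{12}-h_{11}$, which automatically holds when $h_{12}\le h_{11}$ — but your packaging buys genuine unification: no scenario bookkeeping, a single continuity/monotonicity verification at the breakpoint, and a transparent explanation of \emph{why} $\xi=1/2$ is optimal, namely that $L_2^\xi(x)$ is the min of a unimodal function and its mirror image about $\xi=1/2$, which is maximized at the symmetry point. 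The paper's version, by contrast, is longer but each case is a one-line monotonicity check on an explicit linear expression. One point worth keeping explicit in your write-up (you do note it) is the domain issue: $g(\zeta)$ must be defined on all of $[\xi,1-\xi]$, which follows from $\xi\ge f(x)$ together with the symmetry and unimodality of $L_1^{\max}(\cdot)$ (Remarks~\ref{rem:l1sym} and \ref{rem:l1max}); and the edge case where $\xi^\star$ falls outside $[f(x),1-f(x)]$ should be read as weak unimodality ($g$ monotone on its domain), under which your three-point inequality still applies.
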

             \begin{proof} To prove Lemma~\ref{lem:lemmamaxl2} we consider two cases (a) $h_{12} \leq h_{11}$; and (b) $h_{12}\geq
             h_{11}$. From Lemma \ref{lem:l2symmetric}, we know that for a fixed $x\in[0,L_1^{\max}(1/2)]$, $ L_2^\xi(x)$ is symmetric about $\xi=1/2$, hence we consider $\xi$ only in the range [0,1/2]. The proof of Lemma~\ref{lem:lemmamaxl2} is as
             follows \par
             \textit{Case(a)} $h_{12} \leq h_{11}$: \par
             \begin{align}
             h_{12} &\leq h_{11},~~ \text{i.e.}~~~ 1/2 \leq  \frac {h_{11}}{h_{11} + h_{12}}.
             \end{align}
             Since $\frac {h_{11}}{h_{11} + h_{12}}  \geq 1/2$ and $\xi \in [0,1/2]$, we have $\xi \leq \frac {h_{11}}{h_{11} + h_{12}} $. Therefore  from (\ref{eq:l2up1}) we have
           \small  \begin{align}
             L_2^{up,\xi}(x) = 
                                \frac{-\xi det(\bm H)-\frac{x}{2}h_{21}}{h_{11}},~ & 0 \leq x \leq L_1^{\max}(1/2).
             \end{align}\normalsize
             From the above equation it is clear that $L_2^{up,\xi}(x)$ is an increasing function
             of $\xi \in [0,1/2]$ and therefore for any $\xi \in [0,1/2]$ we have
             \begin{align} \label{eq:eq11}
              \xi \leq 1/2 \implies L_2^{up,\xi}(x) \leq L_2^{up,1/2}(x).
             \end{align}
             From (\ref{eq:l2up=l2down1}) we know that $L_2^{up,\xi}(x) = L_2^{down,(1-\xi)}(x)$, and therefore for $\xi = 1/2$
             \begin{equation}\label{eq:l2upxi=l2downxi}L_2^{up,1/2}(x) = L_2^{down,1/2}(x).\end{equation} using
             (\ref{eq:l2upxi=l2downxi}) in (\ref{eq:defl2xi}) we have 
             \begin{align}\label{eq:l2up=l2xiby2}
             L_2^{up,1/2}(x) = L_2^{down,1/2}(x)=L_2^{\xi=1/2}(x)/2
             \end{align} 
             and hence using this along with (\ref{eq:eq11}), for any fixed
             $x = [0,L_1^{\max}(1/2)]$ we have
             \begin{align} \label{eq:eq1}
              L_2^{up,\xi}(x) \leq L_2^{\xi=1/2}(x)/2 ~~, \text{i.e.} \nonumber \\
              2 \min(L_2^{up,\xi}(x),L_2^{down,\xi}(x)) \leq L_2^{\xi=1/2}(x)~~, \text{i.e.} \nonumber \\
              L_2^{\xi}(x) \leq L_2^{\xi=1/2}(x).
             \end{align}
             Therefore for case(a) ($h_{12} \leq h_{11}$) and $\xi \in [0,1/2]$ finally we have
             \begin{equation}
             L_2^{\xi}(x) \leq L_2^{\xi=1/2}(x), ~~ 0 \leq x\leq L_1^{\max}(1/2).
             \end{equation}
             \textit{Case(b)} $h_{11} \leq h_{12}$: \par
             \begin{align}
              h_{11} &\leq h_{12},~~\text{i.e.}~~~  1/2 \leq  \frac {h_{12}}{h_{11} + h_{12}}. 
              \end{align}
             \noindent For this case, in order to prove $L_2^{\xi}(x) \leq L_2^{\xi=1/2}(x)$, we further consider two different
             cases on the basis of the values of $x\in[0,L_1^{\max}(1/2)]$ (b.I): $x\in[0,h_{12} - h_{11}]$; and
             (b.II): $x\in[h_{12} - h_{11},L_1^{\max}(1/2)]$.\par
             case (b.I):~ $x\in[0,h_{12} - h_{11}]$ \par
             \noindent From (\ref{eq:l2down1}) we have
             \begin{align}
             \eta_4(\xi) & = 2 (1-\xi)h_{12} - 2\xi h_{11} \nonumber\\
              & = 2 h_{12} - 2 \xi(h_{11}+h{12}). \end{align}
             From the above equation it is cleat that $\eta_4(\xi)$ is monotonically decreasing with $0\leq \xi \leq 1/2$ and
             therefore we have  
            \begin{align}
            \eta_4(1/2) \leq \eta_4(\xi) \leq \eta_4(0)\nonumber \\
             h_{12} - h_{11} \leq \eta_4(\xi) \leq 2 h_{12}  
            \end{align} and since we know that
             $x\in[0,h_{12} - h_{11}]$, hence for any value of $\xi\in [0,1/2]$, $x$ will always be less than $\eta_4(\xi)$, i.e. $x \leq \eta_4(\xi)$ . Therefore  from
             (\ref{eq:l2down1}) we have
             \begin{align}
             L_2^{down,\xi}(x) = \frac{-\xi det (\bm H)-\frac{x}{2}h_{22}}{h_{12}}.
             \end{align}
             From the above equation it is clear that for a fixed $x\in[0,h_{12} - h_{11}]$,  
              $L_2^{down,\xi}(x)$ is a monotonically increasing function of $\xi \in [0,1/2]$. Therefore
              for any $\xi \in [0,1/2]$ we have
              \begin{align} \label{eq:eq2}
              L_2^{down,\xi}(x) \leq L_2^{down,1/2}(x)= L_2^{\xi=1/2}(x)/2, ~~\text{i.e.} \nonumber \\
              \min(L_2^{up,\xi}(x),L_2^{down,\xi}(x)) \leq L_2^{\xi=1/2}(x)/2, ~~\text{i.e.} \nonumber \\
               2\min(L_2^{up,\xi}(x),L_2^{down,\xi}(x)) \leq L_2^{\xi=1/2}(x), ~~\text{i.e.} \nonumber \\
            L_2^{\xi}(x) \leq L_2^{\xi=1/2}(x).
             \end{align}
             Therefore for case(b.I)  finally we have
             \begin{equation}\label{eq:eqx}
             L_2^{\xi}(x) \leq L_2^{\xi=1/2}(x),~~x \in [0,h_{12}-h_{11}].
             \end{equation}\par
             case (b.II)~$x \in [h_{12} - h_{11} , L_1^{\max}(1/2) ] $: \par
             \noindent From ({\ref{eq:l2up2}}) we have
             \begin{align}
             \eta_3(\xi) & = 2 \xi h_{12} - 2 (1-\xi) h_{11} \nonumber\\
             &= 2 \xi (h_{12}+h_{11}) - 2 h{11} \end{align}
             It is clear from the above equation that $\eta_3(\xi)$ is monotonically increasing with $\xi$ and hence
             for $\xi \in [0,1/2]$ we have
             \begin{align}
              \eta_3(0) &\leq \eta_3(\xi) \leq \eta_3(1/2), ~~\text{i.e.} \nonumber \\
              -2 h_{11} &\leq \eta_3(\xi) \leq h_{12} - h_{11}. \nonumber
             \end{align}
             Since $x \in [h_{12} - h_{11} , L_1^{\max}(1/2) ]$ we have
             \begin{align}
             \eta_3(\xi) \leq h_{12} - h_{11} \leq x,~~\text{i.e.}~~
              \eta_3(\xi) \leq x. \end{align}
              Therefore for case(b.II), from (\ref{eq:l2up1}) and (\ref{eq:l2up2}) we have
              \small\begin{align}
             L_2^{up,\xi}(x) =  \frac{-\xi det(\bm H)-\frac{x}{2}h_{21}}{h_{11}},~&  h_{12} - h_{11}
             \leq x \leq L_1^{\max}(1/2)
             \end{align}\normalsize
             It is clear that $L_2^{up,\xi}(x)$ is  monotonically increasing with $\xi$ and hence using the similar argument as
             for  (\ref{eq:eq1}) we can show that
             \begin{equation} \label{eq:eq3}
             L_2^{\xi}(x) \leq L_2^{\xi=1/2}(x).
             \end{equation}
             Therefore using (\ref{eq:eqx}) and (\ref{eq:eq3})\footnote{In (\ref{eq:eqx}), $ x \in [0,h_{12}-h_{11}]$, whereas in (\ref{eq:eq3}), $x \in [h_{12}-h_{11}, L_1^{\max}(1/2)]$, both of these cases we have the same result and for the union of both of these cases $x \in [0, L_1^{\max}(1/2)]$.} for case (b) ($h_{12} \geq h_{11})$ we have
            \begin{equation} \label{eq:eq4}
             L_2^{\xi}(x) \leq L_2^{\xi=1/2}(x),~0 \leq x \leq L_1^{\max}(1/2).\end{equation}
             Therefore finally from (\ref{eq:eq4}) and (\ref{eq:eq1}) and Lemma~\ref{lem:l2symmetric} we have for any $\xi \in [0,1]$
             $$L_2^{\xi}(x) \leq L_2^{\xi=1/2}(x).$$ This completes the proof of Lemma~\ref{lem:lemmamaxl2}.\end{proof}
            \begin{proof}  Next using this Lemma we prove Theorem~\ref{the:maxathalf}.\par
             Towards this end,
 we consider an arbitrary $\xi \in [0, 1/2]$, for which we show that $R_\textit{ZF}(\bm H , P_0/\sigma, \xi) \subseteq R_{\textit{ZF}}(\bm H , P_0/\sigma,1/2)$. For $\xi \in [1/2,1]$, the proof is similar due to the symmetricity of the $L_2^{\xi}(x)$ and $L_1^{\max}(\xi)$ functions (see Remark~\ref{rem:l1sym} and Lemma~\ref{lem:l2symmetric}). For a given $\xi \in [0,1/2]$ let $(R_1,R_2) \in R_{\textit{ZF}} (\bm H , P_0/\sigma, \xi)$. From (\ref{eq:rate_region}), we know that there exists a $Rect(L_1 \geq 0 , L_2 \geq 0, D(\bm H, \xi) ) \subset R_{//}(\bm H)$ which corresponds to this rate pair $(R_1, R_2)$.  
 Further from proposition~2,
it follows that there exists
\begin{equation}\label{eq:eqb}
L_2^\xi(L_1) \geq L_2.
\end{equation} 
From Lemma~(\ref{lem:lemmamaxl2}) we know that for any a given $\xi \in [ 0, 1/2]$ and 
$0\leq L_1 \leq L_1^{\max}(\xi)$, there exists
\begin{equation}\label{eq:eqc}L_2^{\xi=1/2}(L_1) \geq L_2^{\xi}(L_1)\end{equation}
From (\ref{eq:eqc}) and (\ref{eq:eqb}) we get
\begin{equation}\label{eq:121}
L_2^{\xi=1/2}(L_1) \geq L_2.\end{equation}
We know that for $\xi=1/2$ there exists a rectangle $Rect(L_1, L_2^{\xi=1/2}(L_1),D(\bm H,1/2)) \subset R_{//}(\bm H)$. From (\ref{eq:121}) it therefore follows that there will exists a rectangle $Rect(L_1, L_2,D(\bm H,1/2)) \subset Rect(L_1, L_2^{\xi=1/2}(L_1),D(\bm H,1/2))$ and hence $Rect(L_1, L_2,D(\bm H,1/2)) \subset R_{//}(\bm H)$. The rate pair corresponding to the rectangle $Rect(L_1, L_2,D(\bm H,1/2))$  is $(R_1,R_2)$ and therefore $(R_1,R_2) \in R_{\textit{ZF}}(\bm H , P_0/\sigma, 1/2)$.\end{proof}

\section{Proof of Theorem ~\ref{the:uniquebnd}}\label{ap:uniqbnd}
\begin{proof}
The proof of Theorem~\ref{the:uniquebnd} is as follows. 
Let $(a, \alpha a)$ be any arbitrary rate pair of the 
form $(r, \alpha r)$ lying strictly inside the rate 
region $R_{\textit{ZF}}(\bm H, P_0/\sigma, \xi)$ and  which does not 
lie on the boundary $R_{\textit{ZF}}^{\textit{Bd}}(\bm H, P_0/\sigma, \xi)$ (see the point $P$ in Fig.~\ref{fig:Lemma_2}).

 We then show that there exists the unique rate pair 
 $(a^{\star},\alpha a^{\star})$ which lies on the boundary 
 $R_{\textit{ZF}}^{\textit{Bd}}(\bm H, P_0/\sigma, \xi)$ such 
 that $a^{\star} > a$. This shows that  the rate pair 
 $(R^{\alpha}_{\max}(\xi), \alpha R^{\alpha}_{\max}(\xi)) = 
 (a^\star, \alpha a^\star) $ is the unique rate pair of 
 the form $(r, \alpha r)$ which lies on the boundary.\par

Since, the rate pair $(a,\alpha a) \in R_{\textit{ZF}}(\bm H, 
P_0/\sigma, \xi)$, from (\ref{eq:rate_region}) such a pair $(a,\alpha a)$ will correspond to some rectangle $Rect(y, z > 0, D(\bm H, \xi))$ inside the parallelogram $R_{//}(\bm H)$, where $0 \leq y < L_1^{\max}(\xi)$ such that \begin{equation} \label{eq:lemma3_proof_a_value}
 a = C(y, P_0/\sigma),~~ \text{and} \end{equation} 
 \begin{equation}\label{eq:defalphaa}
   \alpha a = C(z, P_0/\sigma).
   \end{equation}  
 From (\ref{eq:rboud2}), it follows that for the $y$ given in (\ref{eq:lemma3_proof_a_value}), (i.e., for rate of User~1  given in (\ref{eq:lemma3_proof_a_value})) the largest possible rate to the second user is given by 
\begin{equation}\label{eq:defa_1}
 a_1 = C(L_2^\xi(y), P_0/\sigma).\end{equation} 
 From (\ref{eq:bd_rate_region}), it follows that the rate pair $(a,a_1) =  \left(C(y, P_0/\sigma), C(L_2^\xi(y), P_0/\sigma)\right)$ lies on the boundary $R_{\textit{ZF}}^{\textit{Bd}}(\bm H, P_0/\sigma, \xi)$. 
 Since, $a_1$ is the largest possible rate of User~2 when rate of  User~1 is equal to $a$ (see the point $Q$ in Fig.~\ref{fig:Lemma_2}).
 it follows that  \begin{align} \label{eq:lemma3_2} a_1 &> \alpha a \nonumber \\
  C(L_2^\xi(y),P_0/\sigma) &\overset{(a)}{>} C(z,P_0/\sigma), ~\text{i.e.}
   ~~L_2^\xi(y) \overset{(b)}{>} z,
 \end{align}  
 where step (a) follows from (\ref{eq:defa_1}) and (\ref{eq:defalphaa}).
   Step (b) follows from the fact that $C(x, P_0/\sigma)$ is a monotonically increasing function of its first arguments. From the above equation it follows that 
  $z$ lies in the range of the function $L_2^\xi(x)$. It also follows from the continuity and monotonicity of $L_2^\xi(x)$  that there will exist a unique $0 \leq t\leq L_1^{\max}(\xi)$ such that
  \begin{equation}\label{eq:defzt}
  z = L_2^\xi(t)
  \end{equation}
  From the last two equations it follows that
 \begin{equation}
  L_2^\xi(t)
   <  L_2^\xi(y),
  \end{equation} and hence since $L_2^\xi(x)$ is monotonically deceasing
  we have
  \begin{align}\label{eq:123} 
  y &< t \nonumber \\  \
     C(y, P_0/\sigma)&\overset{(a)}{<} C(t, P_0/\sigma),~~\text{i.e.}~~
   a \overset{(b)}{<}  C(t, P_0/\sigma), 
   \end{align}
   where step (a) follows from Result~2 and step (b) follows from (\ref{eq:lemma3_proof_a_value}). We have shown the rate pair $(C(t, P_0/\sigma), \alpha a)$ by the point $R$ in Fig.~\ref{fig:lemma2}.\par 
 \begin{figure}[t]
                                      \centering
                                      \includegraphics[width=6cm,height=6cm]{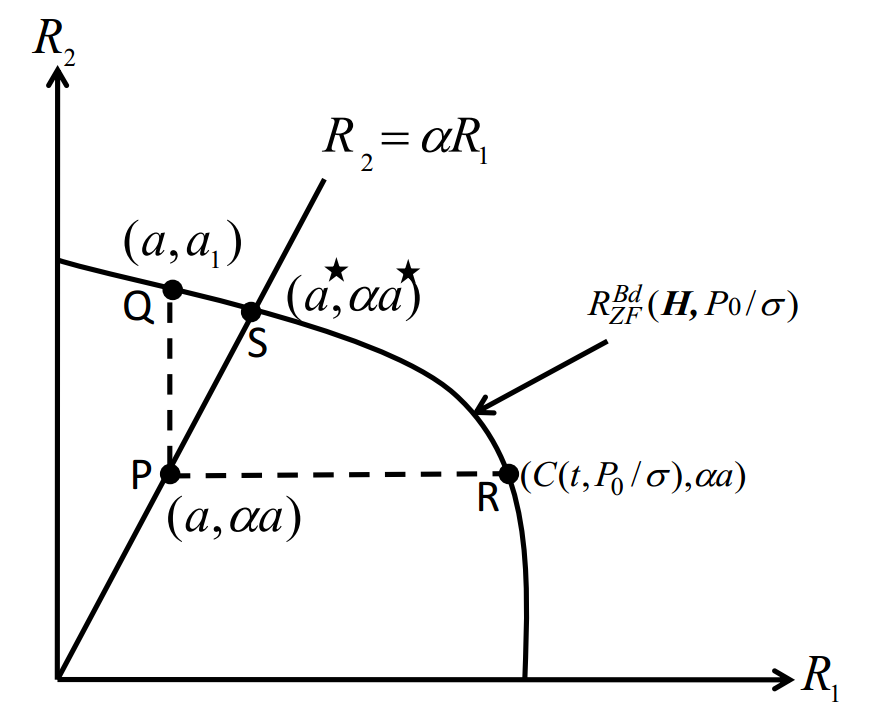}
                                      \caption{A typical proposed rate region boundary. }
                                      \label{fig:lemma2}
                                      \end{figure}   
                                      
 We now define the function \begin{equation} \label{eq:lemma_3_def_f}f(x) \triangleq \alpha 
 C(x, P_0/\sigma) - C(L_2^\xi(x),P_0/\sigma), \end{equation}
 where $x \in [0 , L_1^{\max}(\xi)]$.
 With increasing $x$, $C(x, P_0/\sigma)$ increases  (see Result~2) 
 and
 $C(L_2^\xi(x),P_0/\sigma)$ decreases (as $L_2^\xi(x)$ is a monotonically
 decreasing function of $x$, see Lemma~\ref{lem:l2_decrease}). Hence $f(x)$ is a
 monotonically increasing function of $x$. Further,
 since $C(x, P_0/\sigma)$ and $L_2^\xi(x)$ are continuous functions, 
 it follows
 that $f(x)$ is also continuous.
 It is clear that \small\begin{align} \label{eq:f_y_less_zero}
 f(y) &\overset{(a)}{=} \alpha C(y , P_0/\sigma) - C(L_2^\xi(y) , P_0/\sigma)
       \overset{(b)}{=}  \alpha a  - a_1 \overset{(c)}{<}  0,
 \end{align}\normalsize
 where step (a) follows from (\ref{eq:lemma_3_def_f}), step (b) follows
 from (\ref{eq:lemma3_proof_a_value}) and (\ref{eq:defa_1}), and step (c) follows
 from (\ref{eq:lemma3_2}). Similarly 
 \begin{align}
 \label{eq:f_y_greater_zero}
 f(t) &\overset{(a)}{=} \alpha C(t , P_0/\sigma) - C(L_2^\xi(t) , P_0/\sigma) \nonumber \\
       &\overset{(b)}{=} \alpha C(t , P_0/\sigma) - C(z , P_0/\sigma) \nonumber \\
      &\overset{(c)}{=} \alpha C(t , P_0/\sigma) - \alpha a \nonumber \\ 
       &\overset{}{=} \alpha (C(t , P_0/\sigma) - a) \overset{(d)}{>}  0  
 \end{align}
 where step (a) follows from (\ref{eq:lemma_3_def_f}), step (b) follows from the fact that $L_2^\xi(t) = z$ (see (\ref{eq:defzt})). Step (c) follows from (\ref{eq:defalphaa}). Step (d) follows from (\ref{eq:123}).
  \par Further, 
 since $f(y) < 0,~y\in[0,L_1^{\max}(\xi)]$ and $f(x), x\in[0, L_1^{\max}(\xi)]$ is monotonically 
 increasing in $x$, it follows that $f(x=0) < 0$. 
 Similarly, $f(x = L_1^{\max}(\xi)) > 0$ since $f(t) > 
 0$ and $L_1^{\max}(\xi) \geq t $. Since, $f(x)$ is a 
 monotonically increasing and continuous in $[0, L_1^{\max}(\xi)]$, and $f(0) < 0$, 
 $f(t) > 0$, it follows that there exists a unique 
 $x^{\star} 
 \in [0, L_1^{\max}(\xi)]$ such that $f(x^{\star}) = 0$ \cite{WalterRudin}. 
 The uniqueness follows from the monotonicity of $f(x)$. That is, 
 from (\ref{eq:lemma_3_def_f})
 we have \begin{equation} \label{eq:def_c_x_star}
 \alpha C(x^{\star}, P_0/\sigma) = C(L_2^\xi(x^{\star}),P_0/\sigma). \end{equation} Let $a^{\star} \Define C(x^{\star}, P_0/\sigma)$ and therefore from (\ref{eq:def_c_x_star}) it follows that $\alpha a^\star = 
 C(L_2^\xi(x^{\star}),P_0/\sigma)).$ From (\ref{eq:bd_rate_region}), 
 it is clear that
 the rate pair $(a^{\star}, \alpha a^{\star})
 \in R_{\textit{ZF}}^{\textit{Bd}}(\bm H, P_0/\sigma,\xi)$. 
 Uniqueness of such a rate pair follows from the uniqueness 
 of $x^{\star}$. 
 Further since $f(y) <  0 = f( x^{\star})$ (see Eq.~\ref{eq:f_y_less_zero}) and $f(x)$ is monotonically increasing, it follows that 
 \begin{equation}\label{eq:xstargy} x^{\star} > y.\end{equation}
 From Result (2) we know that $C(x,P_0/\sigma)$ is monotonically increasing in $x$
 and therefore form (\ref{eq:xstargy}) $a^{\star} = C(x^{\star}, P_0/\sigma) > C(y, P_0/\sigma) = a.$  Therefore we have shown that the unique rate pair  $(a^{\star}, \alpha a^{\star})$  lies
 on the boundary $R_{\textit{ZF}}^{\textit{Bd}}(\bm H,P_0/\sigma,\xi)$ 
 and $a^{\star} > a$ for any arbitrary choice of $a$, where $(a,\alpha a)$ lies strictly inside $R_{\textit{ZF}}(\bm H , P_0/\sigma, \xi)$. As shown in Fig.~\ref{fig:Lemma_2}, the point $(a^{\star}, \alpha a^{\star} )$ lies on the line $R_2 = \alpha R_1$ and also on the boundary $R_{\textit{ZF}}^{\textit{Bd}}(\bm H, P_0/\sigma, \xi)$.
 This therefore completes the proof. \end{proof}                                      

\end{appendices}
\bibliographystyle{ieeetr}
\bibliography{IEEEabrv,./call_02_Dec}
\end{document}